\mathchardef\mhyphen="2D 
\theoremstyle{theorem} 
\newtheorem{theorem}{Theorem}
\newtheorem{corollary}[theorem]{Corollary}
\newtheorem{lemma}[theorem]{Lemma}
\newtheorem{proposition}[theorem]{Proposition}
\newtheorem{claim}[theorem]{Claim}
\theoremstyle{definition} 
\newtheorem{definition}[theorem]{Definition}
\newtheorem{example}[theorem]{Example}
\newcommand{\defeq}{\mathrel{\mathop:}=}
\newcommand{\db}{\mathbf{db}}
\newcommand{\block}{\mathbf{blk}}
\newcommand{\rep}{\mathbf{r}}
\newcommand{\sep}{\mathbf{s}}
\newcommand{\tep}{\mathbf{t}}
\newcommand{\cchase}[3]{\mathsf{chase}({#1},{#2},{#3})}
\newcommand{\Cchase}[5]{\mathsf{chase}^{(\mathrm{#5})}({#1},{#2},{#4},{#3})}
\newcommand{\certainty}[2]{\mathsf{CERTAINTY}({#1},{#2})}
\newcommand{\cqa}[1]{\mathsf{CERTAINTY}({#1})}
\newcommand{\fk}{\mathcal{FK}}
\newcommand{\fkw}{\mathcal{FK}_{\mathsf{weak}}}
\newcommand{\pk}{\mathcal{PK}} 
\newcommand{\formula}[1]{\left({#1}\right)}
\newcommand{\foreignkey}[3]{{#1}[{#2}]\rightarrow{#3}}
\newcommand{\keyeq}{\sim}
\newcommand{\fkequiv}[1]{\stackrel{\mbox{}_{#1}}{\equiv}}
\newcommand{\fkmodels}[1]{\stackrel{\mbox{}_{#1}}{\models}}
\newcommand{\nfkmodels}[1]{\stackrel{\mbox{}_{#1}}{\not\models}}
\newcommand{\sd}{\oplus}
\newcommand{\sub}{\subseteq}
\newcommand{\problem}[1]{{\mathsf{#1}}}
\newcommand{\attacks}[1]{\stackrel{#1}{\rightsquigarrow}}
\newcommand{\nattacks}[1]{\stackrel{#1}{\not\rightsquigarrow}}
\newcommand{\sequencevars}[1]{{\mathbf{vars}}({#1})}
\newcommand{\keyvars}[1]{\mathsf{key}({#1})}
\newcommand{\keyconst}[1]{\mathsf{keyconst}({#1})}
\newcommand{\adom}[1]{\mathsf{adom}({#1})}
\newcommand{\atomvars}[1]{\mathsf{vars}({#1})}
\newcommand{\queryvars}[1]{\mathsf{vars}({#1})}
\newcommand{\queryconst}[1]{\mathsf{const}(#1)}
\newcommand{\tuple}[1]{\langle{#1}\rangle}
\newcommand{\FD}[1]{{\mathcal{K}}({#1})}
\newcommand{\fd}[2]{{#1}\rightarrow{#2}}
\newcommand{\signature}[2]{[{#1},{#2}]}
\newcommand{\reducesTo}[1]{\leq_{m}^{#1}}
\newcommand{\card}[1]{|{#1}|}
\newcommand{\substitute}[3]{{#1}_{[{#2}\rightarrow{#3}]}}
\newcommand{\closer}[1]{\preceq_{#1}}
\newcommand{\closerneq}[1]{\prec_{#1}}
\newcommand{\capcloserneq}[1]{\prec^\cap_{#1}}
\newcommand{\theblock}[2]{{\mathsf{block}}({#1},{#2})}
\newcommand{\filler}{\underline{\hspace{1ex}}}
\newcommand{\fkclosure}[3]{{#2}_{#1}^{#3}}
\newcommand{\lclosure}[1]{{#1}^{\ast}}
\newcommand{\calG}{{\mathcal{G}}}
\newcommand{\calP}{{\mathcal{P}}}
\newcommand{\constgraph}[2]{\mathcal{G}_{#1}({#2})}
\newcommand{\keyclosure}[2]{{#1}^{+,{#2}}}
\newcommand{\restrict}[2]{{#1}{\restriction}_{#2}}
\newcommand{\outgoing}[2]{{#2}[{#1}\rightarrow]}
\newcommand{\incoming}[2]{{#2}[\rightarrow{#1}]}
\newcommand{\att}[1]{\mathsf{#1}}
\newcommand{\sjfbcq}{\mathsf{sjfBCQ}}
\newcommand{\obed}{\mathsf{o}}
\newcommand{\nobed}{\mathsf{d}}
\newcommand{\prop}{\mathsf{str}}
\newcommand{\nprop}{\mathsf{weak}}
\newcommand{\weak}{\stackrel{\nprop}{\rightarrow}}
\newcommand{\strong}{\stackrel{\prop}{\rightarrow}}
\newcommand{\dpd}{\nobed\strong\nobed}
\newcommand{\opo}{\obed\strong\obed}
\newcommand{\dpo}{\nobed\strong\obed}
\newcommand{\opd}{\obed\stackrel{\prop}{\rightarrow}\nobed}
\newcommand{\posclosure}[2]{{#1}_{#2}}
\newcommand{\poscomplement}[2]{{#1}^{\mathsf{co}}_{#2}}
\newcommand{\bival}[2]{\Theta^{#1}_{#2}}
\newenvironment{redtext}{\color{red}}{\ignorespacesafterend}
\newenvironment{bluetext}{\color{blue}}{\ignorespacesafterend}
\newenvironment{olivetext}{\color{olive}}{\ignorespacesafterend}
\newcommand{\jef}[1]{\begin{redtext}Jef: {#1}\enspace\end{redtext}}
\newcommand{\miika}[1]{\begin{bluetext}Miika: {#1}\enspace\end{bluetext}}
\title{A Dichotomy in Consistent Query Answering for Primary Keys and Unary Foreign Keys
}
\author{
  Miika Hannula \\
  University of Helsinki \\
  Helsinki\\
  Finland\\
  \texttt{miika.hannula@helsinki.fi} \\
   \And
  Jef Wijsen \\
  University of Mons \\
  Mons \\
  Belgium\\
  \texttt{jef.wijsen@umons.ac.be} \\
}
\begin{document}
\maketitle

\begin{abstract}
Since 2005, significant progress has been made in the problem of Consistent Query Answering (CQA) with respect to primary keys.
In this problem, the input is a database instance that may violate one or more primary key constraints.
A repair is defined as a maximal subinstance that satisfies all primary keys.
Given a Boolean query~$q$, the question then is whether $q$ holds true in every repair.

So far, theoretical research in this field has not addressed the combination of primary key and foreign key constraints, despite the importance of referential integrity in database systems.
This paper addresses the problem of CQA with respect to both primary keys and foreign keys.
In this setting, it is natural to adopt the notion of symmetric-difference repairs, because foreign keys can be repaired by inserting new tuples.

We consider the case where foreign keys are unary, and queries are conjunctive queries without self-joins.
In this setting, we characterize the boundary between those CQA problems that admit a consistent first-order rewriting, and those that do not.
\end{abstract}

\keywords{consistent query answering \and primary key \and foreign key \and conjunctive query}

\maketitle

\section{Introduction}

Consistent query answering (CQA) was introduced in~\cite{DBLP:conf/pods/ArenasBC99} as a principled semantics for answering queries on inconsistent databases.
A \emph{symmetric-difference repair} (or $\sd$-repair) of a database $\db$ is defined as a consistent database $\rep$ that $\subseteq$-minimizes the symmetric difference with $\db$.
Informally, a $\sd$-repair $\rep$ becomes inconsistent as soon as we insert into it more tuples of $\db$, or delete from it tuples not in $\db$.
Then, given a query $q(\vec{x})$, an answer $\vec{a}$ is called \emph{consistent} if $q(\vec{a})$ holds true in every repair. 
The problem is often studied for Boolean queries $q$, where the question is to determine whether $q$ holds true on every repair of a given input database.

CQA has been studied in depth in case that the only constraints are primary keys, one per relation.
In~\cite{DBLP:conf/pods/Wijsen10}, this problem was coined as $\cqa{q}$,  in which notation it is understood that every relation name in $q$ has a predefined primary key.
More than a decade of research has eventually resulted in the following complexity classification~\cite{DBLP:journals/mst/KoutrisW21}: for every self-join-free Boolean conjunctive query~$q$, the problem $\cqa{q}$ is either in $\FO$, $\L$-complete, or $\coNP$-complete.

Now that this classification has been settled, it is natural to ask what happens if we add foreign key constraints. Indeed, every relational database textbook is likely to introduce very soon the notion of referential integrity, i.e., foreign keys referencing primary keys. 
In view thereof, one may even wonder why referential integrity in CQA has so far received little theoretical research attention.
One plausible explanation is that $\sd$-repairs with respect to primary keys are easy to characterize: every repair has to delete, in every \emph{block}, all tuples but one, where a block is a maximal set of tuples of the same relation that agree on their primary key.
In contrast, $\sd$-repairs with respect to foreign keys can introduce new tuples, as illustrated next.
It will become apparent in later sections that having, as repair primitives, both tuple insertions and tuple deletions considerably complicates the theoretical treatment of CQA.

Consider the database of Fig.~\ref{fig:exampledb}, in which primary keys are underlined.
A tuple $(d,o)$ in the relation $\att{R}$ means that the document with DOI~$d$ was written  by the author with ORCiD~$o$.
The set of foreign keys is $\fk_{0}\defeq\{\foreignkey{\att{R}}{1}{\att{DOCS}}$,
$\foreignkey{\att{R}}{2}{\att{AUTHORS}}\}$.
In this paper, we assume that every foreign key is unary (i.e., consists of a single attribute) and that the referenced primary key is the leftmost attribute in the referenced table.

\begin{figure}[h]\centering
\begin{tabular}{ccc}
\begin{tabular}[t]{c|cc}
$\att{R}$ &
$\underline{\att{doi}}$ & $\underline{\att{orcid}}$\\\cline{2-3}
& d1 & o1\\
& d1 & o2\\
& d1 & o3
\end{tabular}
&
\begin{tabular}[t]{c|cll}
$\att{AUTHORS}$ &
$\underline{\att{orcid}}$ & $\att{first}$ & $\att{last}$\\\cline{2-4}
& o1 & \emph{Jeff} & \emph{Ullman}\\
& o1 & \emph{Jeffrey} & \emph{Ullman}\\
& o2 & \emph{Jonathan} & \emph{Ullman}
\end{tabular}
&
\begin{tabular}[t]{c|clc}
$\att{DOCS}$ 
& $\underline{\att{doi}}$ & $\att{title}$ & $\att{year}$\\\cline{2-4}
& d1 & \emph{Some pairs problems} & 2016
\end{tabular}
\end{tabular}
\caption{Inconsistent database.}\label{fig:exampledb}
\end{figure}

There is one foreign-key violation: the fact $\att{R}$(d1,o3) is \emph{dangling}, because o3 is not an existing ORCiD in the table $\att{AUTHORS}$.
There is also one primary-key violation, because there are two distinct tuples with ORCiD o1 in the table $\att{AUTHORS}$.
This database has an infinite number of $\sd$-repairs.
To repair the primary-key violation, we must delete either tuple with ORCiD~o1 in the table $\att{AUTHORS}$.
To repair the foreign-key violation, we can either delete the fact $\att{R}$(d1,o3), or insert a new fact $\att{AUTHORS}$(o3, \emph{fi}, \emph{la}), where \emph{fi} and \emph{la} can be chosen arbitrarily. 
Consider now the Boolean query:

\begin{center}
Does some paper of \emph{2016} have an author with first name \emph{Jeff}?
\end{center}
There is a repair in which the answer to this query is ``no,'' in which case we also say that ``no'' is the consistent answer.
In our setting, this Boolean query will be denoted by the following set of atoms:
$$
q_{0}=\{
\att{DOCS}(\underline{x}, t, \mbox{\emph{`2016'}}),
\att{R}(\underline{x, y}),
\att{AUTHORS}(\underline{y}, \mbox{\emph{`Jeff'}}, z)
\}.
$$
We note that $q_{0}$ satisfies the foreign keys in $\fk_{0}$ (when distinct variables are treated as distinct constants) and every relation name that occurs in $\fk_{0}$ also occurs in $q_{0}$, in which case we say that \emph{$\fk_{0}$ is about $q_{0}$}.

\emph{Data cleaning}~\cite{DBLP:books/acm/IlyasC19,DBLP:journals/vldb/GeertsMPS20} differs from CQA in that it tries to single out the single best repair before asking any query. 
We view CQA as complementary to data cleaning. 
In the preceding example, it may take some time (and manual effort) to find out what is the correct first name of the author with ORCiD~o1, and how the dangling fact in $\att{R}$ has to be cleaned.
An advantage of CQA is that we can immediately obtain some consistent query answers, which will hold true no matter of which repair will be chosen during the data cleaning process.

For every self-join-free Boolean conjunctive query $q$,
for every set of foreign keys that is about $q$,
we define $\certainty{q}{\fk}$ as the following problem:

\begin{center}
\framebox{
\begin{minipage}{0.75\columnwidth}
\begin{description}
\item[{\sc Problem} $\certainty{q}{\fk}$.]
\item[Input:] A database instance $\db$.
\item[Question:] Is $q$ true in every $\sd$-repair w.r.t.\ foreign keys in $\fk$ and primary keys?
\end{description}
\end{minipage}}
\end{center}
Obviously, if $\fk=\emptyset$, then $\certainty{q}{\fk}$ becomes the well studied problem $\cqa{q}$.

Of special interest is the case where $\certainty{q}{\fk}$ is in the complexity class~$\FO$, which is the class of problems that take a relational database instance as input and can be solved by a relational calculus query (a.k.a.\ \emph{consistent first-order rewriting} in the context of CQA). 
A major contribution of this paper can now be stated.
\begin{theorem}\label{thm:mainfo}
For every self-join-free Boolean conjunctive query~$q$,
for every set of unary foreign keys $\fk$ that is about~$q$,
it can be decided whether or not $\certainty{q}{\fk}$ is in $\FO$.
Furthermore, if $\certainty{q}{\fk}$ is in $\FO$, its consistent first-order rewriting can be effectively constructed.
\end{theorem}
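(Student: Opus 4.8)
The plan is to reduce the question to the already-understood pure-primary-key setting. Recall that for $\fk=\emptyset$ the problem $\certainty{q}{\fk}$ coincides with $\cqa{q}$, whose $\FO$-boundary is governed by the acyclicity of an attack graph on the atoms of $q$ and is therefore decidable with an effectively constructible rewriting~\cite{DBLP:journals/mst/KoutrisW21}. My goal is to show that the general case is controlled by a similar, effectively computable, graph-theoretic condition on $q$ together with the foreign-key graph induced by $\fk$.

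First I would pin down the structure of the $\sd$-repairs. Because the only primitive that fixes a primary-key violation is deletion, every repair still selects at most one tuple from each block; the novelty is that a dangling referencing tuple may instead be ``completed'' by inserting referenced tuples. Since the foreign keys are unary and point to primary keys, these insertions propagate along the foreign-key graph in a chase-like cascade, which I would formalize by a procedure such as $\fchase{\db}{\fk}$. Two features make this tractable. The foreign-key graph lives on the fixed, finite schema of $q$, so reachability along it is a compile-time finite relation, hence trivially $\FO$-definable and of bounded depth; and the inserted tuples carry a forced key value but otherwise fresh values. The central observation I would prove is that, from the viewpoint of an adversary trying to \emph{falsify} $q$, a freshly inserted tuple can be used to satisfy a query atom only through its key position, because the adversary is free to choose the remaining fresh values so as to avoid the query's constants and to break any join. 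Consequently each dangling tuple offers the adversary a binary choice---delete it, or complete it by a cascade whose effect on $q$ is captured by the key positions alone---and this is exactly the kind of local, bounded choice that can be folded into a rewriting or into a reduction.

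With the repair structure in hand, I would define an extended attack graph whose edges combine the usual key-propagation attacks of the pure case with new edges recording how the repair decision at one atom (in particular, the delete-versus-complete choice forced by a foreign key) constrains another atom. I would then prove the characterization: $\certainty{q}{\fk}$ is in $\FO$ if and only if this extended attack graph is acyclic. For the positive direction I would peel off an unattacked atom and rewrite around it, absorbing the bounded foreign-key cascade into the first-order formula using the $\FO$-definability of foreign-key reachability noted above, and recurse; the acyclicity guarantees the peeling terminates and the construction is effective. For the negative direction I would show that a cycle lets one encode an $\L$-hard problem (e.g.\ a reachability/connectivity problem) by a first-order reduction, so that $\FO$-membership would contradict the unconditional separation $\FO\subsetneq\L$.

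The conclusion is then immediate: the extended attack graph is computable from $q$ and $\fk$, its acyclicity is decidable, and the positive direction of the characterization yields the rewriting by construction, which proves Theorem~\ref{thm:mainfo}. I expect the main obstacle to be the tuple insertions and the cascading chase they trigger. Making precise the claim that insertions help the adversary only through key positions---and keeping this true in the presence of shared fresh values across several completed tuples and of interacting foreign keys---is delicate, and it is also what complicates both the definition of the new attack edges and the lower-bound reduction, since the adversary now has genuinely new moves (insert versus delete) that are absent in the primary-key-only analysis.
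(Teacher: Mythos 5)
There is a genuine gap, and it sits exactly where your last paragraph admits things are ``delicate.'' Your pivotal structural claim---that each dangling tuple offers the falsifying adversary only a \emph{local, bounded} binary choice (delete, or complete via a cascade whose effect on $q$ is ``captured by the key positions alone''), foldable into an $\FO$ rewriting---is false in general. The forced key value of an inserted fact is precisely what propagates: for $q=\{N(\underline{x},c,y), O(\underline{y})\}$ with $\fk=\{\foreignkey{N}{3}{O}\}$, choosing the falsifying fact in one $N$-block forces insertion of an $O$-fact whose key value makes a fact in a \emph{different block of the same relation} non-dangling, forcing a choice there, and so on---an unbounded chain of decisions travelling through key positions, which defeats first-order locality even though the attack graph is acyclic (this very problem is $\P$-complete, Proposition~\ref{pro:pc}). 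Note also that your neutralization step (``choose the fresh values so as to break any join'') is vacuous exactly in this hard case, where the referenced atom has no non-key position at all. Whether the chain can be cut depends on fine syntactic data your sketch never isolates: whether the referenced atom is \emph{obedient} (no constants, no repeated variables, no dependency-graph cycles through the relevant positions---the paper's Theorem~\ref{thm:syntactic-obedience}), whether the remaining non-key positions of the referencing atom are disobedient, and whether the referenced variable is Gaifman-connected to a key variable of the referencing atom (Definition~\ref{def:block-int}). Your ``extended attack graph'' is left undefined at precisely this point; moreover the problematic interference is between blocks of a \emph{single} atom (self-join-freeness means there is only one $N$-atom), mediated by the obedient referenced atom, not between two atoms as your edge description suggests, so it would have to be a self-loop whose existence condition is the whole content of block-interference.

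Two further divergences from what a correct proof needs: (i) your lower-bound budget is an $\L$-hard cycle argument, but the block-interference case requires an $\NL$-hardness reduction from reachability, and its correctness hinges on controlling which $\oplus$-repairs exist---delicate because repairs with invented values and superset repairs are $\closer{\db}$-incomparable (cf.\ the paper's Example~3), which the paper handles with a pre-repair/irrelevantly-dangling apparatus; and (ii) the paper's upper bound is not a peel-one-atom rewriting but a composition of $\FO$ reductions that eliminate foreign keys by type ($\weak$, $\opo$, $\dpd$, $\dpo$), each shown to preserve attack-graph acyclicity and absence of block-interference, terminating in $\certainty{q'}{\emptyset}=\cqa{q'}$; the no-block-interference hypothesis is consumed exactly in the removal of type-$\dpo$ keys. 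Your plan has the right overall shape (decidable syntactic criterion, effective rewriting, hardness otherwise), but without the obedience/block-interference analysis the proposed criterion is not pinned down, and the one concrete claim it rests on is refuted by the paper's own running example.
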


We briefly discuss the remaining restrictions, leaving a more detailed discussion to Section~\ref{sec:discussion}.
The requirement that foreign keys be unary (i.e., consist of a single attribute) is met in our example, and is likely to be met in many real life situations where entities are identified by unique identifiers (like DOI, ORCiD\dots). Note that we allow composite primary keys, as in the relation $\att{R}$ in our example, but such composite primary keys cannot be referenced by a foreign key.
Nevertheless, some results in this paper are already proved for foreign keys that need not be unary.

The restriction that the set of foreign keys must be \emph{about the query} needs some care during query writing. For example, the question whether the author with ORCiD~o1 has published some paper in 2016, should be formulated as follows:
$$
q_{1}=\{
\att{DOCS}(\underline{x}, t, \mbox{\emph{`2016'}}),
\att{R}(\underline{x, \mbox{`o1'}}),
\att{AUTHORS}(\underline{\mbox{`o1'}}, u, z)
\}.
$$
The third atom  may look redundant in the latter query. However, $\fk_{0}$ is not about the shorter query $\{
\att{DOCS}(\underline{x}, t, \mbox{\emph{`2016'}})$,
$\att{R}(\underline{x, \mbox{`o1'}})\}$, 
in which $\att{R}(\underline{x, \mbox{`o1'}})$ is dangling with respect to $\foreignkey{R}{2}{\att{AUTHORS}}$.

The remainder of this paper is organized as follows.
Section~\ref{sec:related-work} discusses related work.
Section~\ref{sec:preliminaries} introduces preliminary notions and results from the literature that are used in our work. 
In Section~\ref{sec:bi}, we define a novel notion, called \emph{block-interference}, which plays a central role in a main theorem, given in Section~\ref{sec:mainthm}, which implies Theorem~\ref{thm:mainfo}.
Sections~\ref{sec:lhard} and~\ref{sec:nl} show that $\certainty{q}{\fk}$ is $\L$-hard or $\NL$-hard (and thus not in $\FO$) under some conditions.
Section~\ref{sec:fo} shows that if these conditions are not satisfied,
then $\certainty{q}{\fk}$ is in $\FO$.
In this way, our main theorem will be proved.
A side result in Section~\ref{sec:nl} is the existence of $\NL$-complete and $\P$-complete cases of  $\certainty{q}{\fk}$, which complexity classes did not pop up in earlier studies that were restricted to primary keys. 
We conclude this paper with a discussion in Section~\ref{sec:discussion}. 
All proofs and several helping lemmas have been moved to the appendix.
\section{Related Work}
\label{sec:related-work}

Consistent query answering (CQA) was initiated in a seminal paper
by Arenas, Bertossi, and Chomicki~\cite{DBLP:conf/pods/ArenasBC99}, in which the notions of $\sd$-repairs and consistent query answers were introduced. 
Recent overviews of two decades of research in CQA are \cite{DBLP:conf/pods/Bertossi19,DBLP:journals/sigmod/Wijsen19}.
From the latter overview, it is clear that different classes of constraints have been studied independently in CQA.
The current study is different in that it combines constraints from two classes: primary keys belong to the larger class of equality-generating dependencies (egd), and foreign keys belong to the larger class of tuple-generating dependencies (tgd). 
CQA has also been studied in the context of ontologies formulated in description logics; see~\cite{DBLP:journals/ki/Bienvenu20} for a recent overview.

The term $\cqa{q}$ was coined in 2010~\cite{DBLP:conf/pods/Wijsen10} to refer to CQA for Boolean queries $q$ on databases that violate primary keys, one per relation, which are fixed by $q$'s schema. 
A systematic study of its complexity for self-join-free conjunctive queries had started already in~2005~\cite{DBLP:conf/icdt/FuxmanM05}, and was eventually solved in two journal articles by Koutris and Wijsen~\cite{DBLP:journals/tods/KoutrisW17,DBLP:journals/mst/KoutrisW21}, as follows: for every self-join-free Boolean conjunctive query, $\cqa{q}$ is either in $\FO$, $\L$-complete, or $\coNP$-complete, and it is decidable, given $q$, which case applies.

A few extensions beyond this trichotomy result are known. 
The complexity of $\cqa{q}$ for self-join-free conjunctive queries with negated atoms was studied in~\cite{DBLP:conf/pods/KoutrisW18}. 
For self-join-free conjunctive queries with respect to multiple keys, it remains decidable whether or not $\cqa{q}$ is in $\FO$~\cite{DBLP:conf/pods/KoutrisW20}.
The complexity landscape of $\cqa{q}$ for path queries, a subclass of (not necessarily self-join-free) conjunctive queries, was settled in~\cite{DBLP:conf/pods/KoutrisOW21}.
For unions of conjunctive queries~$q$, Fontaine~\cite{DBLP:journals/tocl/Fontaine15} established interesting relationships between $\cqa{q}$  and Bulatov's dichotomy theorem
for conservative CSP~\cite{DBLP:journals/tocl/Bulatov11}.

The counting variant of $\cqa{q}$, denoted
$\sharp\cqa{q}$, asks to count the number of repairs that satisfy some Boolean query~$q$.
For self-join-free conjunctive queries, $\sharp\cqa{q}$ exhibits a dichotomy between  $\FP$ and $\sharp\P$-complete under polynomial-time Turing reductions~\cite{DBLP:journals/jcss/MaslowskiW13}. This dichotomy also holds for queries with self-joins if primary keys are singletons~\cite{DBLP:conf/icdt/MaslowskiW14}.
Calautti, Console, and Pieris present in~\cite{DBLP:conf/pods/CalauttiCP19} a complexity analysis of these counting problems under weaker reductions, in particular, under many-one logspace reductions.
The same authors have conducted an experimental evaluation of randomized approximation schemes for approximating the percentage of repairs that satisfy a given query~\cite{DBLP:conf/pods/CalauttiCP21}.
Other approaches to making CQA more meaningful and/or tractable include operational repairs~\cite{DBLP:conf/pods/CalauttiLP18} and preferred repairs~\cite{DBLP:journals/amai/StaworkoCM12,DBLP:journals/tcs/KimelfeldLP20}.

It is worthwhile to note that theoretical research in $\cqa{q}$ has stimulated implementations and experiments in prototype systems~\cite{DBLP:conf/vldb/FuxmanFM05,DBLP:conf/sigmod/FuxmanFM05,DBLP:journals/pvldb/KolaitisPT13,DBLP:conf/sat/DixitK19,DBLP:conf/cikm/KhalfiouiJLSW20}.

\section{Preliminaries}\label{sec:preliminaries}
For a positive integer $n$, we write $[n]$ for the set $\{1, \ldots ,n\}$.
We assume denumerable sets of \emph{variables} and \emph{constants}. 
A \emph{term} is a variable or a constant.
Every \emph{relation name} is associated with a signature, which is a pair $\signature{n}{k}$ of positive integers, where $n$ is the \emph{arity} and $k\in[n]$; the set $[k]$ is the \emph{primary key} of~$R$, and each $i\in[k]$ is called a \emph{primary-key position}.

From here on, we assume a fixed \emph{database schema} (i.e., a finite set of relation names with their associated signatures).

\subsection{CQA for Primary Keys}

We summarize notations and results used in CQA for primary keys.
The following definitions are borrowed and adapted from~\cite{DBLP:journals/tods/KoutrisW17}.

If $R$ is a relation name with signature $\signature{n}{k}$,
and $t_{1},\dots,t_{n}$ are terms, then $R(\underline{t_{1},\dots,t_{k}},t_{k+1},\dots,t_{n})$ is an \emph{$R$-atom} (or simply \emph{atom}).
If $F$ is an atom, then $\atomvars{F}$ denotes the set of variables that occur in~$F$, and $\keyvars{F}$ denotes the set of variables that occur in $F$ at some primary-key position.
An atom without variables is called a \emph{fact}.
Two facts $A,B$ are said to be \emph{key-equal}, denoted $A\keyeq B$, if they use the same relation name and agree on all primary-key positions.

A \emph{database (instance)} is a finite set $\db$ of facts.
From here on, $\db$ stands for a database instance.
A \emph{Boolean conjunctive query} is a finite set~$q$ of atoms.
We write $\queryvars{q}$ for the set of variables that occur in~$q$,
and $\queryconst{q}$ for the set of constants that occur in $q$. 
If $x\in\queryvars{q}$ and $c$ is a constant,
then $\substitute{q}{x}{c}$ is the query obtained from $q$ by replacing each occurrence of~$x$ with~$c$; this notation naturally extends to sequences with more than one variable and constant. 
A Boolean conjunctive query is \emph{self-join-free} if it does not contain two atoms with the same relation name.
We write $\sjfbcq$ for the class of all self-join-free Boolean conjunctive queries.

In contexts where a query $q$ in $\sjfbcq$ is understood, whenever we use a relation name~$R$ where an atom is expected, we mean the (unique) $R$-atom of~$q$.

A \emph{valuation} over a set $V$ of variables is a total mapping $\theta$ from~$V$ to the set of constants. A valuation is extended to map every constant to itself.
A valuation naturally extends to atoms and sets of atoms.
A Boolean conjunctive query $q$ is satisfied by $\db$, denoted $\db\models q$, if there is a valuation over $\queryvars{q}$ such that $\theta(q)\subseteq\db$.

A \emph{block} of $\db$ is a maximal subset of key-equal facts.
If $A$ is a fact in $\db$, then $\theblock{A}{\db}$ denotes the block of~$\db$ that contains $A$.
If $A=R(\underline{\vec{a}},\vec{b})$, then $\theblock{A}{\db}$ is also denoted by $R(\underline{\vec{a}},*)$, and a fact in this block is said to be \emph{of the form $R(\underline{\vec{a}},\filler)$.}

A \emph{repair of $\db$ with respect to primary keys} is a maximal subset of $\db$ containing no two distinct key-equal facts.
If $q$ is a Boolean conjunctive query, then $\cqa{q}$ is the problem that, given an input database instance $\db$, asks whether $q$ is satisfied by every repair of $\db$ with respect to primary keys.


Instead of saying that a repair must not contain two distinct key-equal facts, we can say that, for every relation name~$R$ of signature $\signature{n}{k}$, a repair must satisfy the following \emph{primary-key constraint}:
\begin{equation}\label{eq:pkconstraint}
\begin{array}{l}
\forall x_{1}\dotsm\forall x_{n}\forall y_{k+1}\dotsm\forall y_{n}\\
\formula{
\left(
\begin{array}{l}
R(\underline{x_{1},\dots,x_{k}},x_{k+1},\ldots,x_{n})\land \\
R(\underline{x_{1},\dots,x_{k}},y_{k+1},\ldots,y_{n})
\end{array}
\right)
\rightarrow\formula{\bigwedge_{i=k+1}^{n} x_{i}=y_{i}}}.
\end{array}
\end{equation}
In the technical treatment, it will often be convenient to use $\pk$ for the set that contains such a formula for every relation name in the database schema under consideration.

The complexity classification of $\cqa{q}$ uses the notion of \emph{attack graph}~\cite{DBLP:journals/tods/KoutrisW17} recalled next.
For a query $q$ in $\sjfbcq$, we write $\FD{q}$ for the set $\{\fd{\keyvars{F}}{\atomvars{F}}\mid F\in q\}$, which is a set of functional dependencies over $\queryvars{q}$.
For an atom $F\in q$, we define $\keyclosure{F}{q}\defeq\{x\in\atomvars{q}\mid\FD{q\setminus\{F\}}\models\fd{\keyvars{F}}{x}\}$.
Informally, $\keyclosure{F}{q}$ is the set of variables that are functionally dependent on $\keyvars{F}$ via the functional dependencies in $\FD{q\setminus\{F\}}$.
The \emph{attack graph of~$q$} is a directed graph whose vertices are the atoms of $q$; there is a directed edge from $F$ to $G$, called an \emph{attack} and denoted $F\attacks{q}G$, if $F\neq G$ and there exists a sequence of variables $x_{0},x_{1},\ldots,x_{n}$, all belonging to~$\queryvars{q}\setminus\keyclosure{F}{q}$, such that $x_{0}\in\atomvars{F}$, $x_{n}\in\atomvars{G}$,  and every two adjacent variables occur together in some atom of~$q$.
Moreover, $F$ is said to attack every variable in such a sequence. 
The following result obtains.

\begin{theorem}[\cite{DBLP:journals/tods/KoutrisW17}]\label{thm:koutrisw}
Let $q$ be a query in $\sjfbcq$.
If the attack graph of $q$ is acyclic, then the problem $\cqa{q}$ is in $\FO$; 
otherwise $\cqa{q}$ is $\L$-hard.
\end{theorem}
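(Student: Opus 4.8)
The plan is to prove the two implications separately: acyclicity of the attack graph yields a consistent first-order rewriting, while a cycle yields $\L$-hardness. Since I will obtain the latter through \emph{first-order} reductions from an $\L$-complete problem, and since $\FO$ coincides with (uniform) $\mathrm{AC}^{0}\subsetneq\L$, these reductions are meaningful: they place $\cqa{q}$ strictly outside $\FO$ in the cyclic case, so the two implications together deliver the announced dichotomy. I would handle the acyclic case by induction on the number of atoms of~$q$, and the cyclic case by an explicit gadget built from a short attack cycle.

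\textbf{Acyclic case.} The first step is the graph-theoretic observation that an acyclic attack graph has a source, i.e.\ an \emph{unattacked} atom $F=R(\underline{\vec{x}},\vec{y})$ with $G\nattacks{q}F$ for every $G\in q$. The crux is then a preservation lemma: fixing one $R$-block and deleting $F$ produces a strictly smaller query whose attack graph is still acyclic. Intuitively, removing an atom and substituting constants can only delete edges from an attack graph, and unattackedness of $F$ guarantees that no surviving pair of atoms acquires a new mutual attack. Granting this lemma, I assemble the rewriting recursively as $\varphi_q \equiv \exists\vec{x}\,\bigl(\exists\vec{y}\,R(\underline{\vec{x}},\vec{y})\wedge\forall\vec{y}\,(R(\underline{\vec{x}},\vec{y})\rightarrow\psi)\bigr)$, where $\psi$ is the inductively obtained rewriting of $q\setminus\{F\}$, with the variables $\vec{x},\vec{y}$ threaded in from the outer quantifiers. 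The formula guesses a key value whose block is nonempty and demands that \emph{every} fact of that block extend to a consistent satisfaction of the remainder of the query; unattackedness of $F$ is exactly what makes this ``choose one block, quantify over all its facts'' pattern correct.

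\textbf{Cyclic case.} I would first reduce to a clean combinatorial core by proving that a cyclic attack graph always contains a cycle between two atoms, $F\attacks{q}G$ and $G\attacks{q}F$; the prototypical witness is $\{R(\underline{x},y),S(\underline{y},x)\}$, where neither atom's key functionally determines the other atom's key via $\FD{q\setminus\{F\}}$, so that $y\notin\keyclosure{F}{q}$ and $x\notin\keyclosure{G}{q}$. I then design a first-order many-one reduction from a canonical $\L$-complete problem, e.g.\ undirected $s$--$t$ connectivity, into $\cqa{q}$: an input graph is encoded into a database so that the blocks of the two mutually attacking relations confront every repair with a chain of binary choices, and the mutual attack forces $q$ to hold in every repair exactly when the connectivity instance is positive. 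The atoms of $q$ lying outside the cycle are padded with fresh constants so that they are trivially satisfied and do not interfere with the gadget.

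The main obstacle I anticipate is the correctness of the inductive rewriting in the acyclic case. Soundness (if $\varphi_q$ holds then $q$ is certain) is routine; the delicate part is completeness together with the preservation lemma. Proving that unattackedness of $F$ is precisely what licenses the ``guess one block, universally quantify over its facts'' pattern requires showing that any counterexample repair for the reduced query can be lifted to a counterexample repair for $q$ unless nothing attacks $F$. This is where the precise interplay between $\keyclosure{F}{q}$, the attack relation, and the definition of primary-key repairs must be used with care.
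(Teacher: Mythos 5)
Your proposal is correct and follows essentially the proof of the cited source: the paper itself does not prove Theorem~\ref{thm:koutrisw} but imports it from Koutris and Wijsen~\cite{DBLP:journals/tods/KoutrisW17}, whose argument is exactly your two-pronged scheme---an unattacked atom (a source of the acyclic attack graph) with the ``guess a block, universally quantify over its facts'' recursive rewriting in the acyclic case, and the reduction of any attack-graph cycle to a two-atom cycle $F\attacks{q}G\attacks{q}F$ followed by a first-order reduction from an $\L$-complete reachability problem (they use undirected forest accessibility rather than general undirected $s$--$t$ connectivity, which fits the chain-of-binary-choices gadget more directly) in the cyclic case. One caution: your intuition that removal ``can only delete edges'' from the attack graph is false for bare atom removal, since closures are computed over $\FD{q\setminus\{F\}}$ and dropping an atom weakens the functional dependencies, possibly creating attacks; it is only the combined operation your rewriting actually performs---substituting constants for all of $\atomvars{F}$ and then deleting the now-ground atom---that provably preserves acyclicity, which is the preservation lemma Koutris and Wijsen establish.
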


$\FO$ is used for the class of decision problems that take a database instance as input, and that can be solved by a closed first-order formula.

\subsection{Foreign keys}

Let $R$ be a relation name with signature $\signature{n}{k}$,
and $S$ a relation name with signature $\signature{m}{1}$.
Possibly $R=S$. 
A \emph{foreign key} is an expression $\foreignkey{R}{i}{S}$ such that $1\leq i \leq n$.
It is called \emph{weak} if $i\leq k$, and \emph{strong} otherwise.
We say that this foreign key is \emph{outgoing from $R$} and \emph{referencing $S$}.
We say that a fact $R(a_{1},\dots,a_{n})$ of $\db$ is \emph{dangling (in $\db$)} with respect to this foreign key  if 
$\db$ contains no $S$-fact $S(\underline{b_{1}},b_{2},\dots,b_{n})$ such that $a_{i}=b_{1}$.
A fact of $\db$ is dangling with respect to a set of foreign keys if it is dangling with respect to some foreign key of the set.
A set of foreign keys is \emph{satisfied} by $\db$ if $\db$ contains no dangling facts. 
Remark that foreign keys are unary by definition.

We write $\lclosure{\fk}$ for the set that contains every foreign key that is logically implied by $\fk$ (and that only uses relation names of the database schema under consideration), where logical implication has its standard definition.

The following notion of \emph{dependency graph} is borrowed and adapted from~\cite[Def.~3.7]{DBLP:journals/tcs/FaginKMP05}, where it was defined for general tgds.
The \emph{dependency graph of a set $\fk$} of foreign keys is a directed graph.
There is a vertex $(R,i)$ whenever $R$ is a relation name that occurs in $\fk$, say with signature~$\signature{n}{k}$, and $i\in[n]$.
Such a pair $(R,i)$ will be called a \emph{position}. 
More specifically, we say that $(R,i)$ is a \emph{primary-key position} if $i\in [k]$, and otherwise a \emph{non-primary-key position}.
Each foreign key
$\foreignkey{R}{i}{S}$ in $\fk$, where $S$ has signature $[m,1]$, induces a directed edge from $(R,i)$ to $(S,j)$, for every $j\in [m]$.
An edge from $(R,i)$ to $(S,j)$ is called \emph{special} if $j\neq 1$.
For a set of positions $P$, we define the \emph{closure $\posclosure{P}{\fk}$ of $P$ under $\fk$} as the set of all positions $(R,i)$ such that there is a path (possibly of length $0$) from some position in $P$ to $(R,i)$ in the dependency graph of $\fk$.
The \emph{complement of $\posclosure{P}{\fk}$} (with respect to all positions of the database schema under consideration), denoted $\poscomplement{P}{\fk}$, is the set of positions $(R,i)\notin \posclosure{P}{\fk}$. Note that if a relation name $R$ of arity~$n$ occurs in a query but not in $\fk$,  then $\poscomplement{P}{\fk}$ includes $\{(R,i)|i\in[n]\}$, even though the positions in the latter set are not vertices of the dependency graph.

\begin{example}
Let $\fk=\{\foreignkey{R}{1}{S}$, $\foreignkey{R}{3}{T}\}$, where $R$ has signature~$\signature{3}{2}$, and $S$ and $T$ both have signature $\signature{2}{1}$.
The foreign key $\foreignkey{R}{1}{S}$ is weak, and $\foreignkey{R}{3}{T}$ is strong.
The dependency graph of $\fk$ contains directed edges from $(R,1)$ to every position in $\{(S,1), (S,2)\}$, and directed edges from $(R,3)$ to every position in $\{(T,1), (T,2)\}$.
The edges ending in $(S,2)$ or $(T,2)$ are special.
\qed
\end{example}

The following definition of query containment under foreign keys is borrowed and adapted from~\cite{DBLP:journals/jcss/JohnsonK84}, where it was studied for general inclusion dependencies.
For Boolean queries, containment boils down to logical entailment.
Let $\fk$ be a set of foreign keys, and let $q$ and $q'$ be two Boolean queries. We say that \emph{$q$ entails $q'$ under $\fk$}, written $q \fkmodels{\fk} q'$, if for every database instance $\db$ that satisfies $\fk$, if $\db\models q$, then $\db\models q'$. 
We say that $q$ and $q'$ are
 \emph{equivalent under $\fk$}, written $q\fkequiv{\fk} q'$, if $q \fkmodels{\fk} q'$ and $q' \fkmodels{\fk} q$.
For example, if $R$ and $S$ have arity~$1$ and $\fk=\{\foreignkey{R}{1}{S}\}$, then $\{R(\underline{x})\}\fkequiv{\fk}\{R(\underline{x}), S(\underline{x})\}$.

Finally, we will restrict the sets $\fk$ of foreign keys that will be allowed for a query~$q$ in $\sjfbcq$.
We say that $\fk$ is \emph{about $q$} if every foreign key in~$\fk$ is satisfied by~$q$ (when distinct variables are treated as distinct constants) and, moreover, every relation name that occurs in $\fk$ also occurs in $q$. 

\subsection{CQA for Primary and Foreign Keys}

Symmetric-difference repairs were defined in~\cite{DBLP:conf/pods/ArenasBC99} as follows, for any set of integrity constraints.

We write $\sd$ for symmetric set difference.
Let $\db$ be a database instance.
Whenever $\rep$, $\sep$ are database instances,
we write $\rep\closer{\db}\sep$ if $\db\sd\rep\subseteq\db\sd\sep$.
If $\rep\closer{\db}\sep$, we also say that $\rep$ is \emph{$\sd$-closer} to $\db$ than~$\sep$.
It can be easily verified that $\closer{\db}$ is a partial order on the set of all database instances.
We write $\rep\closerneq{\db}\sep$ if $\rep\closer{\db}\sep$ and $\rep\neq\sep$.

Let $\fk$ be a set of foreign keys.
A \emph{$\sd$-repair} of $\db$ with respect to $\fk\cup\pk$\footnote{Recall that $\pk$ is the set of primary-key constraints, of the form~\eqref{eq:pkconstraint}, that can be derived from the relation names in $\db$.}(or \emph{repair} for short)   is a database instance $\rep$ such that:
\emph{(i)}~$\rep$ satisfies $\fk\cup\pk$, and
\emph{(ii)}~there is no database instance $\sep$ such that $\sep\closerneq{\db}\rep$ and $\sep$ satisfies $\fk\cup\pk$.
A \emph{subset-repair} is a $\sd$-repair $\rep$ satisfying $\rep\subseteq\db$, and a \emph{superset-repair} is a $\sd$-repair $\rep$ satisfying $\db\subseteq\rep$.

The next example shows that $\oplus$-repairs can be less intuitive and more diverse than subset-repairs or superset-repairs alone.

\begin{example}
Let $q=\{R(\underline{x},y), S(\underline{y},z), T(\underline{z})\}$
and $\fk=\{\foreignkey{R}{2}{S}$, $\foreignkey{S}{2}{T}\}$.
Let $\db=\{R(\underline{a},b), S(\underline{b},c)\}$.
Then the following are three $\sd$-repairs:
\begin{align*}
\rep_{1} & = \{\},\\
\rep_{2} & = \{R(\underline{a},b), S(\underline{b},1), T(\underline{1})\},\\
\rep_{3} & = \{R(\underline{a},b), S(\underline{b},c), T(\underline{c})\}.
\end{align*}
$\rep_{1}$ is a subset-repair, and $\rep_{3}$ a superset-repair.
It may be counter-intuitive that $\rep_{3}$ is not strictly $\sd$-closer to $\db$ than $\rep_{2}$.
Note however:
\begin{align*}
\db\sd\rep_{2} & = \{S(\underline{b},c), S(\underline{b},1), T(\underline{1})\},\\
\db\sd\rep_{3} & = \{T(\underline{c})\}.
\end{align*}
Since the latter two sets are not comparable by~$\subseteq$,
we have that $\rep_{2}$ and $\rep_{3}$ are not comparable by $\closer{\db}$. 
\qed
\end{example}

\tikzstyle{dbfact}=[->,shorten >=1pt,>=stealth,thick]
\tikzstyle{inventedfact}=[->,shorten >=1pt,>=stealth,thick, dashed]
\tikzstyle{const}=[circle,draw=black!50,fill=blue!10,thick,inner sep=0pt,minimum size=6mm]

Let $q$ be a query in $\sjfbcq$, and $\fk$ a set of foreign keys about $q$.
We write $\certainty{q}{\fk}$ for the decision problem that takes as input a database instance and asks whether $q$ is true in every $\sd$-repair with respect to $\fk\cup\pk$.
 
The following is relative to a fixed problem $\certainty{q}{\fk}$.
A \emph{consistent first-order rewriting} is a closed first-order formula~$\varphi$ such that a database instance is a ``yes''-instance of the problem $\certainty{q}{\fk}$ if and only if it satisfies $\varphi$. 
Clearly, the existence of a consistent first-order rewriting coincides with the problem being in the complexity class~$\FO$.

\section{Block-Interference}\label{sec:bi}

\emph{Block-interference} is a novel notion that plays a significant role in the complexity classification of $\certainty{q}{\fk}$.
Its definition is technical, but the following example should be helpful to convey the intuition.  

Let $q=\{N(\underline{x},c,y), O(\underline{y})\}$ with $\fk=\{\foreignkey{N}{3}{O}\}$, where $c$ is a constant.
Consider the following database instance, where the value $\Box$ in the last $N$-fact is yet unspecified.
$$
\db=
\begin{array}{ll}
\begin{array}[t]{c|ccc}
N & \underline{x} & c & y\\\cline{2-4}
  & b_{1} & c & 1\\
  & b_{1} & d & 2\\\cdashline{2-4}
  & b_{2} & c & 2\\
  & b_{2} & d & 3\\\cdashline{2-4}
  & b_{3} & c & 3\\
  & b_{3} & d & 4\\\cdashline{2-4}
  & \vdots & \vdots & \vdots\\\cdashline{2-4}
  & b_{n} & c & n\\
  & b_{n} & d & n+1\\\cdashline{2-4}
  & b_{n+1} & \Box & n+1\\\cdashline{2-4}
\end{array}
&
\begin{array}[t]{c|c}
O & \underline{y}\bigstrut\\\cline{2-2}
  & 1\\\cdashline{2-2}
\end{array}
\end{array}
$$
Note that all $N$-facts, except the first one, are dangling.
Our goal is to construct a $\sd$-repair $\rep$ that falsifies $q$.
Such a $\sd$-repair must obviously choose $N(\underline{b_{1}},d,2)$ in the first $N$-block, which implies that $O(\underline{2})$ must be inserted.
But then $N(\underline{b_{2}},c,2)$ is no longer dangling, and, as a consequence, $\rep$ must contain an $N$-fact from the second $N$-block.
In order to falsify~$q$, $\rep$ must choose $N(\underline{b_{2}},d,3)$ in the second block, which implies that $O(\underline{3})$ must be inserted.
By repeating the same reasoning, $\rep$ must be as follows:
$$
\rep=
\begin{array}{ll}
\begin{array}[t]{c|ccc}
N & \underline{x} & c & y\\\cline{2-4}
  & b_{1} & d & 2\\
  & b_{2} & d & 3\\
  & \vdots & \vdots & \vdots\\
  & b_{n} & d & n+1\\
  & b_{n+1} & \Box & n+1
\end{array}
&
\begin{array}[t]{c|c}
O & \underline{y}\bigstrut\\\cline{2-2}
  & 1\\
  & 2\\
  & 3\\
  & \vdots\\
  & n+1
\end{array}
\end{array}
$$
This is a falsifying $\sd$-repair if (and only if) $\Box\neq c$.
It is now correct to conclude that $\db$ is a ``yes''-instance of $\certainty{q}{\fk}$ if and only if $\Box=c$.
Note that for $\db'\defeq\db\setminus\{O(\underline{1})\}$, we have that the empty database instance is a $\sd$-repair of $\db'$, and hence $\db'$  is a ``no''-instance of $\certainty{q}{\fk}$.

Informally, in deciding whether or not $\db$ is a ``yes''-instance of $\certainty{q}{\fk}$, we had to start from the first $N$-block, then repeatedly move to the next $N$-block, and finally inspect the value of~$\Box$ in the last $N$-block.
It is now unsurprising that $\certainty{q}{\fk}$ is not in $\FO$ (as formally proved in Section~\ref{sec:nl}), because our movement from block to block goes well beyond the locality of first-order logic~\cite[Chapter~4]{DBLP:books/sp/Libkin04}. 
The notion of \emph{block-interference} will capture what is going on in this example.
Two more things are to notice:
\begin{itemize}
\item
The occurrence of the constant $c$ in $N(\underline{x},c,y)$ is important in the above example, because it is used to distinguish, within each $N$-block, between satisfying and falsifying $N$-facts.
Instead of a constant, we could have used two occurrences of a same variable, for example, $N(\underline{x},y,y)$ (and adapt $\db$ accordingly).
On the other hand, block-interference disappears if we replace $N(\underline{x},c,y)$ with $N(\underline{x},z,y)$ in $q$, where $z$ is a fresh variable occurring only once.
\item
Block-interference will also disappear if we replace $O(\underline{y})$ with $O(\underline{y},c)$ or $O(\underline{y},y)$ in the above example, because then the $O$-facts in $\rep\setminus\db$ can take the form $O(\underline{i},\bot)$ for some fresh constant~$\bot$ which cannot be used for making the query true.
On the other hand, if we replace $O(\underline{y})$ with $O(\underline{y},w)$ in $q$, where $w$ is a fresh variable occurring only once, then block-interference will remain.
\end{itemize}

We now proceed with formalizing block-interference in a number of steps. First, we introduce a concept called \emph{obedience} which, as we will see, plays a central role in block-interference.

\begin{definition}[Obedience]\label{def:obedience}
Let $q$ be a query in $\sjfbcq$, and $\fk$ a set of foreign keys about $q$. Let $R$ be a relation name of signature $\signature{n}{k}$, and let $P\subseteq\{(R,i)\mid i\in \{k+1, \ldots,n\}\}$ be a set of positions. Define $\fkclosure{P}{q}{\fk}$ as the smallest subset of $q$ such that
if the closure $\posclosure{P}{\fk}$ contains a position 
 $(S,j)$, then $\fkclosure{P}{q}{\fk}$ contains the $S$-atom of $q$.
We also  write $\fkclosure{R}{q}{\fk}$ as a shorthand for  $\fkclosure{P_R}{q}{\fk}$, where $P_R \defeq \{(R,i) \mid i\in \{k+1, \ldots, n\}\}$.


Let the $R$-atom of~$q$ be $F=R(\underline{\vec{s}},t_{k+1},\dots,t_{n})$,
and define $F_{P}\defeq R(\underline{\vec{s}},u_{k+1},\dots,u_{n})$ where for every $i\in\{k+1,\dots,n\}$, $u_{i}=t_{i}$ if $(R,i)\not\in P$, and $u_{i}$ is a fresh variable otherwise.
We say that the set~$P$ of positions is \emph{obedient (over $\fk$ and $q$)} if 
\begin{equation}\label{eq:obedience}
\formula{q \setminus \fkclosure{P}{q}{\fk}} \cup \{F_{P}\} \fkmodels{\fk} q, 
\end{equation}
where it is to be noted that the logical entailment in the other direction holds vacuously true (and therefore we also get $\fkequiv{\fk}$-equivalence).
Furthermore, we say that atom $F$ is \emph{obedient (over $\fk$ and $q$)} if the set of positions $\{(R,i)\mid i \in \{k+1, \ldots ,n\}\}$ is obedient (over $\fk$ and $q$). If $\fk$ and $q$ are clear from the context, we may simply say that a set of positions or an atom is obedient. 
A set of positions (or an atom) is called \emph{disobedient} if it is not obedient.
\end{definition}
 
\begin{example}\label{ex:no}
Consider again $q=\{N(\underline{x},c,y), O(\underline{y})\}$ with $\fk=\{\foreignkey{N}{3}{O}\}$.
We first argue that $P_{0}\defeq\{(N,2)\}$ is not obedient.
We have $\fkclosure{P_{0}}{q}{\fk}=\{N(\underline{x},c,y)\}$, because the dependency graph has an empty path from $(N,2)$ to itself, and no path from $(N,2)$ to $(O,1)$.
The left-hand expression in~\eqref{eq:obedience} then becomes $\{N(\underline{x},u_{2},y), O(\underline{y})\}$, which is not $\fkequiv{\fk}$-equivalent to $q$.

We next argue that $P_{1}\defeq\{(N,3)\}$ is obedient.
We have $\fkclosure{P_{1}}{q}{\fk}=q$, because the dependency graph has an empty path from $(N,3)$ to itself, and an edge from $(N,3)$ to $(O,1)$.
The left-hand expression in~\eqref{eq:obedience} becomes $\{N(\underline{x},c,u_{3})\}$. We have $\{N(\underline{x},c,u_{3})\}\fkequiv{\fk}\{N(\underline{x},c,u_{3})$, $O(\underline{u_{3}})\}$, and the latter query is obviously $\fkequiv{\fk}$-equivalent to~$q$. 

Note finally that the atom $O(\underline{y})$ is obviously obedient, because it has no non-primary-key positions.  
\qed
\end{example}

The concept of obedience can also be given a purely syntactic description, which will be useful in the technical treatment. The proof of the following theorem is given in Appendix~\ref{sect:apA}.
 
\begin{theorem}[Syntactic obedience]\label{thm:syntactic-obedience}
Let $q$ be a query in $\sjfbcq$, and $\fk$ a set of unary foreign keys about $q$. Let $P\subseteq\{(R,i)\mid i\in \{k+1, \dots, n\}\}$ for some relation name $R$ of signature $\signature{n}{k}$. Then, $P$ is obedient if and only if  all the following conditions hold true on the dependency graph of $\fk$:
\begin{enumerate}[label=(\Roman*)]
\item\label{it:strongweak} no position of $P$ belongs to a cycle;
\item\label{it:no-const}  
no constant occurs in~$q$ at a position of $\posclosure{P}{\fk}$;
\item\label{it:disjoint} 
no variable occurs in $q$ both at a position of $\posclosure{P}{\fk}$ and a position of $\poscomplement{P}{\fk}$; and
\item\label{it:distinct} 
no variable occurs in~$q$ at two distinct non-primary-key positions of $\posclosure{P}{\fk}$.
\end{enumerate}
\end{theorem}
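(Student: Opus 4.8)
The plan is to recast obedience as a chase-based query-containment question, exploiting the characterization of containment under inclusion dependencies from~\cite{DBLP:journals/jcss/JohnsonK84}. Write $\calD$ for the frozen canonical database of the premise query $(q\setminus\fkclosure{P}{q}{\fk})\cup\{F_{P}\}$, in which every variable (the original ones and the fresh $u_{i}$) is treated as a distinct constant while the genuine constants of $q$ stay fixed, and set $\calC\defeq\fchase{\calD}{\fk}$. Since $\calC$ is a universal model of $\calD\cup\fk$ and conjunctive queries are preserved under homomorphisms, $(q\setminus\fkclosure{P}{q}{\fk})\cup\{F_{P}\}\fkmodels{\fk}q$ holds iff $\calC\models q$; as the reverse entailment is free, \emph{$P$ is obedient iff $q$ maps homomorphically into $\calC$ by a homomorphism that fixes the constants of $q$}. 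The entire argument then reduces to analysing which atoms of $q$ can be matched in $\calC$.

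The first structural fact I would establish is that every FK edge into a position of a relation $S$ enters \emph{all} positions of $S$ at once; consequently, for every \emph{affected} relation $S\neq R$ (i.e., one whose atom lies in $\fkclosure{P}{q}{\fk}$) \emph{all} of its positions belong to $\posclosure{P}{\fk}$, and the only affected atom that may have positions both inside and outside $\posclosure{P}{\fk}$ is $F$ itself. Under condition (I) the positions of $R$ inside $\posclosure{P}{\fk}$ are exactly $P$ (re-entering $R$ would reach the seed positions again and create a cycle through $P$), so $F_{P}$ keeps the key terms and the non-$P$ terms of $F$ frozen and only replaces the $P$-terms by fresh nulls. I would then prove the key chase invariant: in $\calC$, every \emph{non-primary-key} position of $\posclosure{P}{\fk}$ carries a pairwise-\emph{distinct} fresh null, every \emph{primary-key} position of $\posclosure{P}{\fk}$ carries a value propagated along the dependency graph, and—using that $\fk$ is \emph{about} $q$ together with (II) and (III)—no complement atom forces a ``repair'' fact placing a query constant or a complement variable at a position of $\posclosure{P}{\fk}$. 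Condition (I) makes the forward propagation from $P$ acyclic, so the relevant fragment of $\calC$ is finite even when $\fk$ is cyclic elsewhere.

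For sufficiency I would build the witnessing homomorphism $h$ explicitly. Variables occurring only at complement positions (by (III) these have no other occurrences) and the constants of $q$ are sent to their frozen selves; this embeds every non-affected atom and, via $F\mapsto F_{P}$, the atom $F$, where (II) and (IV) guarantee that the $P$-terms of $F$ are pairwise distinct variables that may be sent to the distinct fresh nulls $u_{i}$. Each remaining variable $z$ occurs only at positions of $\posclosure{P}{\fk}$; by (IV) it has at most one non-primary-key occurrence there, and by tracing each primary-key occurrence backwards along the FK that generated it—invoking that $q$ satisfies that FK—I would show that all occurrences of $z$ descend from one and the same non-primary-key position, whose fresh null I set as $h(z)$. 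The chase invariant then sends every affected atom onto its generated fact, giving $\calC\models q$.

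For necessity I argue the contrapositive, showing that the failure of any condition blocks a match: a constant at a position of $\posclosure{P}{\fk}$ (failure of (II)) would have to be matched by a fresh or propagated null; a variable shared between $\posclosure{P}{\fk}$ and $\poscomplement{P}{\fk}$ (failure of (III)) would be forced to equal both its frozen value and a chase null; a variable at two distinct non-primary-key positions of $\posclosure{P}{\fk}$ (failure of (IV)) would be forced onto two distinct fresh nulls; and a position of $P$ on a cycle (failure of (I)) turns the forward chase into an infinite acyclic chain whose ``return'' values are all fresh and never reproduce the cyclic pattern $q$ imposes on the affected atoms. I expect the main obstacle to be the chase structural analysis underpinning all four arguments—proving that special positions of $\posclosure{P}{\fk}$ receive pairwise-distinct fresh nulls, that primary-key positions receive exactly the backward-traceable propagated values, and, most delicately, that the interaction between forward propagation from $P$ and the repair facts forced by complement atoms referencing affected relations cannot create spurious matches, which is precisely where (II), (III), and ``$\fk$ is about $q$'' must be combined; keeping the relevant fragment of a possibly infinite $\calC$ finite, via (I) and unarity, is the remaining technical point.
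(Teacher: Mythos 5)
Your plan is a legitimate alternative route: where the paper proves sufficiency by a direct valuation-extension induction over an enumeration $(G_1,\dots,G_m)$ of $\fkclosure{P}{q}{\fk}$ ordered along the foreign keys (working inside an arbitrary model of $q'\cup\fk$ rather than in a canonical one), and proves necessity by hand-building finite counterexample databases, you instead reduce everything to the existence of a homomorphism of $q$ into the chase of the frozen premise query. Your sufficiency direction is sound as sketched (a homomorphism into the universal model yields entailment over all models, finite ones included, and your structural analysis of $\posclosure{P}{\fk}$ — all positions of an affected relation $S\neq R$ enter the closure at once, and under condition (I) the closure meets $R$ exactly in $P$ — matches what the paper establishes).

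The genuine gap is in the necessity direction, and it is exactly the point your plan waves at but does not resolve. The paper's entailment $\fkmodels{\fk}$ quantifies over \emph{finite} database instances, whereas the Johnson--Klug homomorphism-into-chase criterion characterizes \emph{unrestricted} entailment. These coincide only because inclusion dependencies are finitely controllable for conjunctive-query containment — a nontrivial theorem due to Rosati, not contained in Johnson--Klug 1984 — and you never invoke it. The distinction bites precisely where your fresh-null chase $\calC$ is infinite: when condition (I) fails (a position of $P$ on a cycle), and more generally whenever $\posclosure{P}{\fk}$ feeds into a cycle of the dependency graph, ``no homomorphism of $q$ into the infinite $\calC$'' refutes only unrestricted entailment; it does not produce the finite database $\db\models\fk$ with $\db\models q'$ and $\db\not\models q$ that disobedience requires. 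Truncating the chase violates $\fk$ at the frontier, and folding frontier nulls back onto earlier values risks creating exactly the match (e.g.\ a fact $N(a,a)$ for a query atom $N(\underline{x},x)$) that the cyclic pattern of $q$ demands. The paper avoids the issue entirely by never using fresh nulls: its constrained chase $\cchase{q^*}{\fk}{C}$ draws all invented values from the fixed finite set $C\cup\{\bot,\top\}$, with the restriction $b_2=\dots=b_m=a_i$ \emph{except} a deliberate mismatch $b_i\neq a_i$ upon re-entering $R$ (and an analogous $\bot$-placement trick for a failure of (IV)), so every gadget $\db_{A,P}$ of Lemma~\ref{lem:gadget} is finite by construction and provably admits no spurious match. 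To close your proof you must either cite finite controllability of CQ containment under inclusion dependencies, or reproduce this folding device — and the latter, together with the accompanying irrelevance analysis, is the actual technical heart of the paper's proof of Theorem~\ref{thm:syntactic-obedience}, not a residual detail.
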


Theorem~\ref{thm:syntactic-obedience} has the following immediate corollary, which implies that obedience can  be treated as a property of single positions. 
\begin{corollary}\label{cor:obedient-position}
Let $q$, $\fk$, and $P$ be as in the statement of Theorem~\ref{thm:syntactic-obedience}.
Then, $P$ is obedient over $\fk$ and $q$ if and only if $\{(R,i)\}$ is obedient over $\fk$ and $q$ for all $(R,i)\in P$.
\end{corollary}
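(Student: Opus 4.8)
The plan is to reduce the statement entirely to the syntactic characterization of Theorem~\ref{thm:syntactic-obedience}, and then to check that each of its four conditions splits and recombines correctly when $P$ is broken into its singletons. The single structural fact that drives the argument is that position-closure is additive over unions: a position is reachable from $P$ in the dependency graph of $\fk$ if and only if it is reachable from some member of $P$, so $\posclosure{P}{\fk}=\bigcup_{(R,i)\in P}\posclosure{\{(R,i)\}}{\fk}$, and dually $\poscomplement{P}{\fk}=\bigcap_{(R,i)\in P}\poscomplement{\{(R,i)\}}{\fk}$. In particular $\posclosure{\{(R,i)\}}{\fk}\subseteq\posclosure{P}{\fk}$ for each $(R,i)\in P$.

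I would first dispose of everything except the forward direction of condition~\ref{it:disjoint}. Because $\posclosure{\{(R,i)\}}{\fk}\subseteq\posclosure{P}{\fk}$, conditions~\ref{it:strongweak}, \ref{it:no-const}, and~\ref{it:distinct} pass from $P$ to each singleton in one line: a cycle, an offending constant, or a pair of distinct non-primary-key positions sharing a variable inside the smaller closure would already witness a failure inside $\posclosure{P}{\fk}$. For the converse direction it is convenient to rephrase condition~\ref{it:disjoint} as: a set $Q$ satisfies~\ref{it:disjoint} exactly when every variable that occurs at some position of $\posclosure{Q}{\fk}$ has all of its occurrences inside $\posclosure{Q}{\fk}$. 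Under this reading the implication ``all singletons obedient $\Rightarrow$ $P$ obedient'' is immediate for~\ref{it:disjoint} and~\ref{it:distinct}: a variable occurring at a position of $\posclosure{P}{\fk}$ lies in some $\posclosure{\{(R,i)\}}{\fk}$, whence all its occurrences stay inside $\posclosure{\{(R,i)\}}{\fk}\subseteq\posclosure{P}{\fk}$, and if two of those occurrences were at distinct non-primary-key positions they would already violate~\ref{it:distinct} for that one singleton. Conditions~\ref{it:strongweak} and~\ref{it:no-const} for $P$ follow from the singletons by additivity of the closure.

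The genuinely delicate point, which I expect to be the main obstacle, is the forward direction of condition~\ref{it:disjoint}: assuming $P$ obedient, show each singleton $\{(R,i)\}$ satisfies~\ref{it:disjoint}. Here $\poscomplement{\{(R,i)\}}{\fk}$ is strictly larger than $\poscomplement{P}{\fk}$, so~\ref{it:disjoint} for the singleton asks for more than~\ref{it:disjoint} for $P$ gives. I would argue by contradiction. Suppose a variable $v$ occurs at $q_{1}\in\posclosure{\{(R,i)\}}{\fk}$ and also at some $q_{2}\notin\posclosure{\{(R,i)\}}{\fk}$. Condition~\ref{it:disjoint} for $P$ forces every occurrence of $v$ into $\posclosure{P}{\fk}$, so $q_{2}\in\posclosure{\{(R',i')\}}{\fk}$ for another member $(R',i')\in P$. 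The crucial device is that $\fk$ is about $q$: every dependency-graph edge that enters a primary-key position arises from a foreign key whose satisfaction in $q$ forces the term at the edge's source to equal the term at its target. Hence, tracing a witnessing reachability path backwards from $q_{1}$, whenever $v$ sits at a primary-key position it must also sit at the source of the incoming edge, which lies earlier on the path; since the start positions in $P$ are non-primary-key positions, this backtracking reaches a non-primary-key position $w_{1}\in\posclosure{\{(R,i)\}}{\fk}$ carrying $v$, and symmetrically a non-primary-key position $w_{2}\in\posclosure{\{(R',i')\}}{\fk}$ carrying $v$. Finally $w_{1}\neq w_{2}$, for if they coincided in some $w$, then $q_{2}$ would be reachable from $w\in\posclosure{\{(R,i)\}}{\fk}$ and hence would itself lie in $\posclosure{\{(R,i)\}}{\fk}$, contradicting the choice of $q_{2}$. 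Thus $v$ occurs at two distinct non-primary-key positions of $\posclosure{P}{\fk}$, contradicting~\ref{it:distinct} for $P$. Combining the two directions of all four conditions through Theorem~\ref{thm:syntactic-obedience} closes the proof.
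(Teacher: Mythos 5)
Your proof is correct and takes the route the paper intends: the paper states Corollary~\ref{cor:obedient-position} without proof, as an ``immediate'' consequence of Theorem~\ref{thm:syntactic-obedience}, and your reduction to the four syntactic conditions via additivity of the closure, $\posclosure{P}{\fk}=\bigcup_{(R,i)\in P}\posclosure{\{(R,i)\}}{\fk}$, is exactly that consequence spelled out. The one step you rightly flag as non-trivial --- the forward direction of condition~\ref{it:disjoint}, where $\poscomplement{\{(R,i)\}}{\fk}$ is strictly larger than $\poscomplement{P}{\fk}$ --- is handled soundly: because $\fk$ is about the self-join-free query $q$, each non-special edge forces equality of the terms at its source and target, so your backtracking along a witnessing path turns any primary-key occurrence of $v$ into a non-primary-key occurrence inside the relevant singleton closure, and the two resulting witnesses either violate condition~\ref{it:distinct} for $P$ or coincide in a position from which $q_{2}$ is reachable, contradicting the choice of $q_{2}$.
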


Informally, Theorem~\ref{thm:syntactic-obedience} implies that if a set $P$ of positions is obedient, then $\posclosure{P}{\fk}$ is of the form depicted in Fig.~\ref{fig:obgraph}, where arrows represent foreign keys and primary-key positions are boxed (relation names are omitted). 
In particular, the figure shows the absence of cycles, constants, and variables that are repeated within a same atom.

\tikzstyle{smallposition}=[regular polygon,regular polygon sides=4,dotted, draw=black!70,fill=gray!10,thick,inner sep=0.1pt, minimum size=7mm]
\tikzstyle{smallprimposition}=[regular polygon,regular polygon sides=4,draw=black!70,fill=gray!10,thick,inner sep=0.1pt, minimum size=7mm]
\tikzstyle{smallobedientposition}=[regular polygon,regular polygon sides=4,dotted, draw=black!70,fill=green!10,thick,inner sep=0.1pt, minimum size=7mm]
\tikzstyle{smallobedientprimposition}=[regular polygon,regular polygon sides=4,draw=black!70,fill=green!10,thick,inner sep=0.1pt, minimum size=7mm]
\tikzstyle{smallobedienthackposition}=[regular polygon,regular polygon sides=4,dotted,draw=black!70,fill=green!10,thick,inner sep=-1.2pt, minimum size=7mm]
\tikzstyle{smallobedientprimhackposition}=[regular polygon,regular polygon sides=4,,draw=black!70,fill=green!10,thick,inner sep=-1.2pt, minimum size=7mm]
\tikzstyle{smallprimhackposition}=[regular polygon,regular polygon sides=4,draw=black!70,fill=gray!10,thick,inner sep=-.7pt, minimum size=7mm]
\tikzstyle{smalledge}=[->,shorten <=0pt, shorten >=1pt,>=stealth,thin]

\usetikzlibrary{shapes.geometric}
\usetikzlibrary{positioning,decorations.pathreplacing,quotes}
\newcommand{\shiftpoints}{4pt}
\newcommand{\shiftmanypoints}{16pt}

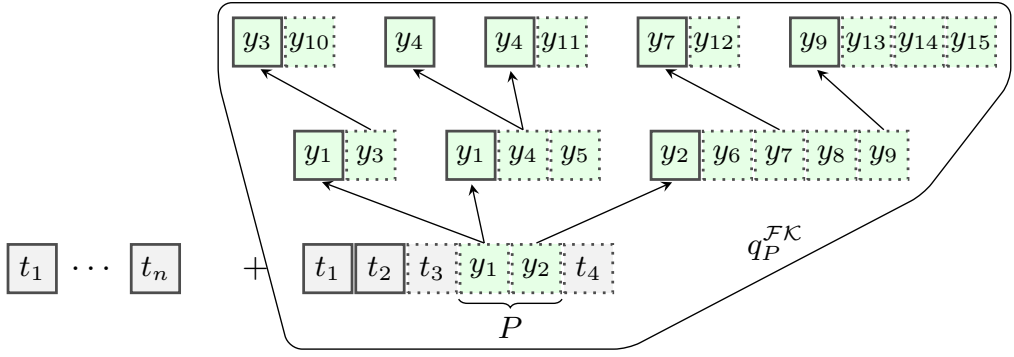
\begin{figure}
\begin{center}
\scalebox{1.3}{
\begin{tikzpicture}[    shifttl/.style={shift={(-\shiftpoints,\shiftpoints)}},
    shifttr/.style={shift={(\shiftpoints,\shiftpoints)}},
    shiftbl/.style={shift={(-\shiftpoints,-\shiftpoints)}},
    shiftbr/.style={shift={(\shiftpoints,-\shiftpoints)}},
    shiftBL/.style={shift={(-\shiftpoints,-\shiftmanypoints)}},  
    shiftBR/.style={shift={(\shiftpoints,-\shiftmanypoints)}},
]
\node (O1) [smallprimposition] {\footnotesize $t_1$};
\node (O2) [right =.1mm of O1, smallprimposition] {\footnotesize $t_2$};
\node (O3) [right =.1mm of O2, smallposition] {\footnotesize ${t_3}$};
\node (O4) [right =.1mm of O3, smallobedientposition] {\footnotesize ${y_1}$};
\node (O5) [right =.1mm of O4, smallobedientposition] {\footnotesize ${y_2}$};
\node (O6) [right =.1mm of O5, smallposition] {\footnotesize ${t_4}$};

\node (P1) [above left =9mm of O3, smallobedientprimposition] {\footnotesize ${y_1}$};
\node (P2) [right =.1mm of P1, smallobedientposition] {\footnotesize ${y_3}$};

\node (Q1) [right = 5mm of P2, smallobedientprimposition] {\footnotesize ${y_1}$};
\node (Q2) [right =.1mm of Q1, smallobedientposition] {\footnotesize ${y_4}$};
\node (Q3) [right =.1mm of Q2, smallobedientposition] {\footnotesize ${y_5}$};

\node (R1) [right = 5mm of Q3, smallobedientprimposition] {\footnotesize ${y_2}$};
\node (R2) [right =.1mm of R1, smallobedientposition] {\footnotesize ${y_6}$};
\node (R3) [right =.1mm of R2, smallobedientposition] {\footnotesize ${y_7}$};
\node (R4) [right =.1mm of R3, smallobedientposition] {\footnotesize ${y_8}$};
\node (R5) [right =.1mm of R4, smallobedientposition] {\footnotesize ${y_9}$};

\node (S1) [above left =9mm of P2, smallobedientprimposition] {\footnotesize ${y_3}$};
\node (S2) [right =.1mm of S1, smallobedienthackposition] {\footnotesize ${y_{10}}$};

\node (T1) [right = 5mm of S2, smallobedientprimposition] {\footnotesize ${y_4}$};

\node (U1) [right = 5mm of T1, smallobedientprimposition] {\footnotesize ${y_4}$};
\node (U2) [right =.1mm of U1, smallobedienthackposition] {\footnotesize ${y_{11}}$};

\node (V1) [right = 5mm of U2, smallobedientprimposition] {\footnotesize ${y_7}$};
\node (U2) [right =.1mm of V1, smallobedienthackposition] {\footnotesize ${y_{12}}$};

\node (W1) [right = 5mm of U2, smallobedientprimposition] {\footnotesize ${y_9}$};
\node (W2) [right =.1mm of W1, smallobedienthackposition] {\footnotesize ${y_{13}}$};
\node (W3) [right =.1mm of W2, smallobedienthackposition] {\footnotesize ${y_{14}}$};
\node (W4) [right =.1mm of W3, smallobedienthackposition] {\footnotesize ${y_{15}}$};

\node (N1) [left =25mm of O1, smallprimposition] {\footnotesize $t_1$};
\node (N2) [right =.1mm of N1] { $\dots$};
\node (N3) [right =.1mm of N2, smallprimhackposition] {\footnotesize $t_n$};

\node (plus) [right =5mm of N3] { $+$};

\path (O4.south west)
        edge[decorate,decoration={brace,mirror,raise=.5mm},"\footnotesize $P$"below=3pt]
        (O4.south west -| O5.south east);

\draw [smalledge] (O4.north) -- (P1.south);
\draw [smalledge] (O4.north) -- (Q1.south);
\draw [smalledge] (O5.north) -- (R1.south);
\draw [smalledge] (P2.north) -- (S1.south);
\draw [smalledge] (Q2.north) -- (T1.south);
\draw [smalledge] (Q2.north) -- (U1.south);
\draw [smalledge] (R3.north) -- (V1.south);
\draw [smalledge] (R5.north) -- (W1.south);

\begin{scope}[transform shape] 
\path [draw,rounded corners]  
               ([shiftBL] O1.south west) 
            -- ([shiftbl] S1.south west) 
            -- ([shifttl] S1.north west) 
            -- ([shifttr] W4.north east)
            -- ([shiftbr] W4.south east)  
            -- ([shiftbr] R5.south east)  
            -- ([shiftBR] O6.south east)  node [above,midway] {\footnotesize $\fkclosure{P}{q}{\fk}$}
            --  cycle; 
\end{scope} 

\end{tikzpicture}
}
\end{center}
\caption{Structure of $\fkclosure{P}{q}{\fk}$ over obedient $P$ (omitting weak foreign keys). Terms $t_1, \dots ,t_n$ occupying $\poscomplement{P}{\fk}$ do not occur among the (pairwise distinct) variables $y_1,\dots ,y_{15}$ occupying $\posclosure{P}{\fk}$.
The polygon encloses $\fkclosure{P}{q}{\fk}$; the green boxes mark  $\posclosure{P}{\fk}$.
\label{fig:obgraph}
}
\end{figure}


We now come to Definition~\ref{def:block-int} of block-interference, which uses the following adapted notion of Gaifman graph~\cite[Def.~4.1]{DBLP:books/sp/Libkin04}.
For a query in $\sjfbcq$ and $V\subseteq\queryvars{q}$, we define $\constgraph{V}{q}$ for the undirected graph whose vertex-set is $V$, and where $\{x,y\}$ is an undirected edge if $x=y$ or there is $F\in q$ such that $\{x,y\} \subseteq \atomvars{F}\cap V$. 


\begin{definition}[Block-interfering]\label{def:block-int}
Let $q$ be a query in $\sjfbcq$,
and $\fk$ a set of foreign keys about $q$. 
Let $\foreignkey{N}{j}{O}$ be a strong foreign key in $\lclosure{\fk}$. 
Let $N(\underline{t_{1}, \dots ,t_k},t_{k+1},\dots,t_{n})$ and  $O(\underline{t_j},\vec{y})$ be atoms in~$q$ (since the foreign key is strong, $j>k$).
Let $V=\{v\in \queryvars{q'}\mid \FD{q}\not\models\fd{\emptyset}{\{v\}}\}$, where $q'\defeq q\setminus\{N(\underline{t_1, \dots ,t_k},t_{k+1},\dots,t_{n})\}$.
We say that this foreign key is \emph{block-interfering (in $q$)} if the following hold: 
\begin{enumerate} 
\item\label{it:obedientO}
the atom $O(\underline{t_j},\vec{y})$ is obedient; 
\item\label{it:no-constant}
$t_j$ is a variable in $V$ (thus $\FD{q}\not\models\fd{\emptyset}{\{t_{j}\}}$); and
\item
at least one of the following holds true:
\begin{enumerate}
\item\label{it:marking}
 $\{(N,k+1), \ldots ,(N,n)\}\setminus\{(N,j)\}$ is not obedient; or 
\item\label{it:bifour}
for some $i\in\{1,\dots,k\}$, $t_i$ and $t_j$ are  (not necessarily distinct) variables that are connected in $\constgraph{V}{q'}$.
\end{enumerate}
\end{enumerate} 
We say that the pair $(q,\fk)$ has \emph{block-interference} if some foreign key of $\lclosure{\fk}$ is block-interfering in $q$.
\qed
\end{definition}

It can be seen that, due to properties~\eqref{it:marking} or \eqref{it:bifour} in Definition~\ref{def:block-int}, the $N$-atom in this definition will itself be disobedient.

\begin{example}\label{ex:four}
Continuing Example~\ref{ex:no}, consider again the query $q=\{N(\underline{x},c,y)$, $O(\underline{y})\}$ with $\fk=\{\foreignkey{N}{3}{O}\}$, where the atom $O(\underline{y})$ is obviously obedient.
Following the notations of Definition~\ref{def:block-int}, we obtain block-interference by letting $j=3$ and therefore $t_{j}=y$. 
The set difference in item~\eqref{it:marking} of Definition~\ref{def:block-int} becomes $\{(N,2)\}$, which is not obedient as shown in Example~\ref{ex:no}.
\qed
\end{example}

The following example shows the use of property~\eqref{it:bifour} in Definition~\ref{def:block-int}.

\begin{example}
Consider $q_{0}=\{N'(\underline{x},y)$, $O(\underline{y})$, $T(\underline{x,y})\}$ and $\fk=\{\foreignkey{N'}{2}{O}\}$.
In comparison with the previous Example~\ref{ex:four}, we removed the constant~$c$ that allowed us to distinguish, within an $N$-block, between satisfying and falsifying $N$-facts.
However, since $x$ and $y$ occur together in the $T$-atom of $q_{0}$, we can now use $T$-facts to make this distinction.
Indeed, in the database $\db$ at the beginning of this section, we can replace every ``satisfying'' fact  $N(\underline{b_{i}},c,i)$ with  two facts $N'(\underline{b_{i}},i)$ and $T(\underline{b_{i},i})$, while every ``falsifying'' fact  $N(\underline{b_{i}},d,i+1)$ is replaced with a single fact $N'(\underline{b_{i}},i+1)$ (for $1\leq i\leq n+1$).
Informally, the role of the constant $c$ is now played by~$T$.

To illustrate the role of the set $V$ in Definition~\ref{def:block-int},
we note that our construction with $T$-facts would fail if for some constant~$a$, the query $q_{0}$ also contained $R(\underline{a},x)$ (yielding a functional dependency $\fd{\emptyset}{\{x\}}$), because no $\sd$-repair can contain both $R(\underline{a},b_{i})$ and $R(\underline{a},b_{j})$ with $i\neq j$.
\qed
\end{example}

\section{Main Theorem}\label{sec:mainthm}
The following theorem refines Theorem~\ref{thm:mainfo} by adding the conditions to decide whether or not $\certainty{q}{\fk}$ is in $\FO$. 
To show that a problem $\certainty{q}{\fk}$ is not in $\FO$, we show that it is $\L$-hard or $\NL$-hard.

\begin{theorem}\label{thm:main}
Let $q$ be a query in $\sjfbcq$, and let $\fk$ be a set of unary foreign keys about $q$.
Then,
\begin{enumerate}
\item\label{it:forew}
if the attack graph of $q$ is acyclic and $(q,\fk)$ has no block-interference, then $\certainty{q}{\fk}$ is in $\FO$ (and its consistent first-order rewriting can be effectively constructed);
\item\label{it:lhard} if the attack graph of $q$ is cyclic, then $\certainty{q}{\fk}$ is $\L$-hard (and therefore not in $\FO$); and
\item\label{it:nlhard} if $(q,\fk)$ has block-interference, then $\certainty{q}{\fk}$ is $\NL$-hard (and therefore not in $\FO$).
\end{enumerate}
Moreover, it can be decided, given $q$ and $\fk$, which case applies.
\end{theorem}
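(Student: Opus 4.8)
The plan is to prove the three implications separately and then observe that the governing conditions are decidable. Throughout I would rely on the two facts already established: Theorem~\ref{thm:koutrisw}, which ties acyclicity of the attack graph to the behaviour of $\cqa{q}$, and the syntactic characterization of obedience in Theorem~\ref{thm:syntactic-obedience} together with Corollary~\ref{cor:obedient-position}. The three conditions are jointly exhaustive: a pair $(q,\fk)$ either has a cyclic attack graph, or has block-interference, or has neither; in the last case item~\ref{it:forew} applies, and in the other two cases the problem is shown not to be in $\FO$.

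For item~\ref{it:lhard} ($\L$-hardness when the attack graph is cyclic), I would reduce from $\cqa{q}$, which is the special case $\fk=\emptyset$ and is $\L$-hard by Theorem~\ref{thm:koutrisw}. Given an input $\db$ for $\cqa{q}$, the reduction pads $\db$ into an instance $\db'$ of $\certainty{q}{\fk}$ that satisfies every foreign key in $\fk$, so that $\db'$ has no dangling facts and hence no $\sd$-repair ever needs to insert a tuple. The delicate point is to add the referenced facts as forced singleton blocks carrying values that can never complete a satisfying valuation of $q$; this keeps intact the block structure that drives the $\cqa{q}$ hardness while ensuring that, as far as satisfaction of $q$ is concerned, the relevant $\sd$-repairs of $\db'$ behave exactly like the subset-repairs of $\db$.

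Item~\ref{it:nlhard} ($\NL$-hardness under block-interference) is where I expect the main difficulty. I would reduce from directed $st$-reachability, the canonical $\NL$-complete problem, generalizing the mechanism of the motivating example in Section~\ref{sec:bi}: a block-interfering foreign key $\foreignkey{N}{j}{O}$ of $\lclosure{\fk}$ yields a gadget in which choosing the falsifying fact of one $N$-block forces insertion of an $O$-fact whose key value activates the next block, so that ``no'' can be forced throughout a repair exactly along a directed path. The edges of the input graph would be encoded by such $N$-blocks. The three defining conditions are exactly what makes the gadget correct: obedience of $O(\underline{t_j},\vec{y})$ lets the inserted referenced facts be propagated freely with fresh values; $t_j\in V$, i.e.\ $\FD{q}\not\models\fd{\emptyset}{\{t_j\}}$, lets distinct blocks carry distinct activating values without being collapsed; and the disjunction in Definition~\ref{def:block-int} supplies the within-block distinction between satisfying and falsifying facts, either through the disobedient remaining positions of $N$ or through an atom connecting $t_i$ and $t_j$ in $\constgraph{V}{q'}$. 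The hard part will be carrying this out uniformly for an arbitrary $(q,\fk)$ and verifying that the remaining atoms of $q$, instantiated on fixed constants, are satisfied precisely outside the gadget, so that the global question reduces to reachability.

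For item~\ref{it:forew} ($\FO$ membership) I would start from the first-order rewriting that Theorem~\ref{thm:koutrisw} guarantees for $\cqa{q}$ under an acyclic attack graph and augment it to account for foreign-key repairs. The point of the absence of block-interference is that it rules out the unbounded block-to-block propagation illustrated in Section~\ref{sec:bi}, so that whether $q$ holds in every $\sd$-repair becomes a local, bounded-depth condition expressible in relational calculus: obedient positions can be filled with fresh values that never help to satisfy $q$, hence can be quantified away locally. Finally, decidability of which case applies follows because every ingredient is computable: acyclicity of the finite attack graph is a direct check, $\lclosure{\fk}$ is finite and effectively computable, obedience is decidable by Theorem~\ref{thm:syntactic-obedience}, entailment of $\fd{\emptyset}{\{t_j\}}$ is decidable, and connectivity in $\constgraph{V}{q'}$ is a finite-graph check; evaluating these predicates determines the case, noting that items~\ref{it:lhard} and~\ref{it:nlhard} may both apply but then only reinforce that $\certainty{q}{\fk}\notin\FO$.
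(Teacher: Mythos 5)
Your overall architecture (three separate implications plus decidability, with items~\ref{it:lhard} and~\ref{it:nlhard} possibly overlapping) matches the paper, and your sketch of item~\ref{it:nlhard} is the paper's actual route: a reduction from directed reachability to the complement, with the three conditions of Definition~\ref{def:block-int} playing exactly the roles you assign them (the paper's execution additionally needs the machinery of pre-repairs and the gadget of Lemma~\ref{lem:gadget}, but your plan points in the right direction). The problem is item~\ref{it:lhard}: your padding reduction from arbitrary instances of $\cqa{q}$ is unsound. It is true that if the padded instance $\db'$ satisfies $\fk$ then every $\sd$-repair of $\db'$ is a subset-repair (a filling-in argument shows any repair with an inserted fact is dominated), so ``no repair needs to insert'' is fine. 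What fails is answer preservation: the padded facts can complete satisfying valuations no matter what values you put in them. Concretely, take $q=\{R(\underline{x},y),\, S(\underline{y},x),\, T(\underline{y},w)\}$ with $\fk=\{\foreignkey{R}{2}{T}\}$; the attack graph is cyclic ($R$ and $S$ attack each other), and $\fk$ is about $q$. For $\db=\{R(\underline{a},b),\, S(\underline{b},a)\}$, the unique repair falsifies $q$ (no $T$-fact), so $\db$ is a ``no''-instance of $\cqa{q}$; but padding the dangling $R(\underline{a},b)$ forces some fact $T(\underline{b},\filler)$, and since $w$ is orphan, \emph{any} such fact completes the valuation $x\mapsto a$, $y\mapsto b$ — the padded instance becomes a ``yes''-instance of $\certainty{q}{\fk}$. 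No choice of ``values that can never complete a satisfying valuation'' exists here. The paper avoids this trap by not transforming the input at all: it restricts attention to the structured instances $\db_{R,S}$ from the Koutris--Wijsen hardness proof (built from the bivaluations $\bival{a}{b}$ for mutually attacking atoms $F\attacks{q}G\attacks{q}F$), takes the identity reduction, and proves both directions there; the insertions happen inside the proof, where a dangling fact in a primary-key repair is closed off by a chase using cycles that are ``long enough'' (with a careful exception rule), and the mutual attacks guarantee the invented facts join no satisfying valuation. Your idea could be salvaged by padding the $\db_{R,S}$ instances themselves, but proving the padded facts are harmless then requires exactly this attack-based argument, i.e., you converge to the paper's proof.

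Item~\ref{it:forew} is also substantially underspecified: ``augment the rewriting of $\cqa{q}$, since no block-interference makes everything local'' is an intuition, not a construction, and it is the bulk of the paper's technical work. The paper proceeds quite differently, by a chain of first-order many-one reductions $\certainty{q}{\fk}\reducesTo{\FO}\certainty{q'}{\emptyset}$ that eliminate foreign keys one class at a time — weak foreign keys (Lemma~\ref{lem:nonProper}), then strong ones by type $\opo$, $\dpd$, $\dpo$ (Lemmas~\ref{lem:opo}, \ref{lem:dpd}, \ref{lem:dpo}, with Lemma~\ref{lem:constantkey} handling atoms with variable-free keys) — each reduction preserving acyclicity of the attack graph and absence of block-interference, the latter hypothesis being used precisely and only in the $\dpo$ case. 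Note also that the resulting rewriting is \emph{not} in general an augmentation of the rewriting for $\cqa{q}$: Example~\ref{ex:block} exhibits $q_{1}$ where the rewriting of $\certainty{q_{1}}{\fk}$ is $q_{1}$ itself and differs from that of $\cqa{q_{1}}$, with referenced obedient atoms treated asymmetrically. Your decidability paragraph is correct and matches the paper.
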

There is an easy proof for the last line in the statement of the above theorem.
Indeed, it is known that, given $q$ in $\sjfbcq$, it can be decided in quadratic time whether or not $q$'s attack graph is acyclic~\cite[Theorem~3.2]{DBLP:journals/tods/KoutrisW17}.
Moreover, it is clear that the existence of block-interference is decidable in polynomial time by inspecting the conditions in Definition~\ref{def:block-int} and the syntactic characterization of obedience in Theorem~\ref{thm:syntactic-obedience}.

The following example illustrates Theorem~\ref{thm:main}, and shows that consistent query answering over foreign keys depends in a subtle way on the syntax of the query. 

\begin{example}\label{ex:block}
For variables $x,y,z,w$, and a constant $c$, let
\begin{align*}
\fk&=\{\foreignkey{N}{3}{O}\};\\
q_{1} & = \{N(\underline{x},u,y), O(\underline{y},w)\};\\
q_{2} & = \{N(\underline{x},c,y), O(\underline{y},w)\};\\
q_{3} & = \{N(\underline{x},c,y), O(\underline{y},c)\}.
\end{align*}
Note that $q_{2}$ and $q_{3}$ can be obtained from $q_{1}$ by replacing variables with constants: $q_{2}=\substitute{q_{1}}{u}{c}$ and $q_{3}=\substitute{q_{1}}{u,w}{c,c}$.
The attack graph of each query is acyclic, and hence $\cqa{q_{i}}$ is in $\FO$ for $i\in\{1,2,3\}$.
The complexity and consistent first-order rewritings change as follows in the presence of $\fk$.


\begin{itemize}
\item 
$\certainty{q_{1}}{\fk}$ is in $\FO$ because $\foreignkey{N}{3}{O}$ is not block-interfering in $q_1$, even though the atom $O(\underline{y},w)$ is obedient.
It can be formally verified that condition~\eqref{it:marking} in Definition~\ref{def:block-int} is not satisfied: the position $(N,2)$ in $q_{1}$ is obedient, because it is occupied by a variable that occurs only once in the query.
The consistent first-order rewriting for $\certainty{q_{1}}{\fk}$ is the query~$q_1$ itself.
Remarkably, this is different from the consistent first-order rewriting for $\cqa{q_{1}}$ (i.e., in the absence of foreign keys).
To see the difference, note that the following database instance is a ``yes''-instance of $\certainty{q_{1}}{\fk}$, but a ``no''-instance of $\cqa{q_{1}}$. 

$$
\begin{array}{cc}
\begin{array}[t]{c|ccc}
N & \underline{x} & u & y\\\cline{2-4}
  & c & 1 & a\\
  & c & 2 & b\\\cdashline{2-4}
\end{array}
&
\begin{array}[t]{c|cc}
O & \underline{y} & w\bigstrut\\\cline{2-3}
  & a & 3\\\cdashline{2-3}
\end{array}
\end{array}
$$ 
\mbox{}

\item
$\certainty{q_{2}}{\fk}$ is $\NL$-hard, because $\foreignkey{N}{3}{O}$ is block-interfering in $q_2$. 
Informally, this is because the position $(N,2)$ is now occupied by a constant~$c$ and therefore not obedient.

\item 
$\certainty{q_{3}}{\fk}$ is again in $\FO$, because $\foreignkey{N}{3}{O}$ is not block-interfering in $q_2$.  The reason is that the $O$-atom is no longer obedient because its non-primary-key position is now occupied by a constant.
With some effort, one can see that $\certainty{q_{3}}{\fk}$ and $\cqa{q_{3}}$ have the same consistent first-order rewriting.
\end{itemize}

To conclude, replacing a variable by a constant can increase or decrease the complexity, depending on where the variable occurs. This behavior is typical of foreign keys, and does not occur in the case of only primary keys. 
\qed
\end{example}

%
%

The following sections are devoted to the proof of Theorem~\ref{thm:main}. 
In Section \ref{sec:lhard}, we prove item~\eqref{it:lhard} of Theorem~\ref{thm:main},
and in Section \ref{sec:nl} we prove item~\eqref{it:nlhard}.
Finally, item~\eqref{it:forew} is shown in~Section~\ref{sec:fo}.

\section{$\L$-Hardness}\label{sec:lhard}

We know from Theorem~\ref{thm:koutrisw} that $\certainty{q}{\fk}$ is $\L$-hard if $\fk=\emptyset$ and the attack graph of $q$ is cyclic.
The following lemma tells us that this complexity lower bound remains valid if we add foreign keys to $\fk$.
It is worth mentioning that it can be proved for foreign keys that need not be unary (see Appendix \ref{sect:appC}). 

\begin{lemma}\label{lem:cutchase}
Let $q$ be a query in $\sjfbcq$,
and $\fk$ be a set of foreign keys about~$q$.
If $q$ has a cyclic attack graph, then $\certainty{q}{\fk}$ is $\L$-hard. 
\end{lemma}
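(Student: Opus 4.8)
The plan is to prove Lemma~\ref{lem:cutchase} by a many-one logspace reduction from the problem $\cqa{q}$ (the pure primary-key case, where $\fk=\emptyset$). We know from Theorem~\ref{thm:koutrisw} that $\cqa{q}$ is $\L$-hard whenever the attack graph of $q$ is cyclic. Hence, if we can show $\cqa{q}\reducesTo{}{}\certainty{q}{\fk}$ for an arbitrary set $\fk$ of foreign keys about $q$, the $\L$-hardness of $\certainty{q}{\fk}$ follows immediately. The conceptual difficulty is that the two problems use different repair notions: $\cqa{q}$ uses subset-repairs (deletions only), whereas $\certainty{q}{\fk}$ uses $\sd$-repairs (both insertions and deletions) and must additionally satisfy the foreign keys in $\fk$. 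So the reduction must transform an input instance $\db$ of $\cqa{q}$ into an instance $\db'$ of $\certainty{q}{\fk}$ in such a way that (i) the foreign keys of $\fk$ play no genuine role, and (ii) the $\sd$-repairs of $\db'$ correspond precisely to the subset-repairs of $\db$.

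First I would pad the input. Given an instance $\db$ of $\cqa{q}$, I build $\db'$ by adding, for every foreign key $\foreignkey{R}{i}{S}$ in $\lclosure{\fk}$ and every fact of $\db'$ that could be dangling, enough referenced $S$-facts so that \emph{no} fact of $\db'$ is dangling. Concretely, since $\fk$ is about $q$, every relation name in $\fk$ occurs in $q$; I would introduce a small set of fresh ``filler'' constants and populate the referenced relations with facts whose non-key positions use these fresh constants, in such a way that every outgoing foreign-key reference is satisfied. The key design goal is that these filler facts, using constants that never match the constants or the join pattern demanded by $q$, can never contribute to a satisfying valuation of $q$, and also never create new primary-key conflicts. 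This makes $\db'$ satisfy $\fk$ outright, so the foreign keys impose no constraints that force insertions, and deletions of $\db'$-facts are never forced by dangling-ness because the filler facts keep all references satisfied. Consequently the $\sd$-repairs of $\db'$ reduce to choosing one fact per block (the subset-repair behaviour), and I must argue that these are exactly the subset-repairs, i.e. that no superset-repair or mixed repair can be strictly $\sd$-closer.

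The main obstacle, and the step requiring the most care, is verifying that the $\sd$-repairs of $\db'$ genuinely coincide with the subset-repairs and that $q$-satisfaction is preserved both ways. I would argue that because $\db'$ already satisfies $\fk$, any $\sd$-repair $\rep$ can only delete key-conflicting facts, and inserting any new fact strictly increases $\db'\sd\rep$ without being forced (since $\fk$ is already satisfied), so every $\sd$-repair is in fact a subset-repair obtained by picking one fact from each block. Then I must show the equivalence: a subset-repair $\rep$ of $\db$ satisfies $q$ if and only if the corresponding $\sd$-repair of $\db'$ satisfies $q$. The forward direction is clear since $\db\subseteq\db'$. For the backward direction I rely on the filler facts being useless for $q$: any valuation $\theta$ with $\theta(q)\subseteq\rep'$ cannot use a filler fact, because filler facts sit at positions that $q$ either pins to a query-constant or joins with genuine data, and the fresh constants defeat both; hence $\theta(q)$ lives entirely in the $\db$-part, giving a satisfying valuation in the original subset-repair. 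This map is clearly computable in logspace, completing the reduction. Since the proof does not exploit unariness of the foreign keys anywhere, it extends verbatim to non-unary foreign keys, as claimed in the statement.
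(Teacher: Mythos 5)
Your reduction has a genuine gap: the claim that filler facts ``can never contribute to a satisfying valuation of $q$'' is false, and it fails exactly in the regime this paper is about. Fresh constants only defeat positions that $q$ pins with a constant or a join; a non-primary-key position holding an \emph{orphan} variable is pinned by nothing, and the filler's primary key is \emph{forced} to equal a genuine data value by the foreign key itself. Concretely, take $q=\{R(\underline{x},y),\, S(\underline{y},x),\, T(\underline{y},w)\}$ with $\fk=\{\foreignkey{R}{2}{T}\}$, which is about $q$, and whose attack graph is cyclic ($R$ and $S$ attack each other), so this query is squarely within the scope of the lemma. For $\db=\{R(\underline{a},b), S(\underline{b},a)\}$, the unique subset-repair is $\db$ itself and it falsifies $q$ (no $T$-fact), so $\db$ is a ``no''-instance of $\cqa{q}$. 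Your padding must add a filler $T(\underline{b},c)$ for fresh $c$ to cure the dangling fact $R(\underline{a},b)$; the resulting $\db'$ satisfies $\pk\cup\fk$, hence is its own unique $\sd$-repair, and it \emph{satisfies} $q$ via $x\mapsto a$, $y\mapsto b$, $w\mapsto c$. So padding turns a ``no''-instance into a ``yes''-instance, and no choice of filler values can avoid this, since $w$ matches anything and the key $b$ is dictated by the foreign key. (This is precisely the phenomenon of Example~\ref{ex:block}: the paper exhibits a database that is a ``yes''-instance of $\certainty{q_1}{\fk}$ but a ``no''-instance of $\cqa{q_1}$, which already shows that the two problems are not equivalent instance-by-instance, so no answer-preserving padding of arbitrary $\cqa{q}$ inputs can exist in general.) Your repair-structure analysis (that once $\db'\models\fk$, all $\sd$-repairs are the pick-one-per-block subsets) is fine; it is the preservation of $q$-satisfaction that breaks.

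The paper avoids this trap by not reducing from arbitrary instances of $\cqa{q}$ at all. It takes the specific structured instances $\db_{R,S}$ from the $\L$-hardness proof of~\cite{DBLP:journals/tods/KoutrisW17}, built from a mutually attacking pair $F\attacks{q}G\attacks{q}F$, and shows the \emph{identity} map is a correct reduction on these: any satisfying valuation must use matched $F$- and $G$-facts $\bival{a}{b}(F),\bival{a}{b}(G)$, and forced insertions are neutralized by chasing dangling facts into long cycles of fresh constants (with an exception rule preventing accidental completion of a matched pair), so inserted facts provably cannot help satisfy $q$. That confinement to instances where insertions are harmless is exactly the control your generic construction lacks; if you want to salvage your plan, you would have to restrict to the hard instances of $\cqa{q}$ and argue insertions away there, which is essentially the paper's proof.
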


For example, since the attack graph of $q=\{R(\underline{x},y), S(\underline{y},x)\}$ is cyclic, $\certainty{q}{\fk}$ is $\L$-hard, for every $\fk$ that is a (possibly empty) subset of $\{\foreignkey{R}{2}{S}, \foreignkey{S}{2}{R}\}$.

\section{\NL-hardness}\label{sec:nl}

The following lemma, proven in Appendix \ref{sect:apD}, restates item~\eqref{it:nlhard} of Theorem~\ref{thm:main}.

\begin{lemma}\label{lem:bi}
Let $q$ be a query in $\sjfbcq$,
and $\fk$ be a set of unary foreign keys about $q$.
If $(q,\fk)$ has block-interference, then the problem $\certainty{q}{\fk}$ is $\NL$-hard. 
\end{lemma}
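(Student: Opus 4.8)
The plan is to prove Lemma~\ref{lem:bi} by a many-one reduction from reachability in directed acyclic graphs, a problem that remains $\NL$-complete; the reduction can be carried out in first-order logic, so that $\NL$-hardness indeed rules out membership in $\FO$. Since $\NL=\coNL$, I am free to choose the polarity, and it will be most convenient to reduce from the \emph{complement} of reachability. The engine of the reduction is exactly the forcing phenomenon displayed in the opening example of Section~\ref{sec:bi}: fixing a block-interfering strong foreign key $\foreignkey{N}{j}{O}$ of $\lclosure{\fk}$, the choice of a ``falsifying'' fact inside an $N$-block forces, through the foreign key, the insertion of an $O$-fact, which in turn de-dangles and thereby \emph{activates} a later $N$-block. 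I will use this to encode the step relation of a DAG, with blocks playing the role of vertices and activation tracing out directed walks from a seeded source~$s$.

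Concretely, given a DAG $G$ with distinguished vertices $s,t$, I would build $\db$ so that each vertex $v$ contributes an $N$-block whose primary-key values encode $v$, and which is seeded non-dangling only at~$s$. Inside the block of an internal vertex $v$ I place, for each outgoing edge $(v,w)$, an ``advancing'' $N$-fact whose position-$j$ value is the activation value of~$w$, together with a ``satisfying'' $N$-fact that (with suitable facts for $q'\defeq q\setminus\{N(\dots)\}$) completes a match of $q$. Condition~\eqref{it:no-constant} of Definition~\ref{def:block-int}, namely $t_j\in V$, is precisely what lets position $j$ carry distinct per-vertex values not pinned by any dependency $\fd{\emptyset}{\{t_j\}}$, so the blocks stay separated. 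Obedience of $O(\underline{t_j},\vec y)$ (condition~\eqref{it:obedientO}), read through Theorem~\ref{thm:syntactic-obedience}, guarantees that when the repair inserts an activating $O$-fact it may complete the entire closure $\fkclosure{P}{q}{\fk}$ with \emph{fresh} invented values drawn disjointly from the constants used to build $q$; this is what prevents the superset-repair machinery from ever spuriously matching $q$, so a match can arise only from a genuine satisfying $N$-fact. Each trap sink other than $t$ is made a satisfying-only block, while $t$ is made a ``safe'' sink carrying a non-satisfying stop-fact that activates nothing.

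The purpose of condition~\eqref{it:marking}/\eqref{it:bifour} is to supply the intra-block \emph{distinguisher} separating satisfying from advancing facts, and I would split the construction accordingly. In case~\eqref{it:marking}, the set $\{(N,k{+}1),\dots,(N,n)\}\setminus\{(N,j)\}$ is disobedient, so by Theorem~\ref{thm:syntactic-obedience} some other non-primary-key position of $N$ (or of its closure) carries a constant or a repeated variable; as in the passage from $N(\underline{x},z,y)$ to $N(\underline{x},c,y)$ in Section~\ref{sec:bi}, this marker pins the satisfying facts while the advancing facts deviate. In case~\eqref{it:bifour}, there is instead a path in $\constgraph{V}{q'}$ between a key term $t_i$ and $t_j$, and I would realize the distinguisher through the auxiliary atoms along that path, exactly the role played by the $T$-atom in the example following Example~\ref{ex:four}, again using $V$ to ensure no variable is collapsed by $\fd{\emptyset}{\cdot}$.

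Correctness is then the forcing argument of the opening example, lifted to a branching DAG. A repair that falsifies $q$ must, at every activated block, choose an advancing fact, hence trace a single walk from~$s$; it can stop only at the safe sink~$t$, since every other sink forces a satisfying fact. Thus $\db$ admits a falsifying repair if and only if $t$ is reachable from~$s$, so $\db$ is a ``yes''-instance of $\certainty{q}{\fk}$ exactly when $t$ is \emph{not} reachable from $s$, giving the claimed reduction. The main obstacle I anticipate lies not in the chain mechanism but in the bookkeeping for arbitrary $q$ and arbitrary $\fk$: one must verify that the atoms of $q'$ and the whole dependency-graph closure can be instantiated so that (i) every intended satisfying fact really extends to a full match of~$q$, and (ii) no unintended match is produced by the fresh tuples that the superset-repair is forced to invent. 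Discharging (ii) uniformly is where the four syntactic conditions of Theorem~\ref{thm:syntactic-obedience}, namely absence of cycles, constants, cross-closure variables, and repeated variables in $\posclosure{P}{\fk}$ (and the interplay with $\poscomplement{P}{\fk}$), must be invoked in concert, and reconciling the two sub-cases of condition~3 under one correctness statement is the most delicate part of the argument.
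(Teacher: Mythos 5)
Your overall strategy coincides with the paper's: a first-order reduction from (the complement of) directed-graph reachability, with $N$-blocks playing the role of vertices, a per-vertex ``satisfying'' fact $\theta_u(F)$ versus per-edge ``advancing'' facts $A_{u,v}$ whose $j$th value activates the next block, condition~\eqref{it:no-constant} keeping the per-vertex activation values separated, and the case split on conditions~\eqref{it:marking}/\eqref{it:bifour} supplying the intra-block distinguisher (in case~\eqref{it:bifour} via the Gaifman-path argument, exactly as in the paper). One design difference is benign: you special-case $t$ as a ``safe sink'' with a stop-fact, whereas the paper instead adds an edge from $t$ back to $s$, so that the falsifying walk closes into a cycle whose final $O$-fact $\theta_s(G)$ is already in $\db$; this keeps every vertex's gadget uniform, but your self-terminating stop-fact could be made to serve the same purpose.

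There is, however, a genuine gap where you yourself flag ``the most delicate part.'' In case~\eqref{it:marking} the advancing facts must carry fresh orphan constants at the positions of $\{(N,k+1),\dots,(N,n)\}\setminus\{(N,j)\}$, and these facts are then \emph{dangling} with respect to the other strong foreign keys of $\fk$ outgoing $N$. The instance you exhibit as the falsifying witness is therefore inconsistent and is not itself a repair, and $\oplus$-minimality does not allow you to wave this away by ``inserting fresh tuples as needed'': the insertions can de-dangle and thereby activate further blocks, and the resulting instance must be shown minimal (or at least $\closer{\db}$-incomparable to better candidates). The paper discharges precisely this obligation with machinery absent from your sketch: the gadget construction of Lemma~\ref{lem:gadget}, which — \emph{because} the position set is disobedient — closes off a dangling fact using facts that are provably irrelevant for $q$; and the notions of irrelevantly dangling instance and pre-repair (Definitions~\ref{def:irr} and~\ref{def:pre-repair}) together with Theorem~\ref{thm:pre-repair}, which reduce ``some repair falsifies $q$'' to ``some pre-repair falsifies $q$.'' Relatedly, you misattribute the role of obedience: it is not obedience of the $O$-atom that ``prevents the superset-repair machinery from ever spuriously matching $q$'' — on the contrary, obedience makes any inserted or key-equal $O$-fact just as good as a genuine one for completing a match (this is what the paper uses in the direction ``no path implies every repair satisfies $q$,'' to force the repair away from $\theta_u(F)$). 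What blocks spurious matches through advancing facts is \emph{dis}obedience of the marked positions (resp.\ the connectivity in $\constgraph{V}{q'}$ in case~\eqref{it:bifour}), and it is the same disobedience that licenses closing off their dangling positions irrelevantly. Without the pre-repair/gadget apparatus or an equivalent, the soundness direction of your reduction does not go through in case~\eqref{it:marking}.
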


For an intuition why block-interference leads to $\NL$-hardness, consider again the example with $q=\{N(\underline{x},c,y)$, $O(\underline{y})\}$ and $\fk=\{\foreignkey{N}{3}{O}\}$, elaborated in the beginning of Section~\ref{sec:bi}, where it was argued that $\certainty{q}{\fk}$ goes beyond locality of first-order logic.
With this preceding example in mind, it should not come as a surprise that directed graph reachability can be reduced to (the complement of) $\certainty{q}{\fk}$.
In graph reachability, the input consists of a directed graph and two vertices ($s$ and $t$), and the question is whether there is a directed path from~$s$ to~$t$.
The problem is $\NL$-hard, even if the graphs are acyclic.
Figure~\ref{fig:reach} illustrates a straightforward reduction: 
for every vertex $v$ such that $v\neq t$, add an $N$-fact $N(\underline{v},c,v)$;
for every directed edge $(u,w)$, add $N(\underline{u},d,w)$.
Finally, add $O(\underline{s})$.
The path from $s$ to $t$ (via vertex $2$) in the database instance of Fig.~\ref{fig:reach} can be cooked into the following $\sd$-repair that falsifies $q$:
$$
\begin{array}{cc}
\begin{array}[t]{c|ccc}
N & \underline{x} & c & y\\\cline{2-4}
  & s & d & 2\\
  & 2 & d & t
\end{array}
&

\begin{array}[t]{c|c}
O & \underline{y}\bigstrut\\\cline{2-2}
  & s\\
  & 2\\
  & t
\end{array}
\end{array}
$$
On the other hand, it can be easily verified that there would be no falsifying $\sd$-repair if every path starting from $s$ ended in a vertex other than $t$. 
The reasoning is analogous to the one used in the beginning of Section~\ref{sec:bi}.

\tikzstyle{edge}=[->,shorten >=.5pt,>=stealth,thick]
\tikzstyle{vertex}=[circle,draw=black!50,fill=blue!10,thick,inner sep=0pt,minimum size=4mm]

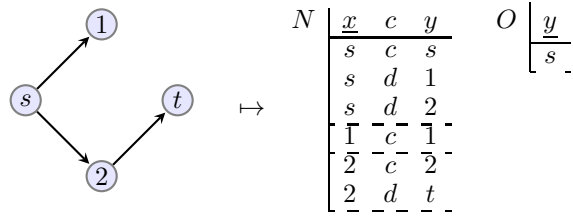
\begin{figure}\centering
\begin{tabular}{ccc}
\begin{adjustbox}{valign=t}
\begin{tikzpicture}
\node (1) [vertex] {$s$};
\node (2) [vertex, above right =10mm of 1] {$1$};
\node (3) [vertex, below right  =10mm of 1] {$2$};
\node (4) [vertex, below right  =10mm of 2] {$t$};

\draw [edge] (1) -- (2);
\draw [edge] (1) -- (3);
\draw [edge] (3) -- (4);
\end{tikzpicture}
\end{adjustbox}
&
$
\begin{array}[t]{cc|ccc}
& N & \underline{x} & c & y\\\cline{3-5}
&  & s & c & s\\
&  & s & d & 1\\
\mapsto &  & s & d & 2\\\cdashline{3-5}
&  & 1 & c & 1\\\cdashline{3-5}
&  & 2 & c & 2\\
&  & 2 & d & t\\\cdashline{3-5}
\end{array}
$
&
$
\begin{array}[t]{c|c}
O & \underline{y}\bigstrut\\\cline{2-2}
  & s \\\cdashline{2-2}
\end{array}
$
\end{tabular}
\caption{Reduction from graph reachability.}
\label{fig:reach}
\end{figure}

The previous example gives a correct intuition for the proof of Lemma~\ref{lem:bi}.
The reason why its proof is technically much more involved is that Definition~\ref{def:block-int} (and especially condition~\eqref{it:marking} in it) exhibits several ways in which block-interference can arise. In the previous example, we only looked at the very simple case where block-interference uses a constant. In more difficult situations, block-interference arises from cycles in the dependency graph or repetitions of variables. 


In the absence of foreign keys, for every $q$ in $\sjfbcq$, the problem $\cqa{q}$ is either in $\FO$, $\L$-complete, or $\coNP$-complete~\cite{DBLP:journals/mst/KoutrisW21}.
Interestingly, in the presence of foreign keys, $\NL$-completeness and $\P$-completeness also pop up, as shown next.

\begin{proposition}\label{pro:nlc}
$\certainty{q}{\fk}$ is $\NL$-complete for $q=\{N(\underline{x},x)$, $O(\underline{x})\}$ and $\fk=\{\foreignkey{N}{2}{O}\}$.
\end{proposition}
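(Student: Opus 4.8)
The statement asserts both a lower and an upper bound, so I would establish $\NL$-hardness and $\NL$-membership separately, with essentially all the effort going into the upper bound.

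For the lower bound I would simply invoke Lemma~\ref{lem:bi} after checking that $(q,\fk)$ has block-interference. Here $N$ has signature $\signature{2}{1}$ and $O$ has signature $\signature{1}{1}$, so $\foreignkey{N}{2}{O}$ is strong, and I take $j=2$, giving $N$-atom $N(\underline{t_1},t_2)=N(\underline{x},x)$ and $O$-atom $O(\underline{t_j})=O(\underline{x})$. Condition~\eqref{it:obedientO} of Definition~\ref{def:block-int} holds because $O(\underline{x})$ has no non-primary-key positions and is therefore obedient. With $q'=\{O(\underline{x})\}$ we get $\queryvars{q'}=\{x\}$ and $\FD{q}\not\models\fd{\emptyset}{\{x\}}$, so $x\in V$ and condition~\eqref{it:no-constant} holds. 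Finally condition~\eqref{it:bifour} holds with $i=1$, since $t_1$ and $t_j$ are the \emph{same} variable $x$ and are thus trivially connected in $\constgraph{V}{q'}$. Hence $\foreignkey{N}{2}{O}$ is block-interfering and Lemma~\ref{lem:bi} gives that $\certainty{q}{\fk}$ is $\NL$-hard.

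For the upper bound I would first pin down the shape of $\sd$-repairs by exchange arguments. Writing $O_{\db}$ for the set of values $c$ with $O(\underline{c})\in\db$, two observations drive everything: (a) no repair inserts an $N$-fact, since deleting an inserted $N$-fact preserves $\fk\cup\pk$ and yields a strictly $\closerneq{\db}$-smaller instance; and (b) no repair deletes an $O$-fact of $\db$, since re-inserting it is harmless and again $\closerneq{\db}$-reducing. Consequently a repair is described by a set $S\supseteq O_{\db}$ of present $O$-values together with, for each $N$-block, the choice either to delete it or to keep exactly one fact $N(\underline{a},b)$ with $b\in S$. A further exchange argument then shows that such an instance is a genuine $\sd$-repair \emph{iff} (i) every block having an out-neighbour in $S$ keeps an edge, and (ii) every value in $S\setminus O_{\db}$ is the target of some kept edge. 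I would model the input as the directed graph $G$ on values with an edge $a\to b$ whenever $N(\underline{a},b)\in\db$, and mark the vertices of $O_{\db}$. The crucial observation is that a repair satisfies $q$ iff it keeps a self-loop $N(\underline{a},a)$, because keeping such a fact forces $O(\underline{a})\in\rep$ through the foreign key; hence $\db$ is a \emph{no}-instance iff some minimal repair keeps no self-loop. Analysing exactly when all self-loops can be avoided, I would prove that this is possible iff every marked vertex is \emph{safe}, where, in the graph $G'$ obtained from $G$ by deleting all self-loops, a vertex is safe iff it can reach (in zero or more steps) either a loop-free vertex or a directed cycle. The forward direction uses a rank/descent argument along the kept edges of a putative falsifying repair; the backward direction explicitly builds a falsifying repair, routing every forced self-loop to a safe non-loop neighbour and closing the set $S$ under the induced insertions while keeping it grounded in sense~(ii).

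Since safety is a disjunction of two reachability conditions in $G'$ (reaching a loop-free vertex, and reaching a vertex that reaches itself), it is decidable in $\NL$; as $\certainty{q}{\fk}$ accepts iff some marked vertex is \emph{not} safe, and $\NL=\mathsf{coNL}$ by the Immerman--Szelepcs\'enyi theorem, the problem lies in $\NL$. Combined with the lower bound, this yields $\NL$-completeness. The main obstacle I anticipate is not the hardness part but the precise characterisation of minimal repairs when both insertions and deletions are repair primitives, and on top of it the correctness of the safety criterion---especially the backward construction, where one must avoid \emph{every} self-loop and simultaneously keep $S$ grounded, which is exactly where the difference between merely reaching a safe neighbour and sustaining an infinite loop-free routing (a cycle in $G'$) becomes essential.
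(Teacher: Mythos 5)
Your proposal is correct, and at the hardness end it coincides with the paper's: the paper obtains $\NL$-hardness by exactly your route, namely checking that $\foreignkey{N}{2}{O}$ is block-interfering and invoking Lemma~\ref{lem:bi}; your verification via condition~(3b) of Definition~\ref{def:block-int}, with $t_1=t_j=x$ trivially connected in $\constgraph{V}{q'}$, is the right one (condition~(3a) indeed fails here, since $\{(N,2)\}\setminus\{(N,2)\}=\emptyset$ is obedient). For membership the paper takes the same overall architecture --- reduce the complement to $\problem{REACHABILITY}$ over a graph whose vertices are the self-loop values $\{c\mid N(\underline{c},c)\in\db\}$, mark the vertices carrying $O$-facts, and use closure of $\NL$ under complement --- but your safety criterion is a genuine refinement of what the paper writes. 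The paper's sketch adds a sink $\bot$ reachable from blocks containing an edge to a loop-free value, and asserts that $\db$ is a ``no''-instance iff $\bot$ is reachable from every marked vertex; taken literally, this misses the second disjunct of your criterion, escape into a directed cycle of self-loop vertices. Concretely, for $\db=\{N(\underline{a},a),N(\underline{a},b),N(\underline{b},b),N(\underline{b},a),O(\underline{a})\}$ the instance is a ``no''-instance, since $\rep=\{N(\underline{a},b),N(\underline{b},a),O(\underline{a}),O(\underline{b})\}$ is a falsifying $\sd$-repair (one checks minimality by running through the seven proper subsets of $\db\sd\rep=\{N(\underline{a},a),N(\underline{b},b),O(\underline{b})\}$, each of which yields a primary-key or foreign-key violation), yet $\bot$ is unreachable from the marked vertex $a$ in the paper's graph. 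Your criterion --- safe iff one can reach a loop-free vertex \emph{or} a directed cycle in $G'$ --- classifies this correctly, and your exchange-based normal form for repairs (no inserted $N$-facts, all $O$-facts of $\db$ retained, a block kept iff it has an out-neighbour in $S$, every inserted $O$-value grounded by a kept edge) is exactly the bookkeeping needed to prove it: the forward descent argument works because the kept escape edges form a walk in $G'$ that in a finite graph must end at a loop-free vertex or close a cycle, and the backward routing construction needs precisely the cycle case to sustain escapes when no loop-free vertex is reachable. So: same approach as the paper, but your analysis of when all self-loops can be avoided is more careful than the paper's proof sketch, and in fact corrects the reachability condition stated there.
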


\begin{proposition}\label{pro:pc}
$\certainty{q}{\fk}$ is $\P$-complete for $q=\{N(\underline{x},c,y)$, $O(\underline{y})\}$ and $\fk=\{\foreignkey{N}{3}{O}\}$.
\end{proposition}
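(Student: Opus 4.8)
The plan is to analyze the $\sd$-repairs of a given instance $\db$ directly, exploiting the very restricted shape of $q=\{N(\underline{x},c,y),O(\underline{y})\}$ and $\fk=\{\foreignkey{N}{3}{O}\}$. First I would record a few structural facts about any $\sd$-repair $\rep$. Since $O$ has no outgoing foreign key and its facts are singleton blocks, keeping an $O$-fact is always consistent and strictly $\sd$-improving, so every $\sd$-repair contains all $O$-facts of $\db$; write $A\defeq\{v\mid O(\underline{v})\in\rep\}$, so $A\supseteq A_{0}\defeq\{v\mid O(\underline{v})\in\db\}$. Next, $q$ is satisfied by $\rep$ exactly when $\rep$ keeps some $N$-fact of the form $N(\underline{b},c,v)$, because the foreign key then forces $O(\underline{v})\in\rep$. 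Hence a \emph{falsifying} repair is one keeping no $N$-fact with $c$ in position~$2$. Call an $N$-block \emph{activated} (in $\rep$) if it has a fact pointing into $A$ (position~$3$ in $A$). A comparison argument shows that any activated block must keep exactly one (necessarily non-dangling) fact: if it kept none, adding a non-dangling fact yields $\rep'\closerneq{\db}\rep$. Dually, a non-activated block keeps nothing, and no $O$-fact is inserted gratuitously, so $A=A_{0}\cup\{\text{targets of kept }N\text{-facts}\}$. Consequently, falsifying $\sd$-repairs correspond precisely to \emph{closed falsifying assignments}: a choice of one falsifying fact per activated block whose targets, together with $A_{0}$, form a set $A$ such that every block with a fact into $A$ is handled. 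I expect the delicate bookkeeping of this correspondence to be the main obstacle, since (as the $\rep_2$/$\rep_3$ example warns) symmetric differences are incomparable and insertions and deletions interact.

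\textbf{Membership in $\P$.}
The closed-falsifying-assignment condition is a monotone AND/OR problem of polynomial size. Intuitively, handling an activated block is an \emph{OR} over its falsifying facts, while inserting the chosen target $O(\underline{w})$ activates \emph{all} blocks with a fact pointing to $w$, each of which must then be handled — an \emph{AND}. A block with no falsifying fact, once activated, forces satisfaction. Thus I would define, by a monotone fixpoint over the polynomially many blocks, the set of blocks that \emph{cannot} be escaped, and decide whether the obligations forced by $A_{0}$ can all be met. This is exactly AND/OR graph solvability, which is solvable in polynomial time, so $\certainty{q}{\fk}$ is in $\P$.

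\textbf{$\P$-hardness.}
Block-interference holds for $(q,\fk)$ (as in the running example), so Lemma~\ref{lem:bi} already gives $\NL$-hardness; to obtain the stronger lower bound I would reduce, in logarithmic space, from the (complement of the) Monotone Circuit Value Problem. Each gate $g$ is encoded by an $N$-block $b_{g}$ carrying one satisfying \emph{activation} fact pointing to a private value $a_{g}$, so that inserting $O(\underline{a_{g}})$ activates $b_{g}$ and forces it to escape via a falsifying fact. An OR-gate $g=h_{1}\vee h_{2}$ is a block with two falsifying facts pointing to $a_{h_{1}}$ and $a_{h_{2}}$ (a disjunctive choice of which child to activate). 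An AND-gate $g=h_{1}\wedge h_{2}$ is a block whose single falsifying fact points to a value $m_{g}$ that activates two relay blocks, each relaying (via its own falsifying fact) to one child, so that escaping $b_{g}$ obligates \emph{both} children. An input set to $1$ is a block with a falsifying fact whose target already lies in $A_{0}$; an input set to $0$ is a block with only its satisfying activation fact, hence unescapable once activated. Seeding $A_{0}$ to obligate the output block, the value computed by the fixpoint at $b_{\mathrm{out}}$ equals the circuit's output; thus a falsifying $\sd$-repair exists iff the circuit evaluates to $1$, i.e.\ $\db$ is a ``yes''-instance of $\certainty{q}{\fk}$ iff the output is $0$. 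Since the query is fixed, the gadgets are local, and the reduction is logspace, and since $\P$ is closed under complement, this establishes $\P$-hardness; combined with membership, $\certainty{q}{\fk}$ is $\P$-complete.
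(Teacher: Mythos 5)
Your proposal is correct, and it packages the argument differently from the paper, which proves both directions through a single bidirectional dictionary with $\problem{DUAL}$ $\problem{HORN}$ $\problem{SAT}$~\cite{DBLP:conf/stoc/Schaefer78}: for membership, each fact $O(\underline{p})\in\db$ becomes a unit clause $p$, and each $N$-block whose $c$-facts target $p_{1},\dots,p_{n}$ and whose non-$c$-facts target $q_{1},\dots,q_{m}$ becomes the clauses $\neg p_{j}\lor q_{1}\lor\dotsm\lor q_{m}$; for hardness, the same dictionary is run backwards, one block per clause, so that disjunction is native to a block and no auxiliary gadgets are needed. Your ``closed falsifying assignments'' are exactly the satisfying assignments of that dual-Horn formula, so your monotone-fixpoint/AND-OR membership argument is the paper's reduction in different clothing; the extra content you supply---that every falsifying repair induces a closed assignment and, conversely, that the instance induced by a closed assignment is itself a falsifying $\sd$-repair (keeping any further $\db$-fact either creates a primary-key clash in a handled block or activates a block the assignment did not handle, and dropping an inserted $O$-fact leaves a kept $N$-fact dangling)---is precisely the bookkeeping the paper leaves inside its two unproven ``we claim'' statements, and it does go through by the comparison arguments you sketch, in the style of Lemma~\ref{lem:relevantblockter} and Lemma~\ref{lem:norepair}. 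For the lower bound you reduce instead from the complement of the Monotone Circuit Value Problem; this is sound, but since a block natively expresses OR you must simulate AND with the relay trick (a single value $m_{g}$ whose insertion activates two relay blocks), whereas the dual-Horn encoding gets conjunction for free and needs no relays, which is why the paper's reduction is shorter. What your route buys in exchange is an explicit structural description of the falsifying repairs and a direct decision algorithm, rather than correctness delegated to a named $\P$-complete problem.
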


A fine-grained complexity classification for \emph{all} problems in the set $\{\certainty{q}{\fk}\mid\mbox{$q\in\sjfbcq$ and $\fk$ is about $q$}\}$ is open; in the current paper, we succeed in tracing the $\FO$-boundary in the above set.

\section{First-Order Rewritability}\label{sec:fo}

\begin{figure}\centering
\begin{tabular}{|c|c|c|c|}\hline
$R$-atom & $S$-atom & Type & \bigstrut\\\hline\hline
 &  & $\weak$ & Lemma~\ref{lem:nonProper}\\\hline
obedient & obedient & $\opo$ & Lemma~\ref{lem:opo}\\\hline
disobedient & disobedient & $\dpd$ & Lemma~\ref{lem:dpd}\\\hline
disobedient & obedient & $\dpo$ & Lemmas~\ref{lem:dpo} and~\ref{lem:constantkey}\\\hline
\end{tabular}
\caption{Reductions that remove foreign keys $\foreignkey{R}{i}{S}$.
}\label{fig:overview}
\end{figure}

The following lemma restates item~\eqref{it:forew} of Theorem~\ref{thm:main}.

\begin{lemma}\label{lem:inFO}
Let $q$ be a query in $\sjfbcq$, and $\fk$ a set of unary foreign keys about~$q$.
If the attack graph of $q$ is acyclic and $(q,\fk)$ has no block-interference,
then $\certainty{q}{\fk}$ is in $\FO$ (and its consistent first-order rewriting can be effectively constructed).
\end{lemma}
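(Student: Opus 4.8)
The plan is to prove Lemma~\ref{lem:inFO} by designing a terminating sequence of reductions, each of which eliminates one foreign key from $\fk$ while preserving both the $\FO$-membership status and (crucially) the two hypotheses: acyclicity of the attack graph and absence of block-interference. Each reduction maps an instance of $\certainty{q}{\fk}$ to an instance of $\certainty{q'}{\fk'}$ with $|\fk'|<|\fk|$, and is $\FO$-computable (a first-order query transforming the input database). Once $\fk$ is empty, the problem is the classical $\cqa{q}$, which is in $\FO$ because the attack graph is acyclic, by Theorem~\ref{thm:koutrisw}. Composing the first-order transformations backwards then yields the consistent first-order rewriting for $\certainty{q}{\fk}$, proving the "effectively constructed" clause as well.

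The key organizing idea is the case analysis summarized in Fig.~\ref{fig:overview}, driven by a single foreign key $\foreignkey{R}{i}{S}$ selected for removal. First I would handle the \emph{weak} case ($i\leq k$): here the referencing position is a primary-key position, so the foreign key can be discharged more directly (Lemma~\ref{lem:nonProper}). For \emph{strong} foreign keys, I would classify by the obedience status of the $R$-atom and the $S$-atom, giving the three rows \opo, \dpd, and \dpo. The obedient/obedient case (Lemma~\ref{lem:opo}) should be the cleanest: obedience of both endpoints means, by Theorem~\ref{thm:syntactic-obedience}, that the referenced positions carry fresh distinct variables with no constants and no cross-over to the complement, so inserted tuples can be filled with fresh nulls that never help satisfy $q$; intuitively the foreign key imposes no real constraint and can be dropped. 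The disobedient/disobedient (Lemma~\ref{lem:dpd}) and disobedient/obedient (Lemmas~\ref{lem:dpo} and~\ref{lem:constantkey}) cases require more care, because disobedience signals that insertions \emph{can} interact with query satisfaction; the reduction must simulate the repair's choice of deletion-versus-insertion by a local first-order test on the input instance.

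The delicate point throughout is \emph{invariance of the hypotheses}. For the reductions to compose, I must verify that after removing one foreign key, the residual pair $(q',\fk')$ still has an acyclic attack graph and still has no block-interference. The assumption that $(q,\fk)$ has no block-interference is exactly what rules out the dangerous configurations from Section~\ref{sec:bi} (e.g.\ an obedient referenced atom reached by a strong foreign key whose referencing atom is disobedient via condition~\eqref{it:marking} or~\eqref{it:bifour}); so I expect that the \dpo case, which is precisely where block-interference would threaten, can only occur in the benign sub-configurations permitted by the negation of Definition~\ref{def:block-int}. This is why the \dpo row needs the auxiliary Lemma~\ref{lem:constantkey}: when the referenced atom is obedient but the referencing atom is disobedient, the \emph{reason} for disobedience must be a constant or a repeated variable at a referencing position (not the block-interfering pattern), and the reduction exploits that constant/repetition to decide satisfaction locally. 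I would also need to confirm termination, i.e.\ that some well-founded measure (say $|\fk|$, or $|\lclosure{\fk}|$) strictly decreases and that the chosen foreign key for elimination can always be picked so that the reduction's preconditions hold; a natural choice is to eliminate a foreign key that is "outermost" in the dependency graph (e.g.\ referencing a relation with no further outgoing strong foreign keys), so that the closures $\posclosure{P}{\fk}$ shrink predictably.

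The main obstacle will be the \dpo case and its correctness argument. Establishing that the repairs of the reduced instance correspond exactly to the repairs of the original instance \emph{relative to the query's truth value} is subtle precisely because $\sd$-repairs can both insert and delete, and because a single disobedient position can make an inserted $S$-tuple query-relevant. The hard part is showing that, under the no-block-interference hypothesis, this relevance is always detectable by a bounded (first-order) inspection of the input rather than requiring the unbounded block-to-block propagation exhibited in the motivating example of Section~\ref{sec:bi}. I would isolate this via the syntactic characterization in Theorem~\ref{thm:syntactic-obedience} and Corollary~\ref{cor:obedient-position}, reducing the global repair reasoning to per-position local conditions, and then argue that the negation of each clause of Definition~\ref{def:block-int} supplies exactly the local witness (a constant, a repeated variable, or a functional dependency forcing uniqueness via the set $V$) that the first-order rewriting can test.
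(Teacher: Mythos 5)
Your proposal follows essentially the same route as the paper's proof: a composition of $\FO$ many-one reductions that remove foreign keys by type (weak first, then strong keys of types $\opo$, $\dpd$, $\dpo$) while preserving acyclicity of the attack graph and absence of block-interference as invariants, terminating in a problem $\certainty{q''}{\emptyset}$ that is in $\FO$ by Theorem~\ref{thm:koutrisw}, with the consistent first-order rewriting obtained by composing the reductions. The one detail where your sketch diverges is the trigger for Lemma~\ref{lem:constantkey}: the paper invokes it whenever some atom $N$ satisfies $\keyvars{N}=\emptyset$ (an all-constant primary key), not as a function of the ``reason'' for disobedience of the referencing atom, though your closing remark about the set $V$ and functional dependencies with empty left-hand side already points at exactly this case.
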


We sketch how the previous lemma is proved (see Appendix \ref{sect:apE} for full details).
For two decision problems $P_{1}$ and $P_{2}$, we write $P_{1}\reducesTo{\FO}P_{2}$ if there exists a first-order many-one reduction from $P_{1}$ to $P_{2}$.

Let $q$ and $\fk$ be as stated in Lemma~\ref{lem:inFO} such that the attack graph of $q$ is acyclic and $(q,\fk)$ has no block-interference.
The proof strategy is to show that one can construct a query $q'$ in $\sjfbcq$ such that $q'$ has an acyclic attack graph and 
\begin{equation}\label{eq:emptyfk}
\certainty{q}{\fk}\reducesTo{\FO}\certainty{q'}{\emptyset}.
\end{equation}
Since the latter problem has an empty set of foreign keys, it is in $\FO$ by Theorem~\ref{thm:koutrisw}. 

Equation~\eqref{eq:emptyfk} is shown by a composition of first-order reductions, each of which removes at least one foreign key, and some of which remove obedient atoms or replace variables with constants.
The helping lemmas that define these reductions are summarized in Fig.~\ref{fig:overview} and are given in Appendix~\ref{sect:apE}.
We distinguish between four \emph{types of foreign keys}.
A \emph{strong} foreign key $\foreignkey{R}{i}{S}$ is of a type in $\{\opo, \dpd, \dpo\}$, depending on whether the $R$-atom or $S$-atom are obedient (symbol $\obed$) or disobedient (symbol $\nobed$).
Note that there is no type $\opd$, because if the $R$-atom is obedient and the foreign key is strong, then the $S$-atom is necessarily obedient as well.
For weak foreign keys there is only one type, denoted $\weak$.

Note in Definition~\ref{def:block-int} that only foreign keys of type $\dpo$ can be block-interfering.
Unsurprisingly, the requirement, in Lemma~\ref{lem:inFO}, that $(q,\fk)$ has no block-interference is used in (and only in) the helping Lemma~\ref{lem:dpo} that shows the removal of foreign keys of type $\dpo$.

It becomes apparent from the proofs of the helping lemmas that whenever $\certainty{q}{\fk}$ is in $\FO$, its consistent first-order rewriting is very similar to that of $\cqa{q}$~\cite{DBLP:journals/tods/KoutrisW17}, except for obedient atoms referenced by strong foreign keys.
For example, consider  $q=\{N(\underline{c},y)$, $O(\underline{y})$,  $P(\underline{y})\}$ with $\fk=\{\foreignkey{N}{2}{O}\}$, where $O$ is referenced but $P$ is not. The following is a consistent first-order rewriting for $\certainty{q}{\fk}$:
$$
\exists y\formula{N(\underline{c},y)\land O(\underline{y})}
\land\forall y\formula{N(\underline{c},y)\rightarrow P(\underline{y})}.
$$
Note the asymmetric treatment of $O$ and $P$ in the above formula.
In this respect, it is instructive to note that the following database instance satisfies the previous formula and hence is a ``yes''-instance.
However, removing either $P(\underline{a})$ or $P(\underline{b})$ turns it into a ``no''-instance. 
$$
\begin{array}{ccc}
\begin{array}[t]{c|ccc}
N & \underline{c} & y\\\cline{2-3}
  & c & a\\
  & c & b\\\cdashline{2-3}
\end{array}
&
\begin{array}[t]{c|c}
O & \underline{y}\bigstrut\\\cline{2-2}
  & a\\\cdashline{2-2}
\end{array}
&
\begin{array}[t]{c|c}
P & \underline{y}\bigstrut\\\cline{2-2}
  & a\\\cdashline{2-2}
  & b\\\cdashline{2-2}
\end{array}
\end{array}
$$

%
%

\section{Discussion}\label{sec:discussion}

While CQA for primary keys was successfully studied in the past 15~years,
CQA with respect to both primary and foreign keys remained largely unexplored. 
We made a significant contribution by tracing the $\FO$-boundary in the set  
$\{\certainty{q}{\fk}\mid\mbox{$q\in\sjfbcq$ and $\fk$ is about $q$}\}$,
under the restriction that foreign keys are unary (but primary keys can be composite).
If $\fk=\emptyset$, then these problems only have primary-key constraints, in which case a complete complexity classification in $\FO$, $\L$-complete, and $\coNP$-complete is already known~\cite{DBLP:conf/pods/KoutrisW20}.
For non-empty sets $\fk$,  a complete complexity classification beyond $\FO$ is left open.
Our paper nevertheless shows that the complexity landscape is more diverse than for primary keys alone, as Propositions~\ref{pro:nlc} and~\ref{pro:pc} show that there are $\NL$-complete and $\P$-complete problems in the above set of problems.

It is an open research task to release our restrictions that foreign-keys are unary and are about the query, as discussed next.

\begin{itemize}
\item
Our assumption that all foreign keys are unary excludes, for example, a query with atoms
$R(\underline{x},y,z)$, $S(\underline{x,z},y)$ and foreign key $\foreignkey{R}{1,3}{S}$.
The difficulty here is that the foreign key covers both a primary-key and a non-primary-key position of~$R$. 
In future research, we will investigate how our constructs of obedience and block-interfering can be generalized to composite foreign keys.
\item
Our assumption that all foreign keys are about the query excludes, for example, the problem in the following Proposition~\ref{pro:notabout}, because $q=\{E(\underline{x},y)\}$ does not satisfy $\foreignkey{E}{2}{E}$ (when $x$ and $y$ are treated as distinct constants).

\begin{proposition}\label{pro:notabout}
Let $q=\{E(\underline{x},y)\}$ and $\fk=\{\foreignkey{E}{2}{E}\}$.
Then, $\certainty{q}{\fk}$ is $\NL$-hard.
\end{proposition}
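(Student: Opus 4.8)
The plan is to reduce everything to a directed-cycle test and then to reduce graph reachability to that test. First I would note that $q=\{E(\underline{x},y)\}$ merely asserts that $E$ is nonempty, so a repair $\rep$ falsifies $q$ exactly when $\rep=\emptyset$. Since $\emptyset$ vacuously satisfies $\fk\cup\pk$, the input $\db$ is a ``no''-instance of $\certainty{q}{\fk}$ if and only if $\emptyset$ is a $\sd$-repair of $\db$. Unfolding the definition, a subinstance $\sep$ satisfies $\sep\closerneq{\db}\emptyset$ iff $\db\sd\sep\subsetneq\db\sd\emptyset=\db$, which is equivalent to $\emptyset\neq\sep\subseteq\db$; in particular, tuple insertions are irrelevant here, since every competitor of $\emptyset$ under $\closerneq{\db}$ must be a subset of $\db$. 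Hence $\emptyset$ is \emph{not} a repair precisely when $\db$ possesses a nonempty subinstance satisfying $\fk\cup\pk$, so $\db$ is a ``yes''-instance of $\certainty{q}{\fk}$ iff such a subinstance exists.

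Next I would read $\db$ as a directed graph $G$ whose vertices are the constants occurring in $\db$, with an edge from $a$ to $b$ for each fact $E(\underline{a},b)\in\db$. A nonempty $\sep\subseteq\db$ satisfies the primary key iff every vertex has out-degree at most $1$ in $\sep$, and it satisfies $\foreignkey{E}{2}{E}$ iff every vertex occurring as the target of some edge of $\sep$ also has an outgoing edge in $\sep$. Starting from any edge of such a $\sep$ and repeatedly following the (unique, by the primary key) outgoing edge of the current target yields an infinite walk in a finite vertex set, which must revisit a vertex; the edges traversed between two consecutive visits form a directed cycle of $G$ contained in $\db$. Conversely, the edge set of any directed cycle of $G$ is a nonempty subinstance satisfying $\fk\cup\pk$. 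I therefore obtain the characterization: $\db$ is a ``yes''-instance of $\certainty{q}{\fk}$ if and only if $G$ contains a directed cycle.

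Finally I would establish $\NL$-hardness by reducing $s$-$t$ reachability on \emph{directed acyclic} graphs, which is well known to be $\NL$-complete. Given a DAG $G$ with distinguished vertices $s,t$, let $\db$ contain one fact $E(\underline{a},b)$ for each edge $(a,b)$ of $G$, together with the single extra fact $E(\underline{t},s)$. Because $G$ is acyclic, the graph of $\db$ is $G$ augmented by the back-edge $t\to s$, so every directed cycle of $\db$ must traverse that back-edge, and such a cycle exists iff there is a path from $s$ to $t$ in $G$ (the case $s=t$ being trivially consistent, as $E(\underline{s},s)$ is a self-loop). Thus $\db$ is a ``yes''-instance iff $t$ is reachable from $s$, and the map $G\mapsto\db$ is plainly first-order (indeed logspace) computable, giving $\NL$-hardness (so $\certainty{q}{\fk}\notin\FO$); combined with the cycle characterization it also shows the problem is in fact $\NL$-complete.

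The argument is almost entirely routine; the only point demanding care is the bookkeeping in the first two paragraphs, namely justifying that competitors of $\emptyset$ under $\closerneq{\db}$ are subsets of $\db$ and that the primary-key and foreign-key conditions \emph{together} force a genuine directed cycle rather than merely an unbounded walk. The one genuine design choice is restricting the source problem to acyclic graphs, which is what guarantees that the added back-edge $t\to s$ is the sole possible participant in a cycle and thereby rules out spurious ``yes''-answers.
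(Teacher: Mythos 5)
Your proof is correct. Since $q$ asserts only that $E$ is nonempty, your reduction of the problem to directed-cycle detection is sound: the computation that $\sep\closerneq{\db}\emptyset$ forces $\emptyset\neq\sep\subseteq\db$ is right, the primary key gives out-degree at most one, the foreign key $\foreignkey{E}{2}{E}$ forces every edge target to have an outgoing edge, and the resulting infinite walk in a finite graph yields a simple directed cycle, whose edge set is conversely a consistent nonempty subinstance. The back-edge device (adding $E(\underline{t},s)$ to a DAG so that cycles correspond exactly to $s$--$t$ paths) is precisely the trick the paper uses in its general $\NL$-hardness argument for block-interference (Lemma~\ref{lem:bi} and Fig.~\ref{fig:reach}); the paper itself states Proposition~\ref{pro:notabout} without a written proof, so your argument fills that gap in the same spirit but by a genuinely simpler route: because falsifying repairs collapse to $\emptyset$, you get a \emph{direct} reduction from reachability (no complementation, hence no appeal to $\NL=\coNL$ is even needed), and your cycle characterization additionally yields $\NL$-membership, hence $\NL$-completeness, which the paper does not claim. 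One cosmetic point: the opening claim that a repair falsifies $q$ exactly when $\rep=\emptyset$ tacitly assumes $\db$ contains only $E$-facts (otherwise a falsifying repair retains one fact per block of the other relations); this is harmless since your reduction produces such instances, but it is worth saying explicitly, together with the observation you essentially make that a repair never inserts facts gratuitously, since deleting any inserted fact not needed for $\fk$ stays consistent and is strictly $\sd$-closer to $\db$.
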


Concerning the previous proposition, 
note that every conjunctive query $q'$ that includes~$q$ and satisfies $\fk$ contains a self-join.
The shortest such a query is $q'=\{E(\underline{x},y), E(\underline{y},x)\}$.
CQA for conjunctive queries with self-joins is a notorious open problem, even in the absence of foreign keys.
\end{itemize}


\section*{Acknowledgments}
Miika Hannula has been supported by Academy of Finland grants 308712 and 322795.

\bibliographystyle{unsrt}
\bibliography{dblp}

\appendix
\section{Helping Notions and Lemmas}\label{sec:apHelp}

In this section, we define more preliminary notions and helping lemmas. The following definitions are relative to a database instance~$\db$,  a query $q$ in $\sjfbcq$, and a set $\fk$ of foreign keys.

We write $\adom{\db}$ for the set of constants that occur in $\db$, also called its \emph{active domain}.

A variable~$x\in\queryvars{q}$ is called \emph{orphan (in $q$)} if $x$ occurs only once in~$q$, and this single occurrence is at a non-primary-key position.
Similarly, a constant~$c$ in~$\db$ is called \emph{orphan (in $\db$)} if $c$ occurs only once in~$\db$, and this single occurrence is at a non-primary-key position.

Two variables $x,y\in\queryvars{q}$ are said to be \emph{connected in $q$} if $x=y$ or there exists a sequence $x_{0},x_{1},\ldots,x_{\ell}$ of variables in $\queryvars{q}$ such that $x_{0}=x$, $x_{\ell}=y$, and every two adjacent variables occur together in some atom of~$q$.

We say that a fact $A$ of $\db$ is \emph{relevant for $q$ in $\db$} if there exists a valuation $\theta$ over $\queryvars{q}$ such that $A\in\theta(q)\subseteq\db$ (and therefore $A\in\db$); otherwise $A$ is \emph{irrelevant}. 
A block of $\db$ is \emph{relevant} if it contains at least one relevant fact. 

We write $\restrict{\db}{q}$ for the restriction of $\db$ to those facts whose relation name occurs in~$q$.
We write $\restrict{\fk}{q}$ for the set of those foreign keys in $\fk$ that only use relation names in $q$. Clearly, if $\fk$ is about $q$, then $\restrict{\fk}{q}=q$. 

If $R$ is a relation name with signature~$\signature{n}{1}$, then the (weak) foreign key $\foreignkey{R}{1}{R}$ is called \emph{trivial}, because it cannot be falsified. 
If $\fk$ is a set of foreign keys and $R$ a relation name, then 
$\outgoing{R}{\fk}$ is the set of foreign keys in $\fk$ that are outgoing from $R$, and $\incoming{R}{\fk}$ is the set of foreign keys in $\fk$ that are referencing~$R$.

\begin{lemma}\label{lem:relevantblockter}
Let $\pk\cup\fk$ be a set of primary keys and foreign keys.
Let $\db$ be a (possibly inconsistent) database instance, and let $\rep$ be a $\sd$-repair of $\db$.
Let $\sep$ be a database instance such that $\sep\subseteq\rep\cup\db$ and $\sep\models\pk\cup\fk$.
For every fact $A\in\sep\cap\db$,
there is a fact $A'\in\rep\cap\db$ such that $A'\keyeq A$.
\end{lemma}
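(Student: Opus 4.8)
The plan is to argue by contradiction, exploiting the minimality of the repair $\rep$ (condition \emph{(ii)} in the definition of a $\sd$-repair). Suppose the conclusion fails, and let $M$ be the set of all facts $A\in\sep\cap\db$ such that no fact of $\rep\cap\db$ is key-equal to $A$; by assumption $M\neq\emptyset$. Intuitively, $M$ collects the ``witness'' facts living in blocks of $\db$ that $\sep$ retains but $\rep$ abandons. From $\rep$ I will build a database instance $\rep'$ that is strictly $\sd$-closer to $\db$ and still satisfies $\pk\cup\fk$, contradicting the fact that $\rep$ is a $\sd$-repair.

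First I would define the candidate. Let $Z$ be the set of facts in $\rep\setminus\db$ (the ``invented'' facts of $\rep$) that are key-equal to some fact of $M$, and put $\rep':=(\rep\setminus Z)\cup M$. Since $M\subseteq\db$ and $Z\subseteq\rep\setminus\db$ are disjoint and $M\cap\rep=\emptyset$, a short set computation gives $\db\sd\rep'=(\db\sd\rep)\setminus(M\cup Z)$, so $\rep'\closerneq{\db}\rep$ because $M\neq\emptyset$. Satisfaction of the primary keys is also routine, using $\rep\models\pk$ and $\sep\models\pk$ (each gives at most one fact per block): a block already met by a fact $P\in\rep\cap\db$ contributes no fact to $M$, so nothing is added there and $P$ is retained; a block met by $M$ carries exactly one $M$-fact, and the at most one fact of $\rep$ sitting there must be invented (the block is not met by $\rep\cap\db$), hence lies in $Z$ and is removed. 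Thus every block of $\rep'$ carries exactly one fact, i.e.\ $\rep'\models\pk$.

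The main obstacle is showing $\rep'\models\fk$, because foreign keys cascade: inserting a fact of $M$ forces its foreign-key targets to be present, and these need not belong to $\rep$. The key observation is that key-equal facts agree on all primary-key positions, and the referenced attribute of a foreign key is a primary-key position of the referenced relation; hence any fact key-equal to a required target can itself serve as that target. I would then split into two checks. For a fact surviving from $\rep\setminus Z$, the only danger is that one of its targets was deleted together with $Z$; but every $C\in Z$ is key-equal to some $A\in M\subseteq\rep'$, and this $A$ serves as the replacement target, while the freshly added facts of $M$ can never destroy non-danglingness. For a newly inserted $A\in M\subseteq\sep$, fix a foreign key outgoing from $A$ and let $B\in\sep\subseteq\rep\cup\db$ be the target witnessing $\sep\models\fk$; a case analysis on $B$ (whether $B\in\rep$ or $B\in\db\setminus\rep$, invented or original, and whether its block is met by $\rep\cap\db$ or contributes to $M$) shows in every case that $\rep'$ contains a fact key-equal to $B$, so $A$ is not dangling in $\rep'$. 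Hence $\rep'\models\pk\cup\fk$ and $\rep'\closerneq{\db}\rep$, contradicting the minimality of $\rep$; therefore $M=\emptyset$, which is exactly the claim of the lemma.
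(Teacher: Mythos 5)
Your proof is correct and follows essentially the same strategy as the paper's: both modify $\rep$ into a strictly $\sd$-closer instance satisfying $\pk\cup\fk$ --- inserting the offending $\db$-facts of $\sep$, trading away key-equal invented facts of $\rep\setminus\db$, and using the fact that a key-equal substitute serves equally well as a foreign-key target --- and then contradict the $\closer{\db}$-minimality of $\rep$. The only difference is presentational: the paper iterates its three-case surgery over all of $\sep\setminus\rep$ and concludes that only the no-op case can ever apply, whereas you restrict the surgery to the violating set $M$ (together with its key-equal invented counterparts $Z$) and derive the contradiction directly.
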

\begin{proof}
Let $\sep\setminus\rep=\{A_{1}, A_{2},\dots, A_{n}\}$.
Since $\sep\subseteq\rep\cup\db$, each $A_{i}$ belongs to $\db$.
Let $\tep_{0}\defeq\rep$.
For $i=1,2,\ldots,n$,
\begin{enumerate}
\item\label{it:relevantnop}
if $\rep\cap\db$ contains a fact that is key-equal to $A_{i}$, 
let $\tep_{i}\defeq\tep_{i-1}$;
\item\label{it:relevantreplace}
if $\rep\setminus\db$ contains a fact $A_{i}'$ that is key-equal to $A_{i}$,
let $\tep_{i}\defeq\formula{\tep_{i-1}\setminus\{A_{i}'\}}\cup\{A_{i}\}$; and
\item\label{it:relevantinsert}
if $\rep$ contains no fact that is key-equal to $A_{i}$,
let $\tep_{i}\defeq\tep_{i-1}\cup\{A_{i}\}$.
\end{enumerate}
Let $\tep\defeq\tep_{n}$.
From $\rep\models\pk$ and $\sep\models\pk$, it follows $\tep\models\pk$ by construction.

We show that $\tep\models\fk$.
To this end, let $\foreignkey{R}{i}{S}$ be a foreign key in $\fk$, and let $R(a_{1},\dots,a_{n})$ be a fact in $\tep$.
If $R(a_{1},\dots,a_{n})\in\rep$, then this foreign key is satisfied by $\tep$ because $\rep\models\fk$ and, by construction, every fact in $\rep$ is key-equal to a fact in~$\tep$. 
Assume next that $R(a_{1},\dots,a_{n})\in\sep\setminus\rep$. 
Since $R(a_{1},\dots,a_{n})\in\sep$ and $\sep\models\fk$,
$\sep$~contains a fact $S(\underline{a_{i}},\filler)$.
By construction, $\tep$ will contain a fact that is key-equal to $S(\underline{a_{i}},\filler)$.

By construction,
$\rep\cap\db\subseteq\tep$ and  $\tep\subseteq\rep\cup\db$.
It follows $\tep\closer{\db}\rep$.
If~\eqref{it:relevantreplace} or~\eqref{it:relevantinsert} are applied once or more, then $\tep\closerneq{\db}\rep$, contradicting that $\rep$ is a $\sd$-repair.
It follows that only~\eqref{it:relevantnop} applies,
which means that for every $A\in\sep\cap\db$, $\rep$ contains a fact of $\theblock{A}{\db}$.
\end{proof}

\begin{lemma}\label{lem:norepair}
Let $q$ be a query in $\sjfbcq$, and $\fk$ a set of foreign keys that is satisfied by $q$ (when distinct variables are treated as distinct constants).
Let $\db$ be a (possibly inconsistent) database instance.
Let $\rep$ be a database instance that satisfies $\fk\cup\pk$.
Let $\theta$ be a valuation over $\queryvars{q}$ satisfying the following conditions:
\begin{enumerate}
\item\label{it:allindb}
$\theta(q)\subseteq\db\cup\rep$; and
\item\label{it:outside}
there is a fact $A\in\theta(q)\setminus\rep$ such that $\rep\cap\db$ contains no fact that is key-equal to $A$.
\end{enumerate}
Then $\rep$ is not a $\sd$-repair.
\end{lemma}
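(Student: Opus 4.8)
The plan is to argue by contradiction, reducing to the immediately preceding Lemma~\ref{lem:relevantblockter}. So I would assume that $\rep$ \emph{is} a $\sd$-repair of $\db$ and derive a contradiction by instantiating that lemma with the choice $\sep\defeq\theta(q)$.

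First I would check that $\sep=\theta(q)$ satisfies the hypotheses of Lemma~\ref{lem:relevantblockter}. The inclusion $\sep\subseteq\rep\cup\db$ is exactly condition~\eqref{it:allindb}. The remaining requirement $\sep\models\pk\cup\fk$ splits into two parts. For $\pk$, the key point is that $q$ is self-join-free, so it has at most one atom per relation name; hence $\theta(q)$ has at most one fact per relation name and can contain no two distinct key-equal facts, whence $\theta(q)\models\pk$. For $\fk$, I would use that valuations preserve satisfaction of foreign keys: since $\fk$ is satisfied by $q$ (treating distinct variables as distinct constants), each $R$-atom of $q$ carrying a term $t$ at the position of a foreign key $\foreignkey{R}{i}{S}$ is accompanied by an $S$-atom carrying $t$ at its referenced primary-key position; applying $\theta$ sends this witnessing pair of atoms to a witnessing pair of facts in $\theta(q)$, so no fact of $\theta(q)$ is dangling and $\theta(q)\models\fk$.

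With the hypotheses verified, I would pick the fact $A\in\theta(q)\setminus\rep$ guaranteed by condition~\eqref{it:outside}. Condition~\eqref{it:allindb} forces $A\in\db$ (as $A\in\theta(q)\subseteq\db\cup\rep$ and $A\notin\rep$), so $A\in\theta(q)\cap\db=\sep\cap\db$. Lemma~\ref{lem:relevantblockter}, applicable under the contradiction hypothesis, then produces a fact $A'\in\rep\cap\db$ with $A'\keyeq A$, contradicting the second half of condition~\eqref{it:outside}, namely that $\rep\cap\db$ contains no fact key-equal to $A$. This contradiction establishes that $\rep$ is not a $\sd$-repair. The proof is essentially a one-line appeal to Lemma~\ref{lem:relevantblockter}, so I anticipate no serious obstacle; the only step needing care is the verification that $\theta(q)\models\pk\cup\fk$. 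Should one wish to avoid citing Lemma~\ref{lem:relevantblockter}, its constructive proof (building a strictly $\sd$-closer consistent instance by folding the facts of $\theta(q)\cap\db$ into $\rep$) can be inlined here instead.
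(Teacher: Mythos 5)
Your proposal is correct and follows exactly the paper's own argument: assume towards a contradiction that $\rep$ is a $\sd$-repair, instantiate Lemma~\ref{lem:relevantblockter} with $\sep\defeq\theta(q)$ (noting $\theta(q)\models\pk\cup\fk$ because $q$ is self-join-free and satisfies $\fk$), and contradict condition~\eqref{it:outside}. Your verification that $\theta(q)\models\pk\cup\fk$ merely spells out a step the paper states in one line, so the two proofs coincide in substance.
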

\begin{proof}
Since $q\models\fk$ and since $q\in\sjfbcq$,
we have $\theta(q)\models\pk\cup\fk$.
Assume towards a contradiction that $\rep$ is a $\sd$-repair.
By Lemma~\ref{lem:relevantblockter}, 
for every fact $A\in\theta(q)\cap\db$, there is a fact $A'\in\rep\cap\db$ such that $A'\keyeq A$, contradicting~\eqref{it:outside}.
\end{proof}

\begin{corollary}\label{cor:relevantblockbis}
Let $\fk$ be a set of foreign keys.
Let $q$ be a query in $\sjfbcq$ that satisfies $\fk$ (when distinct variables are treated as distinct constants).
Let $N$ be a relation name that occurs in $q$.
For every database instance $\db$,
if $\block$ is an $N$-block of $\db$ that is relevant for $\fkclosure{N}{q}{\fk}$ in $\db$,
then every $\sd$-repair of $\db$ contains a fact from $\block$.
\end{corollary}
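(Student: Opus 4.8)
The plan is to reason by contradiction and apply Lemma~\ref{lem:norepair} to the subquery $q''\defeq\fkclosure{N}{q}{\fk}$. First I would unfold the relevance hypothesis: by definition there is a fact $A\in\block$ and a valuation $\theta$ over $\queryvars{q''}$ with $A\in\theta(q'')\subseteq\db$. Since $A$ is an $N$-fact and $q''$ is self-join-free (being a subset of the self-join-free $q$), $A$ must be the image $\theta(F_N)$ of the unique $N$-atom $F_N$ of $q$, so in particular $F_N\in q''$. Now let $\rep$ be an arbitrary $\sd$-repair of $\db$ and suppose, towards a contradiction, that $\rep$ contains no fact of $\block$.

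I would then check the two hypotheses of Lemma~\ref{lem:norepair} for the query $q''$, the valuation $\theta$, and the instance $\rep$ (which satisfies $\pk\cup\fk$, being a repair). Condition~\eqref{it:allindb} is immediate, as $\theta(q'')\subseteq\db\subseteq\db\cup\rep$. For condition~\eqref{it:outside} I would use the fact $A$: it lies in $\theta(q'')\setminus\rep$ because $A\in\block$ while $\rep\cap\block=\emptyset$, and $\rep\cap\db$ contains no fact key-equal to $A$, since by maximality of blocks any such fact would belong to $\theblock{A}{\db}=\block$, again contradicting $\rep\cap\block=\emptyset$. Lemma~\ref{lem:norepair} would then declare $\rep$ not a $\sd$-repair, which is the sought contradiction; as $\rep$ was arbitrary, every $\sd$-repair contains a fact of $\block$.

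The missing ingredient, and the step I expect to be the main obstacle, is the standing hypothesis of Lemma~\ref{lem:norepair} that $\fk$ be satisfied by the query to which it is applied. Here this amounts to showing that $q''$ satisfies $\fk$ (when distinct variables are treated as distinct constants), which in turn reduces to showing that $q''=\fkclosure{N}{q}{\fk}$ is \emph{closed} under $\fk$: for every $\foreignkey{R}{i}{S}\in\fk$ whose source atom $R$ lies in $q''$, the referenced atom $S$ lies in $q''$ as well. Granting this closure, $q''\models\fk$ follows from $q\models\fk$, since the matching $R$- and $S$-atoms of $q$ are retained in $q''$.

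To prove the closure I would use the structure of the dependency graph. Every atom $R\neq N$ of $q''$ must have entered $\posclosure{P_N}{\fk}$ through a foreign-key edge, and each such edge is incident to \emph{every} position of $R$ at once; hence all positions of $R$ belong to $\posclosure{P_N}{\fk}$, so any foreign key outgoing from $R$ targets a position that is again in the closure, placing $S$ in $q''$. For $N$ itself, every strong foreign key emanates from a non-primary-key position, which lies in $P_N$ by construction, so its target is likewise captured. The genuinely delicate case is that of weak foreign keys outgoing from $N$: these emanate from primary-key positions that need not be reachable from $P_N$, so their targets may fall outside $q''$. I expect this case to demand the most care --- bringing in $q\models\fk$ and the self-join-freeness of $q$ to argue that the witnessing instance $\theta(q'')$ is nonetheless $\fk$-consistent --- since without it the invocation of Lemma~\ref{lem:norepair} would not be justified.
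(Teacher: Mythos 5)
Your route is, at bottom, the paper's route: the paper applies Lemma~\ref{lem:relevantblockter} directly to the witnessing instance $\sep=\theta(\fkclosure{N}{q}{\fk})$, and your invocation of Lemma~\ref{lem:norepair} is just the packaged form of that same exchange argument. All the steps you complete are correct: conditions~\eqref{it:allindb} and~\eqref{it:outside} are checked exactly right (in particular the observation that any fact of $\db$ key-equal to $A$ lies in $\theblock{A}{\db}=\block$), and your closure analysis is sound for strong foreign keys out of $N$ (they emanate from $P_N$) and for every atom $R\neq N$ of $\fkclosure{N}{q}{\fk}$ (such an atom enters the closure through a foreign-key edge, and each edge is incident to \emph{all} positions of its target, so every outgoing foreign key of $R$, weak or strong, stays inside the closure). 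The one step you leave open --- weak foreign keys outgoing from $N$ itself --- is where your proof is incomplete; the paper's own proof disposes of exactly this point with the words ``it is easily verified.''

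You should know, however, that the case you flagged cannot be closed as the statement stands, so your proof has a gap only in the sense that the corollary does too. Since $\fkclosure{N}{q}{\fk}$ is seeded at the \emph{non-primary-key} positions of $N$, a weak foreign key out of $N$'s primary key can exit the closure, and then $\theta(\fkclosure{N}{q}{\fk})$ is genuinely dangling. Concretely, take $q=\{N(\underline{x},y), S(\underline{x})\}$, $\fk=\{\foreignkey{N}{1}{S}\}$, and $\db=\{N(\underline{a},b)\}$. Here $q$ satisfies $\fk$, $\fkclosure{N}{q}{\fk}=\{N(\underline{x},y)\}$, and the block $\{N(\underline{a},b)\}$ is relevant for it; yet $\emptyset$ is a $\sd$-repair of $\db$ (the only instance strictly $\closer{\db}$-closer to $\db$ would be $\db$ itself, which is dangling) and it contains no fact of the block. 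So no argument that $\theta(q'')\models\fk$ exists in general, and hunting for one via $q\models\fk$ and self-join-freeness, as you propose, must fail. The corollary is sound --- and your proof complete --- precisely when no weak foreign key leaves $N$, e.g.\ when all foreign keys in $\fk$ are strong (which, after Lemma~\ref{lem:nonProper}, covers the paper's uses in Lemmas~\ref{lem:opo} and~\ref{lem:dpo}), or when $\block$ is relevant for the full query $q$, in which case $\theta(q)\models\pk\cup\fk$ holds outright and Lemma~\ref{lem:relevantblockter} applies directly (this covers the use in Lemma~\ref{lem:bi}). The honest fix is to add one of these hypotheses to the statement, not to strengthen the argument.
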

\begin{proof}
Let $\db$ be a database instance.
Let $\theta$ be a valuation over $\queryvars{\fkclosure{N}{q}{\fk}}$ such that $\theta(\fkclosure{N}{q}{\fk})\subseteq\db$.
Let $A$ be the $N$-fact in $\theta(q)$.
Since it is easily verified that $\theta(\fkclosure{N}{q}{\fk})$ is consistent with respect to $\fk$ and primary keys, it follows by Lemma~\ref{lem:relevantblockter} that every $\sd$-repair contains a fact of $\theblock{A}{\db}$.
\end{proof}
\section{Proofs for Section~\ref{sec:bi}}\label{sect:apA}

In this section we show that the concept of obedience can be characterized in syntactic terms (Theorem~\ref{thm:syntactic-obedience}). The proof relies on Lemma~\ref{lem:gadget} which is proven using the below chase rule.

 Let $\fk$ be a set of unary foreign keys that is about some query $q$ in $\sjfbcq$. Let $\db$ be a database, and let $C\subseteq \adom{\db}$. Let $\bot,\top$ be two fresh constant (i.e., $\bot,\top\notin \queryconst{q}\cup\adom{\db}$).
 Consider the following non-deterministic chase rule.
 
 \noindent
 \textbf{Chase rule.} If $T(\underline{a_1,\dots ,a_k},a_{k+1},\dots ,a_m)\in \db$ is dangling with respect to $\foreignkey{T}{i}{U}\in \fk$, extend $\db$ with $U(\underline{a_i},b_2, \dots ,b_{m})$, where  $\{b_2, \dots ,b_{m}\}\subseteq \hat C$. 

Above, we say that $U(\underline{a_i},b_2, \dots ,b_{m})$ is  \emph{generated} by $\foreignkey{T}{i}{U}$ and $T(\underline{a_1,\dots ,a_k},a_{k+1},\dots ,a_m)$.  
 Denote by $\cchase{\db}{\fk}{C}$ the set of all database instances obtained from $\db$ by applying the chase rule as many times as possible with respect to foreign key set $\fk$ and constant set $C$. Note that each database in $\cchase{\db}{\fk}{C}$ is finite.

Given positions $(R,i)$ and $(S,j)$, consider also the following additional restrictions for the chase rule:
\begin{enumerate}
\item\label{it:restone} 
 $b_2= \dots =b_{m}=a_i$; except that $ b_i\neq a_i$ if $U=R$.
\item\label{it:resttwo}  $b_2, \dots ,b_{m}\in C$; except that $b_i=\bot$ if $U=R$, and $b_j=\bot$ if $U=S$.
\end{enumerate}
We denote by $\Cchase{\db}{\fk}{(R,i)}{C}{1}$ (resp. \\$\Cchase{\db}{\fk}{(R,i),(S,j)}{C}{2}$) the set of all $\db\in \cchase{\db}{\fk}{C}$ whose construction obeys restriction \eqref{it:restone} (resp. restriction \eqref{it:resttwo}) over position $(R,j)$ (resp. positions $(R,i)$ and $(S,j)$) of the chase rule.

\begin{definition}
\label{def:gadget}
Let $q$ be a query in $\sjfbcq$, and let $\fk$ be a set of unary foreign keys that is about $q$. Let $\db$ be a database instance, and let $A=R(\underline{\vec{a}},b_{k+1},\dots,b_n)\in \db$. Let $F=R(\underline{\vec{s}}, t_{k+1}, \dots ,t_{n})\in q$ be an atom over signature $[n,k]$, and let $P\subseteq \{(R,i)\mid i\in \{k+1,\dots ,n\}\}$ be a set of positions that does not satisfy some of the conditions listed in Theorem~\ref{thm:syntactic-obedience}.
Define $F^*\defeq R(\underline{\vec{s}},u_{k+1},\dots,u_{n})$, where for every $i\in\{k+1,\dots,n\}$, $u_{i}=t_{i}$ if $(R,i)\not\in P$, and $u_{i}=b_i$ otherwise.
 Define $q^*\defeq (q\setminus \{F\})\cup\{F^*\}$,
and 
\[
\db_{A,P}\defeq \db\setminus q^*,
\]
for a database instance $\db$ constructed as follows (depending on which condition of Theorem~\ref{thm:syntactic-obedience} is violated, and viewing $q^*$ as a database by interpreting its variables as constants):
\begin{enumerate}[label=(\alph*)]
\item If $P$ does not satisfy condition~\ref{it:strongweak},
\[
\db\in \Cchase{q^*}{{\fk}}{(R,i)}{C}{1},
\] 
for any $(R,i)\in P$ that belongs to a cycle in the dependency graph of $\fk$.
\item Otherwise, if $P$ does not satisfy condition~\ref{it:no-const} or~\ref{it:disjoint}, 
\[\db\in \cchase{q^*}{{\fk}}{C}.
\]
\item Otherwise, if $P$ does not satisfy condition~\ref{it:distinct}, 
\[
\db\in \Cchase{q^*}{{\fk}}{(R,i),(S,j)}{C}{2},
\]
for any two non-primary-key positions $(R,i),(S,j)\in \posclosure{P}{\fk}$ that are occupied in $q$ by the same variable.
\end{enumerate}
\end{definition}

For a database instance $\db$, define 
\[
\keyconst{\db}\defeq \adom{\{R'(\underline{\vec{a}})\mid \exists \vec{b}: R(\underline{\vec{a}},\vec{b})\in \db\}}.
\] In words, $\keyconst{\db}$ is the set of constants that appear at a primary-key position in $\db$. 
\begin{lemma}\label{lem:gadget}
Let $q$ be a query in $\sjfbcq$, and let $\fk$ be a set of unary foreign keys that is about $q$. Let $\db$ be a database instance, and let $A=R(\underline{\vec{a}},b_{k+1},\dots,b_n)\in \db$. Let $P\subseteq \{(R,i)\mid i\in \{k+1,\dots ,n\}\}$ be a set of positions that does not satisfy some of the conditions listed in Theorem~\ref{thm:syntactic-obedience}. Assume that $C\defeq \{b_i\mid (R,i)\in P\}$ consists of orphan constants of $\db$ that do not belong to $\queryconst{q}$. 
Then, the following holds:
\begin{enumerate}
\item\label{it:gadget1}
$\keyconst{\db}\cap \adom{\db_{A,P}}=\emptyset$; 
\item\label{it:gadget1.5}
$\adom{\db}\cap \adom{\db_{A,P}}\subseteq  C$; 
\item\label{it:gadget2}
$\db_{A,P}\models \pk\cup {\fk}$;
\item\label{it:gadget4}
$A$ is not dangling in $\{A\}\cup \db_{A,P}$ with respect to any $\foreignkey{R}{i}{S}\in{\fk}$ such that $(R,i)\in P$;
 and
\item\label{it:gadget3}
every fact of $\{A\}\cup \db_{A,P}$ is irrelelevant for $q$ in $ \db\cup \db_{A,P}$.
\end{enumerate}
\end{lemma}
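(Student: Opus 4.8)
The plan is to read all five statements off the combinatorial shape of the chase that builds $\db_{A,P}$, and to isolate the genuinely hard one, namely the irrelevance claim~\eqref{it:gadget3}. First I would pin down which facts and constants the chase can create. In $q^*$ the only atom that can be dangling is $F^*$, and only at its $P$-positions: foreign keys reference primary-key positions only, $P$ consists of non-primary-key positions, and $F^*$ agrees with $F$ on all primary-key positions, so every atom other than $F^*$ keeps the witnesses it had in $q$ (here I use that $\fk$ is about $q$). Consequently the chase starts at $F^*$ and propagates exactly along $\posclosure{P}{\fk}$, and every fact it generates has its key equal to some $b_i\in C$ at the first step and otherwise to a value the chase rule drew from $\hat C$, which comprises $C$ and the fresh constants $\bot,\top$; likewise all non-key values lie in $\hat C$, and the variable-constants of $q^*$ never reappear, since $C\cap\queryconst{q}=\emptyset$ and $\bot,\top$ are fresh. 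This gives $\adom{\db_{A,P}}\subseteq C\cup\{\bot,\top\}$. Claims~\eqref{it:gadget1} and~\eqref{it:gadget1.5} then follow at once: the constants of $C$ are orphan in $\db$, hence occur only at non-primary-key positions, so $C\cap\keyconst{\db}=\emptyset$, while $\bot,\top\notin\adom{\db}$; intersecting the bound with $\keyconst{\db}$ yields~\eqref{it:gadget1} and with $\adom{\db}$ yields~\eqref{it:gadget1.5}.

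Claim~\eqref{it:gadget4} is immediate from the construction: $A$ and $F^*$ carry the same value $b_i$ at every position $i$ with $(R,i)\in P$, so for each $\foreignkey{R}{i}{S}\in\fk$ with $(R,i)\in P$ the $S$-fact with key $b_i$ that the chase generated from $F^*$ is a witness for $A$, and it lies in $\db_{A,P}$ because it is a generated fact rather than an atom of $q^*$. For Claim~\eqref{it:gadget2} I would treat the two dependencies separately. Primary keys hold because the chase adds at most one fact per (relation name, key): danglingness with respect to $\foreignkey{T}{i}{U}$ asks only for the existence of some $U$-fact with the matching key, so once one is created no second one is ever added, and $\db_{A,P}\subseteq\db$ inherits $\pk$. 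Foreign keys hold because deleting $q^*$ strands nothing: any outgoing foreign key of a generated fact points to a key value in $\hat C$, and the only facts with keys in $\hat C$ are generated facts (the atoms of $q^*$ other than $F^*$ have variable-constant keys, and $F^*$ has key $\vec{s}$), so every required witness already lives in $\db_{A,P}$.

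The hard part is Claim~\eqref{it:gadget3}. The key structural observation is that $\db_{A,P}$ is glued to the rest of $\db$ only through $C$: by~\eqref{it:gadget1.5} the two active domains meet exactly in $C$, and each $c\in C$ is orphan in $\db$, so its unique occurrence in $\db$ is the corresponding non-primary-key slot of $A$. Hence, for any hypothetical valuation $\theta$ with $\theta(q)\subseteq\db\cup\db_{A,P}$ that uses a fact of $\{A\}\cup\db_{A,P}$, the atoms sent into $\db_{A,P}$ form a subquery that can communicate with the atoms sent into $\db$ only through variables evaluated in $C$, and such variables must sit at $P$-positions of $A$. I would prove this separation statement first, as a standalone \emph{bridge} lemma, and then derive a contradiction by cases on the obedience condition that $P$ violates, each case exploiting the restriction built into the relevant chase.

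Concretely, a constant of $q$ placed in $\posclosure{P}{\fk}$ (failure of~\ref{it:no-const}) cannot be matched because $\hat C\cap\queryconst{q}=\emptyset$; a variable shared between $\posclosure{P}{\fk}$ and its complement (failure of~\ref{it:disjoint}) cannot be evaluated consistently, because the complement side forces a value outside $\hat C$ that the bridge structure forbids; a variable repeated at two non-primary-key positions of $\posclosure{P}{\fk}$ (failure of~\ref{it:distinct}) is blocked by restriction~\eqref{it:resttwo}, which writes $\bot$ into one of the two slots; and a position on a cycle (failure of~\ref{it:strongweak}) is blocked by restriction~\eqref{it:restone}, whose clause $b_i\neq a_i$ defeats the value-equalities that a match around the cycle would require. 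The delicate point throughout is to rule out that a match evades the planted obstruction by routing some atoms back through $\db$, and this is exactly what the bridge structure controls; running the four-way case analysis on top of it is then bookkeeping, so I expect the bridge lemma itself, together with verifying that the entirely-within-gadget matches are also killed by the same obstruction, to be where the real work lies.
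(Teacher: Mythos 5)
Your proposal is correct and takes essentially the same route as the paper's proof: the same chase-shape observations (only $F^*$ is dangling, propagation confined to $\posclosure{P}{\fk}$, $\adom{\db_{A,P}}\subseteq C\cup\{\bot,\top\}$, orphanhood of $C$ keeping these values out of primary-key positions of $\db$) dispatch items~\eqref{it:gadget1}--\eqref{it:gadget4}, and your ``bridge'' lemma is precisely the paper's Claims~\ref{claim:calF0} and~\ref{claim:calF} (any valuation touching the gadget maps $F$ into $\{A\}\cup\db_{A,P}$, and a term's value lies in $\hat{C}$ if and only if its position lies in $\posclosure{P}{\fk}$). The ensuing four-case analysis you outline, using $\hat{C}\cap\queryconst{q}=\emptyset$ for condition~\ref{it:no-const}, the bridge directly for condition~\ref{it:disjoint}, and chase restrictions~\eqref{it:restone} and~\eqref{it:resttwo} for conditions~\ref{it:strongweak} and~\ref{it:distinct}, is exactly the paper's.
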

\begin{proof}
Let $q^*$  and $F^*$ be as in Definition \ref{def:gadget}. Define $\hat C \defeq C\cup \{\bot,\top\}$.

\noindent
\textbf{Item \eqref{it:gadget1}.}
Since $q\models {\fk}$, we note that no atom in $q^*$ is dangling with respect to ${\fk}$, with the exception that $F^*$ is dangling with respect to foreign keys of the form $\foreignkey{R}{i}{S}\in  {\fk}$, where $(R,i)\in P$. Hence we observe that $\adom{\db_{A,P}}\subseteq \hat C$. Moreover, no constant of $\hat C$ appears at a primary-key position of any fact in $\db$ (due to $C$ consisting of orphan constants of $\db$, each of which appears at a non-primary-key position of some fact in $\db$). Hence we obtain that
$\keyconst{\db}\cap \adom{\db_{A,P}}=\emptyset$.

\noindent
\textbf{Item \eqref{it:gadget1.5}.}
Trivial by $\adom{\db_{A,P}}\subseteq \hat C$ and $\bot,\top\notin \adom{\db}$.

\noindent
\textbf{Item \eqref{it:gadget2}.} 
We have $\adom{\db_{A,P}}\subseteq \hat C$ and $\hat C\cap \queryconst{q}=\emptyset$. It follows that no constant of $\adom{\db_{A,P}}$ appears at a primary-key position of any atom in $q^*$. It follows by the chase construction that 
$\db_{A,P}\models {\fk}$. That $\db_{A,P}\models {\pk}$ is likewise a consequence of the chase construction.

\noindent
\textbf{Item \eqref{it:gadget4}.} Let $\foreignkey{R}{i}{S}\in {\fk}$, where $(R,i)\in P$. Since $q^*\not\models \foreignkey{R}{i}{S}$, it must be the case that $\db_{A,P}$ contains an $S$-fact whose unique primary-key constant is $b_i$. The statement of item \eqref{it:gadget4} follows from this.

\noindent
\textbf{Item \eqref{it:gadget3}.}
Assume toward contradiction that there exists a valuation $\mu$ such that $\mu(q)\subseteq \db\cup  \db_{A,P}$ and $\mu(q)\cap (\{A\}\cup \db_{A,P})\neq \emptyset$. 
Consider first the following claim. The proof does not depend on the version of the chase rule being used.

\begin{claim}\label{claim:calF0}
$\mu(F)\in \{A\}\cup \db_{A,P}$.
\end{claim}
\begin{proof}
  As observed previously, since $q\models {\fk}$, no atom in $q^*$ is dangling with respect to ${\fk}$, except that $F^*$ is be dangling with respect to foreign keys of the form $\foreignkey{R}{i}{S}\in {\fk}$, $(R,i)\in P$. In particular, any sequence of applications of the chase rule to $q^*$ is initialized by a foreign key of this form.  
Hence, using the assumption that $\mu(q)\cap \db_{A,P}\neq \emptyset$,
 we find a path $((T_1,i_1), \dots ,(T_p,i_p))$ in the dependency graph of $\fk$ such that $(T_1,i_1)\in P$ and $ \mu(G_p)\in \db_{A,P}$, where by $G_i$, $i\in [p]$, we denote the unique $T_i$-atom of $q$.  W.l.o.g. we may assume that $\{(T_2,i_2), \dots ,(T_p,i_p)\}\cap P = \emptyset$. Let $s_1, \dots ,s_p$ be the terms occupying positions $(T_1,i_1), \dots ,(T_p,i_p)$ in $q$. 
 We first show by backward induction that 
 $\mu(s_k)\in \hat C$ 
 for $k\in [p]$. 
 
 \noindent
 \textbf{Base step $k=p$.}
  Immediate, for we have $ \mu(G_p)\in \db_{A,P}$ and $\adom{\db_{A,P}} \subseteq \hat C$. 
 
 \noindent
 \textbf{Induction step $k=h-1$.}
  Since $(T_{h},i_{h})\notin P$, and since $\mu(s_{h})\in \hat C$ by the induction hypothesis, we note  that $\mu(G_{h})\in \db_{A,P}$. In particular, we observe that $\mu(G_{h})$ contains only constants from $\hat C$.
 Since $\foreignkey{T_k}{i_k}{T_{h}}$ must belong to ${\fk}$, and since $q\models {\fk} $, we observe that the term $s_{k}$ occupies the unique primary-key position of $G_{h}$.   
 We conclude from these observations that $\mu(s_{k})$ must be from $\hat C$. This concludes the induction step $k=h-1$.

We have showed that $\mu(s)\in \hat C$ for a term $s$ that occurs at a position of $P$ in $q$. Since $F$ is the unique $R$-atom of $q$,  we obtain that $\mu(F)\in \{A\}\cup \db_{A,P}$.
\end{proof}

 \begin{claim}\label{claim:calF}
 Let $G=S(\underline{u_1, \dots ,u_{k'}}, u_{k'+1}, \dots ,u_{n'})\in q$. Then,
 \[
 (S,l)\in \posclosure{P}{\fk}\iff \mu(u_l)\in \hat C.
 \] 
 \end{claim}
 \begin{proof}[Proof of Claim \ref{claim:calF}]
 \framebox{$\implies$}
 Let $((R_1,i_1), \dots ,(R_m,i_m))$ be a path in the dependency graph of $\fk$, where $(R_1,i_1)\in P$ and $(R_m,i_m)=(S,l)$. For $j\in [m]$, denote by $H_j$ the unique $R_j$-atom of $q$, and let
  $w_j$ be the term that occurs at the $i_j$th position in $H_j$.
 It suffices to show that $\mu(w_j)\in \hat C$ for $j\in [m]$. The proof is by induction on $j$. 
  
  \noindent
 \textbf{Base step $j=1$.} Follows immediately by Claim \ref{claim:calF0}, as $\adom{\db_{A,P}}\subseteq \hat C$.
 
 \noindent
 \textbf{Induction step $j=h+1$.} 
    Since $q\models \fk$, and since there must be a non-special edge from $(R_h,i_h)$ to $(R_j,1)$, we observe that $w_h$ occupies the unique primary-key position of $H_j$. Applying the induction hypothesis that $\mu(w_h)\in \hat C$, and the fact that the constants of $\hat C$ do not appear at a primary-key position of any fact in $\db$, we obtain that $\mu(H_j)\in \db_{A,P}$, whence $\mu(w_j)\in \hat C$. This concludes the induction step $j=h+1$.

  \framebox{$\impliedby$} There are two possibilities: either $\mu(G)\in \db_{A,P}$ or  $\mu(G)=A$. In the first case, we can show that $(S,l)\in \posclosure{P}{\fk}$ using arguments of the kind used in the beginning of the proof of Claim \ref{claim:calF0}. In the second case we obtain $(S,l)\in {P}$, whence $(S,l)\in \posclosure{P}{\fk}$ vacuously.
 This concludes the proof of the claim.
 \end{proof}
Using Claim \ref{claim:calF} we obtain a contradiction if any of the conditions of Theorem \ref{thm:syntactic-obedience} is not true. Let us consider each case separately. 

\noindent
\textbf{Condition~\ref{it:strongweak}.} Assume that condition~\ref{it:strongweak} is not true. 
Then, the dependency graph of $\fk$ contains a cycle
\[((R_1,i_1), \dots ,(R_m,i_m), (R_1,i_1)),\]
 where $(R_1,i_1)\in P$. W.l.o.g. we may assume that  $R_1\notin\{R_2, \dots ,R_m\}$. 
Denote by $u_j$, $j\in [m]$, the unique term that occupies position $(R_j,i_j)$ in $q$. The unique $R_1$-atom of $q$ is then of the form $F=R_1(\underline{u_m},t_2, \dots ,t_{i_1-1},u_1,t_{i_1+1},\dots ,t_n)$.

Since vacuously $(R_1,i_1)\in \posclosure{P}{\fk}$, by Claim \ref{claim:calF} it holds that $\mu(u_1)\in \hat C$. Now, applying the chase restriction \eqref{it:restone}, and the assumption that  $q\models \fk$, a straightforward induction shows that $\mu(u_1)=\mu(u_2)=\dots =\mu(u_m)$. In particular, $\mu(F)$ must belong to $\db_{A,P}$, in which case the chase restriction \eqref{it:restone} entails that 
 $\mu(u_1) \neq \mu(u_m)$.  

\noindent
\textbf{Condition~\ref{it:no-const}.} Otherwise, assume that condition~\ref{it:no-const} is not true; i.e., some position $(S,j)\in \posclosure{P}{\fk}$ is occupied by a constant $d$ in $q$. Since $\mu$ fixes constants, we obtain by Claim \ref{claim:calF} that $d=\mu(d)\in \hat C$. This contradicts the fact that $\hat C$ does not intersect  $\queryconst{q}$. 

\noindent
\textbf{Condition~\ref{it:disjoint}.} Otherwise, assume that condition~\ref{it:disjoint} is not true; i.e, some positions $(S,j)\in \posclosure{P}{\fk}$ and $(T,k)\in \poscomplement{P}{\fk}$ are occupied by the same variable $x$ in $q$. In this case, Claim \ref{claim:calF} leads to an immediate contradiction.

\noindent
\textbf{Condition~\ref{it:distinct}.} Otherwise, assume that condition~\ref{it:distinct} is not true; i.e., there are two distinct non-primary-key positions $(R,i),(S,j)\in \posclosure{P}{\fk}$ that are occupied in $q$ by the same variable $x$.
 Since $\mu(x)\in \hat C$ by Claim \ref{claim:calF}, we find a constant from $\hat C$ that appears at both positions $(R,i)$ and $(S,j)$ in $\{A\} \cup \db_{A,P}$. 
 It is easy to see that this contradicts the chase restriction \eqref{it:resttwo} that applies in this case.

We observed that each case leads to a contradiction. 
 We conclude by contradiction that there cannot be a valuation $\mu$ such that $\mu(q)\subseteq \db\cup  \db_{A,P}$ and $\mu(q)\cap (\{A\}\cup \db_{A,P})\neq \emptyset$. In particular, no fact of $\{A\}\cup \db_{A,P}$ is relevant for $q$ in $\db \cup \db_{A,P}$.
\end{proof}

 \begin{example}
 Let $q=\{N(\underline{x},x), O(\underline{x},y)\}$ and $\fk=\{\foreignkey{N}{2}{N},\foreignkey{N}{2}{O}\}$. Consider a database $\db$ of the form
 $$
\db
=
\begin{array}{cc}
\begin{array}[t]{c|ll}
N & \underline{x}& x\bigstrut\\\cline{2-3}
  & a & a \\\cdashline{2-3}
  & b & c\\\cdashline{2-3}
\end{array}
&
\begin{array}[t]{c|ll}
O & \underline{x}& y\bigstrut\\\cline{2-3}
  & a & b \\\cdashline{2-3}
\end{array}
\end{array}
$$
Select $A=N(\underline{b},c)$, $P=\{(N,2)\}$, and observe that $\{(N,2)\}$ belongs to a cycle in the dependency graph, thus violating Theorem \ref{thm:syntactic-obedience}\ref{it:strongweak}.
As Lemma \ref{lem:gadget} predicts, we find  a database instance $\db_{A,P}$ satisfying all the items of the lemma statement:
 $$
\db_{A,P}
=
\begin{array}{cc}
\begin{array}[t]{c|ccc}
N & \underline{x}& x\bigstrut\\\cline{2-3}
  & c & \bot \\\cdashline{2-3}
  & \bot & c\\\cdashline{2-3}
\end{array}
&
\begin{array}[t]{c|ccc}
O & \underline{x}& y\bigstrut\\\cline{2-3}
  & c & \bot \\\cdashline{2-3}
    & \bot & c \\\cdashline{2-3}
\end{array}
\end{array}
$$
 \end{example}


We next turn to the proof of Theorem \ref{thm:syntactic-obedience}.  

\begin{proof}[Proof of Theorem \ref{thm:syntactic-obedience}]
It is straightforward to verify that the empty set of positions is obedient and satisfies all the items listed in Theorem \ref{thm:syntactic-obedience}. From here on, we assume that $P$ is non-empty.
We also assume that the unique $R$-atom of $q$ is of the form $F=R(\underline{\vec{s}},t_{k+1},\dots ,t_n)$, and define 
\[
q'\defeq \formula{q \setminus \fkclosure{P}{q}{\fk}} \cup \{F_{P}\},
\]
where $F_P$ is obtained from $F$ by substituting fresh variables for the terms occurring at positions of $P$ (see Definition \ref{def:obedience}).

\noindent
\framebox{$\implies$}  We show the contraposition. Assume that some of the conditions listed in Theorem \ref{thm:syntactic-obedience} is violated. 
Let $\theta$ be a one-to-one valuation mapping variables $x$ to constants $c_x$ (that are not from $\queryconst{q}$). 
We can then apply Lemma \ref{lem:gadget} to obtain a database instance $\db_{A,P}$ given $A\defeq \theta(F_P)$ and $\db\defeq\theta(q')$. 
The lemma states that every fact of $\{A\}\cup \db_{A,P}$ is irrelevant for $q$ in $\db\cup \db_{A,P}$.
This entails that no $R$-fact is relevant for $q$ in $\db\cup \db_{A,P}$, whence  $\db\cup \db_{A,P}\not\models q$. On the other hand, it is obvious that $\db\cup \db_{A,P}\models q'$. Finally, $\db\cup \db_{A,P}\models \fk$ follows by Lemma \ref{lem:gadget} and the fact that $\fk$ is about $q$. We thus conclude that $q'\nfkmodels{\fk} q$, i.e., $P$ is disobedient.

\noindent
\framebox{$\impliedby$} Let $F$ be the unique $R$-atom of $q$. Assuming conditions \ref{it:strongweak}--\ref{it:distinct} in Theorem~\ref{thm:syntactic-obedience} hold true, we show that $P$ is obedient, i.e.,  $q'\fkmodels{\fk}q$. 

Suppose $\db$ is a database that satisfies both $q'$ and $\fk$. We need to show that $\db$ satisfies also $q$. Let $\theta_0$ be a valuation such that $\theta_0(q')\subseteq \db$. In what follows, we will extend $\theta_0$ to a valuation $\theta$ such that $\theta(q)\subseteq \db$.

 Let $(G_1,\ldots ,G_m)$ list the atoms of $\fkclosure{P}{q}{\fk}$ in such an order that
 \begin{itemize}
 \item
  $G_1=F$, and
  \item for all $j\in [m-1]$ there is some $k\in [j]$ such that $\foreignkey{S_k}{l}{S_{j+1}}\in \fk$ for some integer $l$,
  \end{itemize}   
  where it is to be assumed that for each $h\in[m]$, $S_h$ is the relation name of $G_h$.

We show by induction that, for all $j \in [m]$, there exists a valuation $\theta_j$ over $\queryvars{q_j}$ such that $\theta_j(q_j)\subseteq \db$, where
   \[
   q_j \defeq \formula{q \setminus \fkclosure{P}{q}{\fk}}\cup \{G_1,\ldots ,G_j\}.
   \]

    For the base step suppose $j=1$. Denote by $P^c$ the set of positions $\{(R,i)\mid (R,i)\notin P, i\in [n]\}$. Concerning the positions over relation names appearing in $q_1=\formula{q \setminus \fkclosure{P}{q}{\fk}} \cup \{F\}$, let us make a few observations.
   First, we note that $P^c\subseteq  \poscomplement{P}{\fk}$, because otherwise some position of $P$ would belong to a cycle, 
   contradicting condition~\ref{it:strongweak}. 
   Second, every position of a relation name appearing in $q \setminus \fkclosure{P}{q}{\fk}$ belongs  to $ \poscomplement{P}{\fk}$ by definition.
    Third, it readily holds that ${P}\subseteq \posclosure{P}{\fk}$. We conclude that a position $(T,k)$ of a relation name $T$ that appears in $q_1$ belongs to $\posclosure{P}{\fk}$ if and only if it belongs to $P$. It follows by conditions~\ref{it:no-const}--\ref{it:distinct} that the positions of $P$ are occupied in $F$ by variables that are orphan in $q_1$. Clearly, we can extend $\theta_0$ to these orphan variables to obtain $\theta_1$ such that $\theta_0(F_P)=\theta_1(F)$. In particular, we obtain that $\theta_1(q_1)\subseteq \db$, where $q_1 = \formula{q \setminus \fkclosure{P}{q}{\fk}} \cup \{F\}$.
    
    For the induction step suppose $j\in [m-1]$. The induction claim is that  $\theta_{j+1}(q_{j+1})\subseteq \db$ for some valuation $\theta_{j+1}$ over $\queryvars{q_{j+1}}$, given the induction hypothesis that there is a valuation $\theta_j$ over $\queryvars{q_j}$ such that $\theta_j(q_j)\subseteq \db$.
 Let $ k\in [j]$ be such that $\foreignkey{S_k}{l}{S_{j+1}}\in \fk$ for some integer $l$. 
    Assuming $G_k=S_k(\underline{s_1, \dots ,s_a},s_{a+1} \dots ,s_{b})$, we can write $G_{j+1}=S_{j+1}(\underline{s_l},u_2, \dots ,u_c)$ since $\fk$ is about $q$.
    Since  $\theta_j(G_j)=S_k(\underline{\theta_j(s_1), \dots ,\theta_j(s_a)},\theta_j(s_{a+1})\dots ,\theta_j(s_b))\in \db$ and $\db\models \foreignkey{S_k}{l}{S_{j+1}}$, we find a fact $
    S_{j+1}(\underline{\theta_j(s_l)},b_2, \dots ,b_{c})\in \db.$
    Observe by condition~\ref{it:distinct} that $u_2, \dots, u_{c}$ are pairwise distinct variables. 
    Hence $\theta_{j+1} \defeq \theta_j \cup \{(u_i,b_i)\}_{i=2}^c$ is a well-defined valuation over $\queryvars{q_{j+1}}$ such that $\theta_{j+1}(q_{j+1})\subseteq \db$, if we can establish the following claim.
    \begin{claim}\label{claim:nointersection}
    $U\cap \queryvars{q_{j}}=\emptyset$, for $U\defeq\{u_2, \dots, u_{c}\}$.
    \end{claim}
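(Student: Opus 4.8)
The plan is to show that each of the pairwise distinct variables $u_2,\dots,u_c$ occupying the non-primary-key positions $(S_{j+1},2),\dots,(S_{j+1},c)$ of $G_{j+1}$ is fresh with respect to $q_j$. Fix $i\in\{2,\dots,c\}$, and note first that $(S_{j+1},i)\in\posclosure{P}{\fk}$: the atom $G_{j+1}\in\fkclosure{P}{q}{\fk}$ is reached from $G_k$ via $\foreignkey{S_k}{l}{S_{j+1}}$, an edge that places \emph{every} position of $S_{j+1}$ into the closure. The opening step is then to pin down where $u_i$ is allowed to occur in $q$. By condition~\ref{it:disjoint}, $u_i$ occurs at no position of $\poscomplement{P}{\fk}$; in particular it occurs in no atom of $q\setminus\fkclosure{P}{q}{\fk}$, all of whose positions lie in $\poscomplement{P}{\fk}$. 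By condition~\ref{it:distinct}, $u_i$ occurs at no non-primary-key position of $\posclosure{P}{\fk}$ other than $(S_{j+1},i)$ itself. Consequently every occurrence of $u_i$ in $q$ is either the occurrence at $(S_{j+1},i)$ or lies at a \emph{primary-key} position inside $\posclosure{P}{\fk}$, and it suffices to prove that $u_i$ does not occur in $G_1,\dots,G_j$.

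The key step exploits the topological character of the ordering $(G_1,\dots,G_m)$ together with the fact that $q$ satisfies $\fk$ (distinct variables being treated as distinct constants). Let $h^\ast$ be the least index with $u_i\in\queryvars{G_{h^\ast}}$, and suppose for contradiction that the occurrence of $u_i$ in $G_{h^\ast}$ is at a primary-key position. If $h^\ast=1$, then $u_i$ occurs at a primary-key position of $F=G_1$, which by the base step lies in $\poscomplement{P}{\fk}$, contradicting condition~\ref{it:disjoint}. Hence $h^\ast\ge 2$, and $S_{h^\ast}$ is a referenced relation of signature $\signature{m'}{1}$, so its primary key is position $1$ and carries the term $u_i$. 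By the defining property of the ordering there is some $k'<h^\ast$ with $\foreignkey{S_{k'}}{l'}{S_{h^\ast}}\in\fk$; since $q$ satisfies $\fk$ and is self-join-free, the term of $q$ at $(S_{k'},l')$ must coincide with the unique primary-key term $u_i$ of $G_{h^\ast}$. Thus $u_i\in\queryvars{G_{k'}}$ with $k'<h^\ast$, contradicting the minimality of $h^\ast$. So the occurrence of $u_i$ in $G_{h^\ast}$ is at a non-primary-key position, which by the first step must be $(S_{j+1},i)$; hence $S_{h^\ast}=S_{j+1}$ and, by self-join-freeness, $h^\ast=j+1$. In particular $u_i\notin\queryvars{G_h}$ for every $h\le j$, and together with the first step this yields $u_i\notin\queryvars{q_j}$.

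The main obstacle is precisely that $u_i$, though barred by condition~\ref{it:distinct} from reappearing at a second non-primary-key position of the closure, is \emph{not} forbidden by conditions~\ref{it:disjoint}–\ref{it:distinct} from reappearing at primary-key positions of closure atoms; indeed Figure~\ref{fig:obgraph} shows this to be genuinely the case, since a non-primary-key variable feeds the key of each atom it references. The minimal-index argument above is exactly what prevents such a reappearance from happening \emph{before} $G_{j+1}$: every referencing foreign key propagates $u_i$ strictly backwards along the ordering, so a primary-key occurrence of $u_i$ can never be its first occurrence. The remaining delicate point is the root case $h^\ast=1$, where the ordering supplies no earlier atom; there one falls back on condition~\ref{it:disjoint} and the base-step observation that all primary-key positions of $R=S_1$ lie in $\poscomplement{P}{\fk}$. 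Once the claim is in hand, $\theta_{j+1}\defeq\theta_j\cup\{(u_i,b_i)\}_{i=2}^{c}$ is well defined and the induction step goes through.
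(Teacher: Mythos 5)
Your proof is correct and takes essentially the same route as the paper's: both arguments consider the least index at which $u_i$ first occurs among $G_1,\dots,G_j$, use the ordering property together with the facts that $q$ satisfies $\fk$ and is self-join-free to show that a primary-key occurrence of $u_i$ would propagate strictly backwards (so it cannot be the first occurrence), and then invoke conditions~\ref{it:disjoint} and~\ref{it:distinct} (plus the base-step observation $P^c\subseteq\poscomplement{P}{\fk}$, resting on condition~\ref{it:strongweak}) to exclude the remaining possibilities. The only difference is presentational: the paper derives the contradiction directly from condition~\ref{it:distinct} at the minimal index $h$, whereas you arrange the same steps contrapositively to force $h^\ast=j+1$.
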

    \begin{proof}[Proof of Claim \ref{claim:nointersection}]
    Let $P'\defeq \{(S_{j+1},2), \dots ,(S_{j+1},c)\}$.
    Let us first turn attention to $q_1=\formula{q \setminus \fkclosure{P}{q}{\fk}} \cup \{F\}$.
    Recall that $ \poscomplement{P}{\fk}$ contains  ${P^c} $ as well as the positions of relation names appearing in $q \setminus \fkclosure{P}{q}{\fk}$. Since $P' \subseteq { \posclosure{P}{\fk}}$, it follows by condition~\ref{it:disjoint} that $U$ does not contain any variable that appears in $F$ at a position of ${P^c}$, nor does it contain any variable from $\queryvars{q \setminus \fkclosure{P}{q}{\fk}}$. Furthermore, it follows by condition~\ref{it:distinct} that $U$ does not contain any variable that appears in $F$ at a position of ${P}$.
     We thus obtain  that $U\cap \queryvars{q_1}=\emptyset $.
   
 For the sake of contradiction, suppose now the claim is false, i.e., $u_p\in \queryvars{q_j}$ for some $p\in \{2, \dots ,c\}$.
 Let $h\leq j$ be the smallest integer such that $u_p\in \queryvars{q_{h}}$.
 We may assume, by the previous paragraph, that $h>1$. Suppose $G_h=S_h(\underline{v_1},v_{2} \dots ,v_d)$. By construction of the sequence $(G_1, \dots ,G_m)$, and since $q$ is self-join free and $\fk$ is about $q$, the primary-key term $v_1$ of $G_h$ must appear in $G_{h'}$ 
 for some $h'< h$.
By minimality of $h$, it must be that $u_{p}\neq v_{1}$ and, consequently, $u_p$ occurs at a non-primary-key position in $S_h(\underline{v_1},v_{2} \dots ,v_d)$; i.e.,
 $u_p=v_{p'}$ for some  $p\in \{2, \dots ,d\}$. 
 But then $(S_h,p')$ and $(S_{j+1},p)$ are two distinct non-primary-key positions of $\posclosure{P}{\fk}$ that are occupied in $q$ by the same variable, contradicting condition~\ref{it:distinct}. We conclude by contradiction that the claim holds. 
 \end{proof}
 
Having concluded the induction proof, we
note that $\theta(q)\subseteq \db$ for $\theta\defeq \theta_n$. This concludes the proof of Theorem \ref{thm:syntactic-obedience}.
\end{proof}

A particular consequence of the previous proof is that $(q \setminus \{F\}) \cup \{F_{P}\} \nfkmodels{\fk} q $ if $P$ is disobedient over $\fk$ and $q$.

\section{Proofs for Section~\ref{sec:lhard}}\label{sect:appC}


The following proof of Lemma~\ref{lem:cutchase} goes through for foreign keys that need not be unary.
The following definition of (not necessarily unary)  foreign keys is standard.
Let $R$ be a relation name with arity~$n$, and $S$ an atom with signature $\signature{m}{k}$.
An \emph{(unrestricted) foreign key} is an expression $\foreignkey{R}{j_{1},j_{2},\ldots,j_{k}}{S}$ with $j_{1},j_{2},\dots,j_{k}$ distinct integers in $[n]$. 
Given a database instance $\db$,
an $R$-fact $R(a_{1},\dots,a_{n})$ in $\db$ is \emph{dangling} with respect to this foreign key if $\db$ contains no $S$-fact $S(\underline{b_{1},\dots,b_{k}},b_{k+1},\dots,b_{n})$ such that $a_{j_{1}}=b_{1}$,  $a_{j_{2}}=b_{2}$, \dots, $a_{j_{k-1}}=b_{k-1}$, and $a_{j_{k}}=b_{k}$.

\begin{proof}[Proof of Lemma~\ref{lem:cutchase}]
Suppose $q$ has a cyclic attack graph. Then, by~\cite[Lemma~3.6]{DBLP:journals/tods/KoutrisW17}, there are atoms $F$ and $G$ such that $F\attacks{q}G\attacks{q}F$. 
For two constants $a$ and $b$, define the following valuation $\bival{a}{b}$ over $\queryvars{q}$:
\[
\bival{a}{b}(x) = 
\begin{cases}
a & \textnormal{if $x\in \keyclosure{F}{q}\setminus\keyclosure{G}{q}$},\\
b &\textnormal{if $x\in\keyclosure{G}{q}\setminus\keyclosure{F}{q}$},\\
\bot & \textnormal{if $x\in \keyclosure{F}{q}\cap\keyclosure{G}{q}$},\\
(a,b) & \textnormal{if $x\in \queryvars{q}\setminus \formula{\keyclosure{F}{q}\cup \keyclosure{G}{q}}$}.
\end{cases}
\]
Let $R$, $S$ be two sets of ordered pairs of constants. Define 
\begin{align*}
\db_{R,S} :=  
& \{\bival{a}{b}(H)\mid H\in q\setminus \{F,G\}, (a,b) \in R\cup S\}\\
& \cup \{\bival{a}{b}(F)\mid  (a,b) \in R\}\\ 
& \cup \{\bival{a}{b}(G)\mid (a,b) \in  S\}.
\end{align*}
 
The following follows from the proof of~\cite[Lemma~4.3]{DBLP:journals/tods/KoutrisW17}:
\begin{itemize}
\item
$\db_{R,S}$ is consistent with respect to primary keys in $q\setminus \{F,G\}$; and
\item
$\certainty{q}{\pk}$ is $\L$-hard, and remains $\L$-hard when inputs are restricted to database instances that are equal to $\db_{R,S}$ for binary relations $R$ and $S$.
\end{itemize}
We claim that the following are equivalent for all binary relations $R$ and $S$:
\begin{enumerate}
\item\label{it:rspk}
$\db_{R,S}$ is a ``no''-instance of $\certainty{q}{\pk}$; and
\item\label{it:rspkfk}
$\db_{R,S}$ is a ``no''-instance of $\certainty{q}{\pk\cup \fk}$.
\end{enumerate}

\framebox{\ref{it:rspk}$\implies$\ref{it:rspkfk}} 
Let $\rep$ be a repair of $\db_{R,S}$ with respect to $\pk$ such that $\rep\not\models q$. Informally, we construct a repair $\rep'$ of $\db_{R,S}$ with respect to $\pk\cup\fk$ by closing each dangling fact of $\rep$ by a cycle that is long enough. Initialize $\rep'$ as $\rep$, and chase $\rep'$ by the following rule:
Whenever there is some fact $A\in\rep'$ that is dangling with respect to some foreign key $\foreignkey{H}{\vec{\jmath}}{H'}$ in $\fk$,
pick constants $a,b$ such that $A=\bival{a}{b}(H)$,
\begin{enumerate}
\item 
if $H'\in q\setminus\{F,G\}$, then add $\bival{a}{b}(H')$ to $\rep'$;
\item 
if $H'=F$, then add $\bival{a}{c}(F)$ to $\rep'$, where $c$ is a fresh constant; and
\item\label{it:addG}
if $H'=G$, then add $\bival{c}{b}(G)$ to $\rep'$, where $c$ is a fresh constant.
\end{enumerate}
We only make one exception to this rule. 
Suppose that, according to~\eqref{it:addG}, we should add to $\rep'$ a $G$-fact, say $\bival{e}{d}(G)$ with $e$ a fresh constant, while having already added $\bival{a}{b}(F)$, $\bival{c}{b}(G)$, and $\bival{c}{d}(F)$. Then,  instead of introducing a fresh value, we add $\bival{a}{d}(G)$. We deal symmetrically with additions of $F$-facts. It is now easy to see that the chase terminates, and that $\rep'$ is a repair with respect to $\pk\cup\fk$. 

Assume for the sake of contradiction that $\mu(q)\sub \rep'$ for some valuation $\mu$.  The attacks between~$F$ and $G$ imply that $\{\bival{a}{b}(F), \bival{a'}{b'}(G)\}\sub \mu(q)$ if and only if $a=a'$ and $b=b'$.
Thus no added $F$-fact or $G$-fact is in $\mu(q)$, and hence we find constants $a,b$ such that $\{\bival{a}{b}(F), \bival{a}{b}(G)\}\sub \mu(q) \cap \rep$. 
Moreover, $\db_{R,S}$ is consistent with respect to primary keys in $q\setminus \{F,G\}$, and thus by construction, $\bival{a}{b}(q\setminus \{F,G\}) \sub \rep $. We obtain $\bival{a}{b}(q)\sub\rep $, hence $\rep\models q$, a contradiction. 
We conclude by contradiction that $\rep'$ does not satisfy~$q$.

\framebox{\ref{it:rspkfk}$\implies$\ref{it:rspk}} 
Let $\rep$ be a repair of $\db_{R,S}$ with respect to $\pk\cup\fk$ such that $\rep\not\models q$. 
Note that $\rep$ need not be a repair of $\db_{R,S}$ with respect to $\pk$, because
\begin{itemize}
\item some facts of $\rep$ may not belong to $\db_{R,S}$; or
\item some blocks of $\db_{R,S}$ may be disjoint with $\rep$.
\end{itemize}
Let $\sep$ be a $\subseteq$-minimal database instance such that
\begin{itemize}
\item 
$\rep\cap\db_{R,S}\subseteq\sep$; and
\item 
for every block $\block$ of $\db_{R,S}$ such that $\rep\cap\db_{R,S}=\emptyset$, $\sep$ contains a fact arbitrarily picked from $\block$.
\end{itemize}
By construction, $\sep\subseteq\db$.
It is easily verified that $\sep$ is a repair of $\db_{R,S}$ with respect to $\pk$.
Note incidentally that since $\sep\closer{\db_{R,S}}\rep$ is easily verified, it must hold that either $\sep=\rep$ or $\sep\not\models\fk$.

It suffices to show that $\sep$ falsifies $q$.
Suppose for the sake of contradiction that $\sep\models q$.
Then, we can assume a valuation $\theta$ such that $\theta(q)\subseteq\sep$, and therefore $\theta(q)\subseteq\db$. 
Since $\theta(q)\nsubseteq\rep$, there is a fact $A\in\theta(q)\setminus\rep$.
Moreover, from the construction of $\sep$, it follows that $\rep\cap\db$ contains no fact that is key-equal to $A$.
Then, by Lemma~\ref{lem:norepair}, $\rep$ is not a repair, a contradiction.
\end{proof}
\section{Proofs for Section~\ref{sec:nl}}\label{sect:apD}

\subsection{Preliminaries}

Before proceeding with the proof of Lemma~\ref{lem:bi}, we consider some useful auxiliary concepts. 
A database instance $\rep$ is \emph{irrelevantly dangling} if, using Lemma \ref{lem:gadget}, it can be extended to a consistent database instance $\rep'$ in such a way that every fact that is dangling in $\rep$ is irrelevant in $\rep'$.




\begin{definition}[Irrelevantly dangling instance]\label{def:irr}
Let $q$ be in $\sjfbcq$. Let   $\fk$ be a set of foreign keys about $q$.
 Let  $\db$ be a database instance. A database instance $\rep$ is \emph{irrelevantly dangling (with respect to $(\db,\fk,q)$)} if for all $R(\underline{\vec{a}},b_{k+1}, \dots ,b_n)$ that are dangling in $\rep$ with respect to a foreign key $\foreignkey{R}{j}{S}\in {\fk}$, it holds that:
 \begin{enumerate}[label=(\arabic*)]
\item\label{it:irr1}  $P$ is not obedient over ${\fk}$ and~$q$; and
 \item\label{it:irr2} $(R,j)\in P$;
 \end{enumerate}
 where $P$ is the set of all non-primary-key positions $(R,i)$ such that $b_i$ is orphan in $\rep\cup\db$ and does not belong to $\queryconst{q}$
 \end{definition}
 Note that \ref{it:irr2} entails that $\foreignkey{R}{j}{S}$ is strong.

Let $\db$ be a database instance.
Whenever $\rep$, $\sep$ are database instances,
we write $\rep\capcloserneq{\db}\sep$ if $\rep\closer{\db}\sep$ and $\sep\cap\db\subsetneq\rep\cap\db$.
Note that $\db\setminus \rep\subsetneq\db\setminus \sep$ if and only if $\sep\cap \db\subsetneq\rep\cap \db$. 
It is straightforward to verify that 
$\capcloserneq{\db}$ is a strict partial order.

\begin{definition}[Pre-repair]\label{def:pre-repair}
Let $q$ be in $\sjfbcq$. Let   $\fk$ be a set of foreign keys about $q$.
We say that $\rep$ is a \emph{pre-repair} of a database $\db$ (over $\fk$ and $q$) if $\rep$ is a $\capcloserneq{\db}$-minimal database satisfying the following conditions: 
\begin{enumerate}[label=(\arabic*)]
\item\label{item:pre-one} $\rep\models\pk$; and
\item\label{item:pre-two} $\rep$ is irrelevantly dangling with respect to $(\db,{\fk},q)$.  
\end{enumerate}
\end{definition}

Recall that $\pk$ denotes the set of primary keys underlying $q$.
By $\capcloserneq{\db}$-minimality of $\rep$ we mean that
there is no database instance $\sep$ satisfying conditions~\ref{item:pre-one} and~\ref{item:pre-two} such that $\sep\capcloserneq{\db}\rep$.

The following simple lemma states that any \emph{consistent} pre-repair subsumes a repair. Thus, to provide a repair that falsifies a query $q$, we only need to look for consistent pre-repairs that do the same. 
\begin{lemma}\label{lem:rep-pre}
Let $q$ be a query in $\sjfbcq$. 
Let $\fk$ be a set of unary foreign keys about $q$. Let $\db$ be a database.
If $\rep$ is a pre-repair of $\db$ over $\fk$ and $q$ such that $\rep\models \fk$, then there exists a repair $\sep$ of $\db$ over $\fk$ such that $\sep\subseteq \rep$.
\end{lemma}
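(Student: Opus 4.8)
The plan is to obtain the desired repair $\sep$ by minimizing, under $\closer{\db}$, among the consistent database instances that lie $\closer{\db}$-below $\rep$, and then to invoke the $\capcloserneq{\db}$-minimality of the pre-repair $\rep$ to force this minimizer inside $\rep$. First I would observe that $\rep$ is consistent: it satisfies $\pk$ by condition~\ref{item:pre-one} of Definition~\ref{def:pre-repair}, and it satisfies $\fk$ by hypothesis, so $\rep\models\pk\cup\fk$. In particular the collection
$$\calC\defeq\{\sep\mid\sep\models\pk\cup\fk\text{ and }\sep\closer{\db}\rep\}$$
is nonempty, as it contains $\rep$.

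Next I would check that $\calC$ is finite, so that it admits a $\closer{\db}$-minimal element. Indeed, if $\sep\closer{\db}\rep$ then $\db\sd\sep\subseteq\db\sd\rep$, and since $\sep=\db\sd(\db\sd\sep)$, the instance $\sep$ is completely determined by the subset $\db\sd\sep$ of the fixed finite set $\db\sd\rep$; hence $\card{\calC}\leq 2^{\card{\db\sd\rep}}$. Let $\sep$ be a $\closer{\db}$-minimal element of $\calC$. I claim $\sep$ is a repair of $\db$ with respect to $\fk\cup\pk$: it is consistent by membership in $\calC$, and if some consistent $\tep$ satisfied $\tep\closerneq{\db}\sep$, then transitivity of the partial order $\closer{\db}$ would give $\tep\closer{\db}\rep$, so $\tep\in\calC$ would contradict the $\closer{\db}$-minimality of $\sep$.

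It remains to show $\sep\subseteq\rep$, and this is where the pre-repair minimality enters. Since $\sep\closer{\db}\rep$, unfolding the definition of $\closer{\db}$ yields $\rep\cap\db\subseteq\sep\cap\db$ and $\sep\setminus\db\subseteq\rep\setminus\db$; consequently $\sep\subseteq\rep$ holds if and only if $\sep\cap\db=\rep\cap\db$. Suppose for contradiction that $\sep\not\subseteq\rep$, so that $\rep\cap\db\subsetneq\sep\cap\db$; together with $\sep\closer{\db}\rep$ this is exactly $\sep\capcloserneq{\db}\rep$. But $\sep$ satisfies both defining conditions of a pre-repair candidate from Definition~\ref{def:pre-repair}: it satisfies $\pk$, and, being consistent with respect to $\fk$, it has no dangling facts and is therefore \emph{vacuously} irrelevantly dangling with respect to $(\db,\fk,q)$ in the sense of Definition~\ref{def:irr}. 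Hence $\sep$ witnesses a violation of the $\capcloserneq{\db}$-minimality of $\rep$, a contradiction. Therefore $\sep\cap\db=\rep\cap\db$, giving $\sep\subseteq\rep$, as required.

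The delicate point is the last paragraph, and I expect it to be the main obstacle conceptually. Under $\closer{\db}$-minimization one would naively expect the $\db$-part of $\sep$ to be \emph{enlarged} (adding a fact of $\db$ to $\sep$ removes it from $\db\sd\sep$ and thus moves $\sep$ strictly closer to $\db$), which would tend to push $\sep$ outside $\rep$. The reason this cannot happen is precisely that the pre-repair $\rep$ was chosen $\capcloserneq{\db}$-minimal, i.e., already as ``thin'' on $\db$-facts as possible among instances satisfying the two conditions; the key supporting observation, which makes $\sep$ an admissible competitor against $\rep$, is that any $\fk$-consistent instance is automatically irrelevantly dangling. Together these close the argument.
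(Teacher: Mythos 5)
Your proof is correct and follows essentially the same route as the paper's: take a repair $\sep$ with $\sep\closer{\db}\rep$, observe that $\sep\not\subseteq\rep$ forces $\rep\cap\db\subsetneq\sep\cap\db$ and hence $\sep\capcloserneq{\db}\rep$, and note that a repair (being $\fk$-consistent) vacuously satisfies both pre-repair conditions of Definition~\ref{def:pre-repair}, contradicting the $\capcloserneq{\db}$-minimality of $\rep$. The only difference is that you explicitly justify, via finiteness of $\{\sep\mid\sep\models\pk\cup\fk\text{ and }\sep\closer{\db}\rep\}$, the existence of a repair $\closer{\db}$-below the consistent instance $\rep$, a step the paper takes as immediate.
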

\begin{proof}

Let $\rep$ be a pre-repair of $\db$ over $\fk$ and $q$ such that  $\rep \models  \fk$. Then, there exists a repair
 $\sep$ of $\db$ such that $\sep\closer{\db}\rep$. We claim that $\sep\subseteq \rep$. 
Assume toward contradiction that $\sep\subsetneq \rep$. 
Then, it must be the case that $\rep\cap \db\subsetneq\sep\cap \db$, whence $\sep \capcloserneq{\db} \rep$. Since $\sep$ is a repair, it in particular satisfies items~\ref{item:pre-one} and~\ref{item:pre-two} of Definition \ref{def:pre-repair}. Consequently, $\rep$ cannot be a pre-repair, which contradicts the assumption. We conclude by contradiction that the claim holds.
\end{proof}

\begin{theorem}\label{thm:pre-repair}
Let $q$ be in $\sjfbcq$. Let   $\fk$ be a set of foreign keys about $q$.
 Then, every repair of $\db$ over $\fk$ satisfies $q$ if and only if every pre-repair of $\db$ over $\fk$ and $q$ satisfies $q$. 
\end{theorem}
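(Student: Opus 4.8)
The plan is to prove the equivalence by contraposition on both sides, so that it suffices to show that some repair of $\db$ falsifies $q$ if and only if some pre-repair of $\db$ (over $\fk$ and $q$) falsifies $q$. Throughout I would use that query satisfaction is monotone under $\subseteq$, that a repair satisfies $\pk\cup\fk$ and is hence vacuously irrelevantly dangling, and that $\capcloserneq{\db}$ is a strict partial order which is well-founded once we restrict to the (finite) collection of instances contained in $\rep\cup\db$ for a fixed repair $\rep$.

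First I would show that a falsifying repair produces a falsifying pre-repair; this establishes that if all pre-repairs satisfy $q$ then all repairs do. Let $\rep$ be a repair with $\rep\not\models q$. Since $\rep$ meets conditions~\ref{item:pre-one} and~\ref{item:pre-two} of Definition~\ref{def:pre-repair}, I descend: among all instances satisfying these two conditions and lying $\capcloser{\db}$-below $\rep$ (a finite, nonempty set, each such instance being contained in $\rep\cup\db$), I pick a $\capcloserneq{\db}$-minimal element $\rep^{-}$. By transitivity this $\rep^{-}$ is globally $\capcloserneq{\db}$-minimal, hence a pre-repair, and it satisfies $\rep\cap\db\subseteq\rep^{-}\cap\db$ together with $\rep^{-}\subseteq\rep\cup\db$. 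It remains to check $\rep^{-}\not\models q$. Suppose $\theta(q)\subseteq\rep^{-}\subseteq\rep\cup\db$ and some $A\in\theta(q)\setminus\rep$; then $A\in\db$, so $A\in\rep^{-}\cap\db$. If $\rep\cap\db$ contained a fact $A'\keyeq A$, then $A'\in\rep\cap\db\subseteq\rep^{-}$ would give two key-equal facts in $\rep^{-}$, contradicting $\rep^{-}\models\pk$ unless $A=A'\in\rep$, which is false; and if $\rep\cap\db$ contains no fact key-equal to $A$, then Lemma~\ref{lem:norepair}, applied to the valuation $\theta$ and the instance $\rep$, shows $\rep$ is not a repair, a contradiction. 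Hence $\theta(q)\subseteq\rep$, so $\rep\models q$, contradicting $\rep\not\models q$; thus $\rep^{-}\not\models q$.

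Next I would show that a falsifying pre-repair produces a falsifying repair, which gives the converse implication. Let $\pi$ be a pre-repair with $\pi\not\models q$. If $\pi\models\fk$, then Lemma~\ref{lem:rep-pre} yields a repair $\sep\subseteq\pi$, and monotonicity gives $\sep\not\models q$, as required. If $\pi\not\models\fk$, I exploit that $\pi$ is irrelevantly dangling: by Definition~\ref{def:irr} together with Lemma~\ref{lem:gadget}, I extend $\pi$ to a consistent instance $\pi'=\pi\cup G$, where $G$ consists of fresh facts, $\pi'\models\pk\cup\fk$, $\pi'\cap\db=\pi\cap\db$, and every fact dangling in $\pi$ as well as every fact of $G$ is irrelevant in $\pi'$. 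Two points then close the argument. First, $\pi'\not\models q$: any match $\theta(q)\subseteq\pi'$ consists of facts relevant in $\pi'$, hence avoids the irrelevant facts, so $\theta(q)\subseteq\pi'\setminus G=\pi$, which would give $\pi\models q$, a contradiction. Second, $\pi'$ is itself a pre-repair; granting this, $\pi'$ is a pre-repair with $\pi'\models\fk$ and $\pi'\not\models q$, and the previous case delivers a repair $\sep\subseteq\pi'$ with $\sep\not\models q$.

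The main obstacle is the claim that the consistent extension $\pi'$ is a pre-repair, i.e.\ that it is $\capcloserneq{\db}$-minimal. The natural approach is: given a hypothetical $\sigma\capcloserneq{\db}\pi'$ satisfying conditions~\ref{item:pre-one}--\ref{item:pre-two}, form $\sigma\setminus G$ and argue $\sigma\setminus G\capcloserneq{\db}\pi$, contradicting minimality of $\pi$. The set-difference bookkeeping is clean, because $G$ is fresh and disjoint from $\db$: it shows $\sigma\setminus G\closer{\db}\pi$, that $\pi\cap\db\subsetneq(\sigma\setminus G)\cap\db$, and that $\sigma\setminus G\models\pk$. The delicate point is that deleting the gadget facts $G$ can turn previously non-dangling facts of $\sigma$ into dangling ones, and one must verify that every such newly dangling fact still meets the structural conditions~\ref{it:irr1}--\ref{it:irr2} of irrelevant danglingness (its relevant non-primary-key positions carry fresh orphan values and the governing foreign key is strong with a disobedient marking), so that $\sigma\setminus G$ remains irrelevantly dangling. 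Verifying this preservation of the irrelevantly-dangling property under removal of the fresh gadget facts — which requires Lemma~\ref{lem:gadget} in full, through the orphan/obedience bookkeeping of Definition~\ref{def:irr} and Corollary~\ref{cor:obedient-position} — is the crux of the proof.
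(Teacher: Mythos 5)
Your first direction matches the paper's argument essentially verbatim: starting from a falsifying repair $\rep$, you descend to a $\capcloserneq{\db}$-minimal instance $\rep^{-}$ among those satisfying conditions~\ref{item:pre-one} and~\ref{item:pre-two} of Definition~\ref{def:pre-repair}, and then use $\rep\cap\db\subseteq\rep^{-}$, $\rep^{-}\models\pk$, and Lemma~\ref{lem:norepair} to show $\rep^{-}$ still falsifies $q$. That half is correct (your extra care about well-foundedness of $\capcloserneq{\db}$ below $\rep$ is a welcome detail the paper glosses with ``either $\rep$ is a pre-repair, or there exists a pre-repair $\sep\capcloserneq{\db}\rep$'').

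The converse direction, however, is where your proposal stops short of being a proof. You extend the falsifying pre-repair $\pi$ by all gadgets at once, $\pi'=\pi\cup G$, and then state that verifying $\capcloserneq{\db}$-minimality of $\pi'$ --- i.e.\ that for a hypothetical competitor $\sigma\capcloserneq{\db}\pi'$ the instance $\sigma\setminus G$ remains irrelevantly dangling --- ``is the crux of the proof,'' without carrying it out. As written, the central claim that $\pi'$ is a pre-repair is therefore asserted, not proven. The verification does go through along the lines you gesture at: since $\sigma\setminus G\subseteq\pi\cup\db$ and every primary-key constant of $G$ is either fresh (hence absent from $\pi\cup\db$) or an orphan constant of $\pi\cup\db$ occurring only in the dangling fact $A$ being cured, any fact of $\sigma\setminus G$ that becomes dangling upon removing $G$ must be $A$ itself, dangling at one of its own orphan positions; orphanhood persists in the subinstance, the position set of Definition~\ref{def:irr} can only grow, and disobedience is inherited by supersets via Corollary~\ref{cor:obedient-position}. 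A second obligation you leave implicit is the one-shot construction itself: Lemma~\ref{lem:gadget} is stated for a single dangling fact relative to a fixed ambient instance, so you must check that gadgets for distinct dangling facts do not interfere (pairwise fresh constants; the orphan sets $C$ of distinct facts are automatically disjoint, each orphan occurring in exactly one fact) and that all of $G$ is jointly irrelevant in $\pi'\cup\db$. The paper sidesteps both issues by iterating: Claim~\ref{claim:nodangling} adds one gadget $\db_{A,P}$ at a time and re-establishes the pre-repair invariant --- including the $P_0=P_1$ orphan-preservation argument --- before curing the next fact, after which Lemma~\ref{lem:rep-pre} is applied exactly as you do. It is fair to note that the paper's Claim likewise only verifies conditions~\ref{item:pre-one} and~\ref{item:pre-two} and leaves the minimality of $\rep_1$ implicit, so the obligation you flagged is genuine in both treatments; but a complete proof must discharge it, and yours explicitly does not.
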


\begin{proof}
\framebox{$\impliedby$} We show the contraposition. Let $\rep$ be a repair of $\db$ over $\fk$ that does not satisfy $q$. In particular, $\rep$ satisfies items~\ref{item:pre-one} and~\ref{item:pre-two} of Definition \ref{def:pre-repair}. Then, either $\rep$ is is a pre-repair, or there exists a pre-repair $\sep$ such that $\sep \capcloserneq{\db} \rep$. We only need to consider the latter option, with respect to which we claim that $\sep\not\models q$. Assume toward contradiction that this is not true. Let $\theta$ be a valuation such that $\theta(q)\subseteq \sep$.
Since  $\rep\not\models q$, we find a fact $A\in \theta(q)\setminus \rep$. By $\sep\models \pk$ and $\rep\cap\db\subsetneq \sep\cap\db$ we observe that $\rep\cap\db$ does not contain any fact that is key-equal with $A$.
Furthermore, we have $\sep  \subseteq\db\cup\rep$, whence $\theta(q)\subseteq \db\cup\rep$.
Thus, by Lemma \ref{lem:norepair}, $\rep$ is not a repair, which contradicts the assumption. We conclude by contradiction that $\sep\not\models q$.

\framebox{$\implies$}
We show the contraposition. Let $\rep_0$ be a pre-repair of $\db$ over $\fk$ and $q$ such that $\rep_0\not\models q$. We need to construct
 a repair $\sep$ of $\db$ over $\fk$ such that $\sep\not\models q$. 
For this, let us first show how each dangling fact can be made non-dangling using Lemma \ref{lem:gadget}

Suppose $A=R(\underline{\vec{a}},b_{k+1}, \dots ,b_{n})\in \rep_0$ is dangling with respect to $\foreignkey{R}{j}{S} \in {\fk}$ in $\rep_0$. 
Let $P$ be the set of all non-primary-key positions $(R,i)$ such that $b_i$ is orphan in $\rep_0\cup\db$ and does not belong to $\queryconst{q}$. 
Since $\rep_0$ is a pre-repair, we observe that $(R,j)$ belongs to  $P$, which in turn is not obedient over $\fk$ and $q$.
Hence, there exists a database instance $\db_{A,P}$ that satisfies the statement of Lemma \ref{lem:gadget}.
 We now show the following claim.
\begin{claim}\label{claim:nodangling}
  $\rep_1\defeq\rep_0\cup \db_{A,P}$ is a pre-repair of $\db$ over $\fk$ and $q$  such that $\rep_1\not\models q$. In particular, $A$ is not dangling in $\rep_1$ with respect to ${\fk}$.
\end{claim}
 \begin{proof}[Proof of Claim \ref{claim:nodangling}]
 Lemma \ref{lem:gadget}\eqref{it:gadget3}  entails that every fact of $\db_{A,P}$ is irrelelevant for $q$ in $\rep_1$ (and thus in $\rep_0$). Since $\rep_0 \not\models q$, we  obtain that $\rep_1 \not\models q$. Using Lemma \ref{lem:gadget}\eqref{it:gadget4} and the assumption that $\rep_0$ is a pre-repair it is also easy to see that $A$ is not dangling in $\rep_1$ with respect to ${\fk}$.
  
 It remains to show that $\rep_1$ is a pre-repair. For this, we observe first 
 by Lemma \ref{lem:gadget}\eqref{it:gadget1} that
$\keyconst{\db}\cap \adom{\db_{A,P}}=\emptyset$, whence, in particular, $\keyconst{\rep_0}\cap \keyconst{\db_{A,P}}=\emptyset$.
Moreover, we have  $\db_{A,P}\models\pk$ by Lemma \ref{lem:gadget}\eqref{it:gadget1}, and $\rep_0\models \pk$ by virtue of $\rep_0$ being a pre-repair.  We may thus conclude that $\rep_1\models \pk$.

Lastly, we need to show that $\rep_1$ is irrelevantly dangling. Suppose some fact $A'=R'(\vec{a}',b'_{k'+1}, \dots ,b'_{n'})\in \rep_1$ is dangling in $\rep_1$ with respect to $\foreignkey{R'}{i'}{S'} \in {\fk}$. 
Note that $A'$ must then belong to $\rep_0$ and  be dangling (in $\rep_0$) with respect to $\foreignkey{R'}{i'}{S'}$. This follows by Lemma \ref{lem:gadget}\eqref{it:gadget2}, which states that $\db_{A,P}\models{\fk}$.
For $i\in \{0,1\}$, define $\db_i\defeq \rep_i\cup\db$, and 
define $P_i$ as the set of non-primary-key positions of $(R',j)$ such that $b'_j$ is orphan in $\db_i$ and does not belong to $\queryconst{q}$.  
Since $\rep_0$ is a pre-repair of $\db$, we know that $(R',i')\in P_0$, and that $P_0$ is not obedient over $\fk$ and $q$.
 It thus suffices to show that $P_0=P_1$. This boils down to showing that $P_0\subseteq P_1$, as $P_1\subseteq P_0$ by definition.
 
At this point, we observe that $A\neq A'$. For, assuming this is not the case, we obtain that $A$ is dangling in $\{A\}\cup \db_{A,P}$ for a foreign key of ${\fk}$ outgoing position $(R',i')\in P_0$. On the other hand, by Lemma \ref{lem:gadget}\eqref{it:gadget4}, $A$ cannot be dangling in $\{A\}\cup \db_{A,P}$ for any foreign key of ${\fk}$ outgoing a position  of $P$. Hence, we obtain a contradiction by $P=P_0$, which holds in this case.

Suppose $(R',j)\in P_0$. Then $b'_j$ is orphan in $\db_0$ and does not belong to $\queryconst{q}$. Define $C\defeq \{b_i\mid (R,i)\in P\}$.
Since $A\neq A'$, and since $C$ consists of orphan constants of $\rep_0$,  we have $b'_j\notin C$. Since $\adom{\rep_0}\cap\adom{\db_{A,P}}\subseteq C$ by Lemma \ref{lem:gadget}\eqref{it:gadget1.5}, it follows that $b'_j$ is also orphan in $\db_1$.
Hence $P_0\subseteq P_1$, and therefore we have shown 
 that $\rep_1$ is irrelevantly dangling and thus a pre-repair. This concludes the proof of Claim \ref{claim:nodangling}.
\end{proof}
 It is now easy to see that repeated application of Claim \ref{claim:nodangling} yields a pre-repair $\sep'$ of $\db_0$ such that $\sep'\models \fk$ and $\sep\not\models q$. By Lemma \ref{lem:rep-pre} we thus find a repair $\sep$ of $\db_0$ such that $\sep\not\models q$. This concludes the proof of Theorem \ref{thm:pre-repair}.
\end{proof}

\subsection{Proof of Lemma~\ref{lem:bi}}\label{sect:proofnl}

\begin{proof}[Proof of Lemma~\ref{lem:bi}]
$\problem{REACHABILITY}$ is the following problem: Given a directed graph $\calG_0$ and its two vertices $s$ and $t$, is there a directed path from~$s$ to~$t$?
$\problem{REACHABILITY}$ is $\NL$-complete, and remains $\NL$-complete
when the inputs are acyclic graphs.
We give a first-order reduction from $\problem{REACHABILITY}$ to the complement of $\certainty{q}{\fk}$.
This proves the lemma, because $\NL$ is closed under complement.

Let $\calG_0$ be a directed acyclic graph, and let $s$ and $t$ be its two vertices. Let $\calG=(V,E)$ be obtained from $\calG_0$ by adding an edge from $t$ to $s$.
Obviously, there is a first-order reduction from $\calG_0$ to $\calG$, and $s$ is connected to $t$ in $\calG_0$ if and only if the same holds in $\calG$. Since first-order reductions are closed under composition, it suffices to construct a  first-order reduction from $\calG$ to a database $\db$ such that $s$ is connected to $t$ in $\calG$ if and only if $\db$ is a ``no"-instance of $\certainty{q}{\fk}$. 

Assume that some foreign key $\foreignkey{N}{j}{O}\in \lclosure{\fk}$ is block-interfering.
Let the unique $N$-atom of $q$ be 
\[
F=N(\underline{t_1, \dots ,t_k},t_{k+1},\dots,t_{n}).
\] 

Let $y=t_j$. The unique $O$-atom of $q$ is of the form $G\defeq O(\underline{y},\vec{z})$ for some sequence of distinct variables $\vec{z}$. Moreover, $y$ is a variable such that $\FD{q}\not\models\fd{\emptyset}{\{y\}}$.

Let
\begin{align}
C &=\{z\in\queryvars{q}\mid\FD{q}\models\fd{\emptyset}{\{z\}}\}; \mbox{and}\label{eq:C}\\
q_{0} &=q\setminus\fkclosure{O}{q}{\fk}.\nonumber
\end{align}

We consider cases \eqref{it:marking} and \eqref{it:bifour} of Definition \ref{def:block-int} simultaneously. Define $P_a\defeq \{(N,k+1), \ldots ,(N,n)\}\setminus\{(N,j)\}$ and $P_b\defeq \emptyset$ for cases \eqref{it:marking} and \eqref{it:bifour}, respectively. Let $e\in \{a,b\}$. 

Let $c$ be a fresh constant.
For every vertex $u\in V$, let $\theta_{u}$ be a valuation over $\queryvars{q\cup\{F_{P_e}\}}$ such that 
$$
\theta_{u}(z)=
\begin{cases}
c & \mbox{if $z\in C$};\\
c_{z,u}  & \mbox{otherwise}.
\end{cases}
$$

for every variable $z$ to $c_{z,u}$
where $c_{z,u}$ denotes a fresh constant which depends on (and only on) $z$ and $u$.
That is, $c_{z,u}=c_{z',u'}$ if and only if $z=z'$ and $u=u'$.

If $J$ is an $n$-ary atom, $u$ a term, and $i\leq n$, we write $\substitute{J}{i}{u}$ for the atom obtained from $J$ by replacing its $i$th term with $u$.

Construct a database instance $\db$ as follows:
\begin{itemize}
\item
$\db$ includes $\theta_{s}(q)$;
\item
for every vertex $u\in V\setminus\{s\}$, $\db$ includes $\theta_{u}(q)\setminus  \{\theta_u(G)\}$; and
\item
for every edge $(u,v)$ in $E$, $\db$ contains a fact 
\begin{equation}\label{eq:auv}
A_{u,v}=N(\underline{a_1, \dots ,a_k},a_{k+1},\dots,a_{n}),
\end{equation}
where 
\[
a_i = 
\begin{cases}
c_{u,v,i}&\text{if }(R,i)\in P_e,\\
\theta_v(t_i)&\text{if } i = j,\\
\theta_{u}(t_i) &\text{otherwise},
\end{cases}
\]
where $c_{u,v,i}$ is a fresh constant. 
\end{itemize}
In particular, $c_{u,v,i}$ is a constant that is orphan in $\db$ and does not belong to $\queryconst{q}$.
%

\tikzstyle{edge}=[->,shorten >=.5pt,>=stealth,thick]
\tikzstyle{vertex}=[circle,draw=black!50,fill=blue!10,thick,inner sep=0pt,minimum size=4mm]

\begin{figure}
\begin{center}
\begin{tabular}{cc}
\begin{adjustbox}{valign=t}
\begin{tikzpicture}
\node (1) [vertex] {$s$};
\node (2) [vertex, above right =10mm of 1] {$1$};
\node (3) [vertex, below right  =10mm of 1] {$2$};
\node (4) [vertex, below right  =10mm of 2] {$t$};

\draw [edge] (1) -- (2);
\draw [edge] (1) -- (3);
\draw [edge] (3) -- (4);
\draw [edge] (4) -- (1);
\end{tikzpicture}
\end{adjustbox}

&
$
\begin{array}[t]{cc|c}
&F & \\\cline{3-3}
& & \theta_s(F)\\
& & A_{s,1} \\
& & A_{s,2} \\
\cdashline{3-3}
\mapsto & & \theta_1(F)\\
\cdashline{3-3}
& & \theta_2(F)\\
& & A_{2,t}\\
\cdashline{3-3}
& & \theta_t(F)\\
& & A_{t,s} \\\cdashline{3-3}
\end{array}
$
\,
$
\begin{array}[t]{c|c}
G & \\
\cline{2-2}
  & \theta_s(G)\\\cdashline{2-2}
\end{array}
$
\,
$
\begin{array}[t]{c|c}
H_1 & \\
\cline{2-2}
  & \theta_s(H_1)\\
  \cdashline{2-2}
    & \theta_1(H_1)\\
  \cdashline{2-2}
  & \theta_2(H_1)\\
  \cdashline{2-2}
  & \theta_t(H_1)\\\cdashline{2-2}
\end{array}
$
\,
$\dots$
$
\begin{array}[t]{c|c}
H_m & \\
\cline{2-2}
  & \theta_s(H_m)\\
  \cdashline{2-2}
    & \theta_1(H_m)\\
  \cdashline{2-2}
  & \theta_2(H_m)\\
  \cdashline{2-2}
  & \theta_t(H_m)\\\cdashline{2-2}
\end{array}
$
\end{tabular}
\end{center}
\caption{Reduction from graph reachability over $(\calG,s,t)$ to database instance $\db$ with respect to $q=\{F,G,H_1, \dots ,H_m\}$}
\label{fig:reach2}
\end{figure}
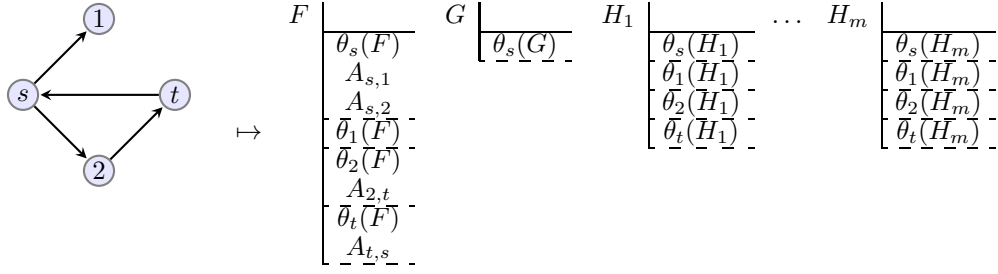

Note that a fact of the form $A_{u,v}$ is key-equal with others of the form $A_{u,w}$ and $\theta_u(F)$.
It is easy to see that primary key violations arise only because of facts of the form $A_{u,v}$.
For this, suppose $\rep$ contains key-equal facts $\theta_u(H)$ and $\theta_v(H)$. Then, $\keyvars{H}$ must be a subset of $C$, that is, $\FD{q}\models \keyvars{H}$. Since $\FD{q}$ contains $\fd{\keyvars{H}}{\atomvars{H}}$, it follows that $\atomvars{H}$ is likewise a subset of $C$, whence $\theta_u(H)$ and $\theta_v(H)$ must be identical and there is no primary key violation.

For instance, suppose $F$ is of the form $N(\underline{x},y,z)$, and let $j=2$. Then, $\theta_u(F) = N(\underline{c_{x,u}},c_{y,u},c_{z,u})$,
and $A_{u,v}$ is either the form
 $N(\underline{c_{x,u}},c_{y,v},c_{z,u})$ or $N(\underline{c_{x,u}},c_{y,v},c_{u,v,2})$, for $e=a$ and $e=b$ respectively. 
 
The construction of the database instance $\db$ is illustrated in Fig.~\ref{fig:reach2} for the graph that appends the acyclic example graph of Fig.~\ref{fig:reach} with an extra edge from $t$ to $s$.

We claim that $\db$ is a ``no"-instance of $\certainty{q}{\fk}$ if and only if there is a path from $s$ to $t$ in $\calG$. 

\noindent
\framebox{$\impliedby$}
Suppose there is a path from $s$ to $t$. 
 Let $U$ consist of all vertices $u\in V$ such that there is a path from $u$ to $t$. We construct a pre-repair $\rep$ of $\db$ over $q$ that does not satisfy $q$. First, let $\sep$ be a database instance such that
 for each vertex $u\in U$,
\begin{itemize}
\item $\sep$ includes $\theta_{u}(q)\setminus  \{\theta_u(F)\}$; and
\item $\sep$ takes from $\db$ exactly one fact of the form $A_{u,v}$, where it is required that $v\in U$.
\end{itemize}
Observe that $\sep \setminus \db$ consists of facts of the form $\theta_u(G)$ where $u\in U\setminus \{s\}$.
Moreover, by the assumption that there is an edge from $t$ to $s$, we have included $A_{t,s}$ in $\sep$. See Fig. \ref{fig:reach3} for an illustration of $\sep$ with regards to the  example case in Fig. \ref{fig:reach2}.

\begin{figure}
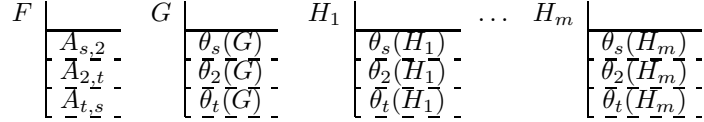

\begin{center}
$
\begin{array}[t]{c|c}
F & \\\cline{2-2}
 & A_{s,2} \\
\cdashline{2-2}
 & A_{2,t}\\
\cdashline{2-2}
 & A_{t,s}\\\cdashline{2-2}
\end{array}
$
\,
$
\begin{array}[t]{c|c}
G & \\
\cline{2-2}
  & \theta_s(G)\\
  \cdashline{2-2}
    & {\theta_2(G)}\\
    \cdashline{2-2}
  & {\theta_t(G)}\\\cdashline{2-2}
\end{array}
$
\,
$
\begin{array}[t]{c|c}
H_1 & \\
\cline{2-2}
  & \theta_s(H_1)\\
  \cdashline{2-2}
  & \theta_2(H_1)\\
  \cdashline{2-2}
  & \theta_t(H_1)\\\cdashline{2-2}
\end{array}
$
\,
$\dots$
$
\begin{array}[t]{c|c}
H_m & \\
\cline{2-2}
  & \theta_s(H_m)\\
  \cdashline{2-2}
  & \theta_2(H_m)\\
  \cdashline{2-2}
  & \theta_t(H_m)\\\cdashline{2-2}
\end{array}
$
\end{center}
\caption{Database instance $\sep$ with respect to Fig.~\ref{fig:reach2}.}
\label{fig:reach3}
\end{figure}

Obviously $\sep$ is consistent with $\pk$. We claim that is it also
  irrelevantly dangling for $(\db,\fk,q)$. For this, using the fact that  $q\models \fk$, in Case \eqref{it:bifour} there are no facts in $ \sep$ that are dangling with respect to ${\fk}$. In Case \eqref{it:marking}, suppose $B\in \sep$ is dangling for $\sigma\in {\fk}$. Then, using the fact that  $q\models \fk$, we observe that $B$ is of the form $A_{u,v}=N(\underline{\vec{a}},b_{k+1}, \dots ,b_n)$ and $\sigma$ of the form $\foreignkey{N}{i}{S}$ where $(N,i)\in P_a$. Note that $P_a$ is the set of all non-primary-key positions $(N,i)$ such that $b_i$ is orphan in $\sep\cup \db$ and does not belong to $\queryconst{q}$. Moreover, $P_a$ is not obedient. The claim thus follows.

We may now take a pre-repair $\rep$ such that $\rep \capcloserneq{\db} \sep$. By Theorem \ref{thm:pre-repair} it suffices to show that $\rep$ does not satisfy $q$. For this, note that any $N$-fact in $ \rep$ is either of the form $\theta_u(F)$ for $u\in V\setminus U$, or of the form $A_{u,v}$. 
 Assume toward contradiction that $\gamma$ is a valuation such that $\gamma(q)\subseteq \rep$.

Assume first that $\gamma(F)$ is of the form $\theta_u(F)$ for  $u\in V\setminus U$.  Then, $\theta_u(F)$ is dangling in $\rep$ for $\foreignkey{N}{j}{O}$, for neither $\sep$ nor $\db$ contains an $O$-fact with a matching primary-key value. Since $\rep \subseteq\sep \cup  \db$, it follows that $\gamma(F)$ is dangling in $\rep$. However, by consistency of $q$ this is not possible, a contradiction.

Suppose then $\gamma(F)$ is of the form $A_{u,v}$. 
 We consider cases \eqref{it:marking} and \eqref{it:bifour} separately:
 
 \noindent
\textbf{Case \eqref{it:marking}.} Recall that $P_a$ is not obedient. It follows by Theorem \ref{thm:syntactic-obedience} that for some $(N,i)\in P_a$, $t_i$ is not an orphan variable of $q$, that is, $t_i$ is a constant or a variable that occurs also elsewhere in $q$.
However, the $i$th constant in $A_{u,v}$ is fresh; in particular, this constant does not appear in ${q}$ or elsewhere in ${\rep}$. 
This leads to a contradiction with the assumptions that $\gamma(F)=A_{u,v}$ and $\gamma(q)\subseteq \rep$.

\noindent
\textbf{Case \eqref{it:bifour}.} For some $i\in [l]$, we have that $t_i$ and $t_j$ are variables that are connected in $\constgraph{V'}{q'}$, where $q'\defeq q\setminus\{F\}$ and $V'\defeq \queryvars{q}\setminus C$, for the set $C$ defined in \eqref{eq:C}. Denote $t_i$ by $x$, and recall that we have earlier denoted $t_j$ by $y$.
Since $x$ is not from $C$, 
we observe that 
\begin{equation}\label{eq:diff}
\gamma(x)=a_i=\theta_u(x)=c_{x,u},
\end{equation}
for $A_{u,v}$ of the form \eqref{eq:auv}.

Let $(z_1, \ldots ,z_n)$ be a path in $\constgraph{V'}{q'}$, where $z_1=x$ and $z_n=y$. Then, we find $H_1, \ldots ,H_{n-1}\in q'$, over relation names $R_1, \ldots ,R_{n-1}$, such that $\{z_i,z_{i+1}\}\subseteq \atomvars{H_i}\cap V'$, for $i\in [n-1]$.

Let $i\in [n-1]$. Note that each $R_i$-fact in $\sep$ is of the form $\theta_w(H_i)$ for a vertex $w$. Since all $R_i$-facts in $\rep\setminus \sep$ must be from $\db$, the same holds for $\rep$. We now show by induction that $\gamma(H_i)=\theta_u(H_i)$, where $u$ is the vertex that appears in the subscript of  $A_{u,v}$.
The base step follows by \eqref{eq:diff}, since $x\in \atomvars{H_1}$. For the induction step, suppose $\gamma(H_i)=\theta_u(H_i)$. Using this and the fact that $z_{i+1}\in \atomvars{H_i}\cap V'$, we first obtain that $\gamma(z_{i+1})=\theta_u(z_{i+1})$. Since $z_{i+1}\in \atomvars{H_{i+1}}\cap V'$, it thus must hold that $\theta_u(z_{i+1})=c_{z_{i+1},u}$, whence $\gamma(H_{i+1})=\theta_u(H_{i+1})$. This concludes the induction step and the induction proof.

We have now established that $\gamma(H_{n-1})=\theta_u(H_{n-1})$. Since $y\in \atomvars{H_{n-1}}\cap V'$,  we obtain that $\gamma(y)=\theta_u(y)=c_{y,u}$. On the other hand, since $y$ is not from $C$, we also have by \eqref{eq:auv} that $\gamma(y)= a_j= \theta_{v}(y)=c_{y,v}$. The vertices $u$ and $v$ cannot be identical since there are no self-loops in $G$ (unless $s=t$, which can be ruled out w.l.o.g.). Hence we obtain a contradiction by $c_{y,u}\neq c_{y,v}$. 

We conclude by contradiction that $\gamma(F)$ cannot be of the form $A_{u,v}$ either. In particular, we have shown that $\gamma(F)\notin \rep$, which contradicts the assumption that $\gamma(q)\subseteq \rep$. Hence $\rep$ does not satisfy  $q$. This concludes the direction from right to left.

\noindent
\framebox{$\implies$}
Suppose there is no path from $s$ to $t$ in $\calG$. We show that every repair of $\db$ satisfies $q$. Assume toward contradiction that $\rep$ is a repair of $q$ that does not satisfy $q$. We claim that $s$ is then the starting point of an infinite path in $\calG$. Recall that $\calG$ extends a directed acyclic graph with an edge from $t$ to $s$. Hence, it suffices to prove the claim, as it entails that there is a path from $s$ to $t$, thus contradicting the assumption.

First, we observe by Corollary \ref{cor:relevantblockbis} 
  that $\theta_s(G)\in \rep$, since $\{\theta_s(G)\}\in \theblock{\db}{S}$ is a  block that is relevant for $q$ in $\db$. Assume we have constructed a path $(u_0, \ldots ,u_i)$ in $\calG$, where $u_0=s$ and $A\in \rep$ for some $A$ such that $A \keyeq \theta_{u_i}(G)$. We show how to extend this path by $u_{i+1}$ such that $B\in \rep$ for some $B$ such that $B \keyeq \theta_{u_{i+1}}(G)$. The claim that there is an infinite path starting from $s$ follows from this.

We first claim that $\rep$ must subsume $\theta_{u_i}(q\setminus \{F,G\})$ and contain some fact from $\db$ that is key-equal with $\theta_{u_i}(F)$. Suppose this were not the case. Then, build a consistent database instance as follows. Extend first $\rep$ with $\theta_{u_i}(q\setminus \{F,G\})$, and then remove any fact of $\rep$ that is key-equal with some distinct fact in $\theta_{u_i}(q\setminus \{F,G\})$; in particular, the removed facts cannot be from $\db$. Furthermore, if the obtained database instance does not contain any fact from $\db$ that is key-equal with $\theta_{u_i}(F)$, then add $\theta_{u_i}(F)$ and remove its possible key-equal fact that is not in $\db$. Denote the obtained database instance by $\sep$. Note that no primary-key value has been removed, and no primary-key violation has been introduced. Recall also that $q\models \fk$, and that
 $\sep$ contains a fact $A$ such that $A \keyeq \theta_{u_i}(G)$. Hence we obtain that $\sep\models \pk\cup\fk$. Furthermore, since we removed only facts that were not in $\db$, and added a non-empty set of facts from $\db$, it follows that $\sep \closerneq{\db} \rep$, a contradiction with the assumption that $\rep$ is a repair. Hence the claim. 

Suppose now $\theta_{u_i}(F)\in \rep$. Then, $\rep$ subsumes $\theta_{u_i}(q\setminus \{G\})$ and contains some $A$ such that $A \keyeq \theta_{u_i}(G)$.
Recall that $G$ is an obedient atom of the form $O(\underline{y},\vec{z})$, meaning that 
$
q' \fkmodels{\fk} q
$
 for  $q'= (q \setminus \fkclosure{O}{q}{\fk}) \cup \{O(\underline{y},\vec{v})\}$,
where $\vec{v}$ is a sequence of distinct fresh variables. We can now easily construct  from $\theta_{u_i}$ and $A$ a valuation $\gamma$ such that  $\gamma(q')\subseteq \rep$. Having $\rep\models \fk$ and $\rep \models q'$, we thus obtain  $\rep \models q$. This however contradicts the assumption made in the beginning. We conclude that $\theta_{u_i}(F)\notin \rep$.  As shown in the previous paragraph,  $\rep$ must contain some fact from $\db$ that is key-equal with $\theta_{u_i}(F)$.
Hence $\calG$ must contain an edge $(u_i,v)$ such that some fact of the form $A_{u_i,v}\in\db$ belongs to $\rep$.

Now, $\rep$ contains an $N$-fact whose $j$th position is occupied by $c_{y,v}$. This is only possible if $\rep$ contains an $O$-fact with primary-key value $c_{y,v}$. In other words, $\rep$ must contain a fact that is key-equal with $\theta_v(G)$. Hence, setting $u_{i+1}$ as $v$, we have extended the path $(u_0, \ldots ,u_i)$ with a vertex $u_{i+1}$ such that $B\in \rep$ for some $B$ such that $B \keyeq \theta_{u_{i+1}}(G)$. We conclude that there is an infinite path starting from $s$, which leads to a contradiction with our assumption.
This concludes the direction from left to right. 

We conclude the proof by noting that the the reduction from $\calG$ to $\db$ is clearly in $\FO$ (the composed reduction from $\calG_0$ to $\db$ belongs in fact to quantifier-free $\FO$).
\end{proof}

\subsection{Proofs of Propositions~\ref{pro:nlc} and~\ref{pro:pc}}

\begin{proof}[Proof sketch of Proposition~\ref{pro:nlc}]
$\NL$-hardness is a consequence of Lemma~\ref{lem:bi}.
To show membership in $\NL$, we reduce the complement of $\certainty{q}{\fk}$ to the problem $\problem{REACHABILITY}$.
Let $\db$ be a database instance that is input to $\certainty{q}{\fk}$.
Construct a directed graph as follows.
The vertex-set is $V\defeq\{c\mid N(\underline{c},c)\in\db\}\cup\{\bot\}$, where $\bot$ is a fresh value.
Edges are introduced as follows:
for every vertex $c$, 
if the block $N(\underline{c},*)$ of $\db$ is $\{N(\underline{c},c)$, $N(\underline{c},d_{1})$, \dots, $N(\underline{c},d_{n})\}$, then 
\begin{itemize}
\item
if $\{d_{1},\ldots,d_{n}\}\subseteq V$, then add edges $\{(c,d_{i})\mid 1\leq i\leq n\}$;
\item
otherwise add an edge $(c,\bot)$.
\end{itemize}
Finally, for every fact $O(\underline{c})$ in $\db$, if $c\in V$, then mark the vertex~$c$.
It can now be verified that $\db$ is a ``no''-instance of $\certainty{q}{\fk}$ if and only if $\bot$ can be reached from every marked vertex.
\end{proof}

\begin{proof}[Proof sketch of Proposition~\ref{pro:pc}]
Reduction from and to $\problem{DUAL}$ $\problem{HORN}$ $\problem{SAT}$, which is $\P$-complete~\cite{DBLP:conf/stoc/Schaefer78}.
Given an instance $\varphi$ of $\problem{DUAL}$ $\problem{HORN}$ $\problem{SAT}$,
construct a database instance $\db_{\varphi}$ as follows:
\begin{itemize}
\item
add $O(\underline{1})$;
\item
for every clause $C_{i}=p_{1}\lor\dotsm\lor p_{n}$, add $N(\underline{i},c,1)$ together with $\{N(\underline{i},d,p_{i})\mid i\in[n]\}$; and
\item
for every clause $C_{i}=\neg q\lor p_{1}\lor\dotsm\lor p_{n}$, add $N(\underline{i},c,q)$ together with $\{N(\underline{i},d,p_{i})\mid i\in[n]\}$.
\end{itemize}
We claim that $\varphi$ is satisfiable if and only if $\db_{\varphi}$ is a ``no''-instance of $\certainty{q}{\fk}$.

To show membership in $\P$, we reduce the complement of the problem $\certainty{q}{\fk}$ to $\problem{DUAL}$ $\problem{HORN}$ $\problem{SAT}$.
Given a database instance $\db$ that is input to the problem $\certainty{q}{\fk}$, construct $\varphi_{\db}$ as follows:
\begin{itemize}
\item
for every fact $O(\underline{p})$, add a clause $p$;
\item
consider every $N$-block of the following form, with $n\geq 1$, $m\geq 0$, and $b_{i}\neq c$ for $i\in[m]$:
$$
\begin{array}{c|ccc}
N & \underline{x} & c & y\\\cline{2-4}
  & i & c & p_{1}\\
  & i & c & p_{2}\\
  & \vdots & \vdots & \vdots\\
  & i & c & p_{n}\\
  & i & b_{1} & q_{1}\\
  & i & b_{2} & q_{2}\\
  & \vdots & \vdots & \vdots\\
  & i & b_{m} & q_{m}\\\cdashline{2-4}
\end{array}
$$
For such an $N$-block, add, for every $j\in[n]$, a clause $\neg p_{j}\lor q_{1}\lor q_{2}\lor\dotsm\lor q_{m}$. 
\end{itemize}
We claim that $\db$ is a ``no''-instance of $\certainty{q}{\fk}$ if and only if $\varphi_{\db}$ is satisfiable.
\end{proof}

\section{Proofs for Section~\ref{sec:fo}}\label{sect:apE}

\subsection{Helping Lemma}\label{sec:helpingfo}


We will use the following helping lemma.

\begin{lemma}\label{lem:obedience-opo}
Let $q$ be query in $\sjfbcq$, and $\fk$ a set of foreign keys about~$q$.
Assume that every foreign key in $\fk$ is strong.
Let $\foreignkey{R}{i}{S}$ be a foreign key in $\fk$, where $S$ is obedient over $\fk$ and~$q$.
Assume that $\fkclosure{S}{q}{\fk}=\{S\}$.
Assume that at least one of the following properties holds:
\begin{enumerate}[label=(\roman*)]
\item\label{it:everyF} The attack graph of $q$ is acyclic, and $\keyvars{F}\neq\emptyset$ for every $F\in q$.
\item\label{it:obedientR} $R$ is obedient over $\fk$ and $q$.
\end{enumerate}
Let $q_{0}=q\setminus\{S\}$ and $\fk_{0}=\fk\setminus\{\foreignkey{R}{i}{S}\}$.
Suppose $(q,\fk)$  has no  block-interference.
Then $\fk_0$ is about $q_0$, and  $(q_0,\fk_0)$ has  no block-interference.
\end{lemma}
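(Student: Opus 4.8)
The plan is to prove both conclusions by showing that the deleted atom $S$ is a \emph{sink} that is mentioned by no foreign key of $\fk_0$, and then that deleting such an atom neither breaks about-ness nor creates new block-interference. Throughout I write $P_S,P_R,P_O$ for the sets of non-primary-key positions of the respective relations, and I freely use the syntactic characterisation of obedience (Theorem~\ref{thm:syntactic-obedience}).

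First I would establish that $\foreignkey{R}{i}{S}$ is the \emph{only} foreign key of $\fk$ mentioning $S$; this at once yields that $\fk_0$ is about $q_0$, since the only relation name of $q$ missing from $q_0$ is $S$, and a surviving foreign key of $\fk_0$ can fail to be satisfied by $q_0=q\setminus\{S\}$ only if it references $S$, which it does not. No foreign key is outgoing from $S$: every foreign key is strong, so such a key is $\foreignkey{S}{j}{U}$ with $j\geq 2$; if $U\neq S$ then the $U$-atom would lie in $\fkclosure{S}{q}{\fk}$, contradicting $\fkclosure{S}{q}{\fk}=\{S\}$, while $U=S$ would place $(S,j)\in P_S$ on a self-loop, contradicting condition~\ref{it:strongweak} for the obedient $S$.

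The hard part is excluding a second incoming key $\foreignkey{T}{l}{S}\neq\foreignkey{R}{i}{S}$. Since both keys are about $q$, the term $u$ occupying $(S,1)$ also occupies the non-primary-key positions $(R,i)$ and $(T,l)$. Under hypothesis~\ref{it:obedientR} I would contradict Theorem~\ref{thm:syntactic-obedience} directly: $(R,i)\in\posclosure{P_R}{\fk}$, so a constant $u$ violates~\ref{it:no-const}, and a variable $u$ also sitting at $(T,l)$ violates~\ref{it:distinct} when $(T,l)\in\posclosure{P_R}{\fk}$ and~\ref{it:disjoint} otherwise. Under hypothesis~\ref{it:everyF}, the nonempty-key assumption forces $u$ to be a variable (else $\keyvars{S}=\emptyset$) and makes $\FD{q}\not\models\fd{\emptyset}{\{v\}}$ for every variable $v$; acyclicity of the attack graph forbids the cycle $R\attacks{q}T\attacks{q}R$, hence $u\in\keyclosure{R}{q}$ or $u\in\keyclosure{T}{q}$, say the former by symmetry. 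I would then show, by induction on the computation of $\keyclosure{R}{q}$ and using that every key is nonempty, that the functional-dependency chain witnessing $\fd{\keyvars{R}}{u}$ inside $q\setminus\{R\}$ induces a path in $\constgraph{V}{q\setminus\{R\}}$ from a primary-key variable of $R$ to $u$. That is precisely condition~\eqref{it:bifour}, and together with the obedient $S$ (condition~\eqref{it:obedientO}) and $u\in V$ (condition~\eqref{it:no-constant}) it makes $\foreignkey{R}{i}{S}$ block-interfering in $q$, contradicting the hypothesis. This ``acyclicity $\Rightarrow$ Gaifman connectivity $\Rightarrow$ block-interference'' implication is the main obstacle.

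For the second conclusion I would argue the contrapositive. Assume some $\foreignkey{N}{j}{O}\in\lclosure{\fk_0}\subseteq\lclosure{\fk}$ is block-interfering in $q_0$; then $N,O$ occur in $q_0$, hence differ from $S$, and I would show the same key is already block-interfering in $q$. The point is that passing from $(\fk_0,q_0)$ to $(\fk,q)$ only adds the atom $S$ and the edges $(R,i)\to(S,\cdot)$, and since $S$ has no outgoing foreign key it is a sink: no new cycle is created, and each closure $\posclosure{P}{\fk}$ can only acquire positions of $S$, whose non-primary-key slots carry the variables of $\atomvars{S}\setminus\{u\}$, which by obedience of $S$ (conditions~\ref{it:no-const}--\ref{it:distinct}) occur nowhere else in $q$. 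Using this I would verify that obedience of $O$ lifts from $q_0$ to $q$ (since $\posclosure{P_O}{\fk}$ is $\posclosure{P_O}{\fk_0}$ together with at most sink positions of $S$, where the obedience conditions hold); that $\FD{q}\not\models\fd{\emptyset}{\{t_j\}}$ still holds, because the new dependency $\fd{\keyvars{S}}{\atomvars{S}}$ can only add the sink variables $\atomvars{S}\setminus\{u\}$, none of which lies in $q_0$; and that the witness of condition~\eqref{it:marking} or~\eqref{it:bifour} survives, as disobedience is preserved under adding sink edges and $\constgraph{V_0}{q_0\setminus\{N\}}$ embeds into $\constgraph{V}{q\setminus\{N\}}$ (here $V_0\subseteq V$ by the same closure computation). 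Assembling these verifications gives block-interference of $\foreignkey{N}{j}{O}$ in $q$, completing the contradiction and hence the proof.
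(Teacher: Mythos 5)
Your overall architecture mirrors the paper's proof: treat $S$ as a sink, show that $\foreignkey{R}{i}{S}$ is the only foreign key mentioning $S$ (which yields about-ness), prove a two-way transfer of obedience between $(\fk,q)$ and $(\fk_0,q_0)$, and lift a block-interfering key from $q_0$ back to $q$ via a shortest-FD-proof argument and a Gaifman-graph embedding; those parts are sound and essentially coincide with the paper's Claim on obedience transfer and its contrapositive argument. The genuine gap is in your exclusion of a second incoming key under hypothesis~\ref{it:everyF}: your argument tacitly assumes $T\neq R$. When $T=R$ and $l\neq i$, i.e.\ $\fk$ contains two strong foreign keys $\foreignkey{R}{i}{S}$ and $\foreignkey{R}{l}{S}$ out of distinct non-primary-key positions of the same atom, there is no attack-graph cycle to forbid (attacks are only defined between distinct atoms), the disjunction ``$u\in\keyclosure{R}{q}$ or $u\in\keyclosure{T}{q}$'' loses its justification, and --- worse --- the contradiction you aim for via condition~\eqref{it:bifour} can be outright unavailable. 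Concretely, take $q=\{R(\underline{x},y,y),\, S(\underline{y})\}$ with $\fk=\{\foreignkey{R}{2}{S},\foreignkey{R}{3}{S}\}$: the attack graph is acyclic, every atom has a nonempty key, $S$ is obedient with $\fkclosure{S}{q}{\fk}=\{S\}$, yet $y$ is connected to no key variable of $R$ in $\constgraph{V}{q\setminus\{R\}}$, since $x$ does not occur in $q\setminus\{R\}=\{S(\underline{y})\}$. So condition~\eqref{it:bifour} fails, and your proof-by-contradiction stalls exactly where it must fire.

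The paper closes this sub-case through condition~\eqref{it:marking} instead: because $u$ occupies both $(R,i)$ and $(R,l)$, and $S$ is a sink, the closure of $\{(R,l)\}$ under $\fk$ contains $(R,l)$ but not $(R,i)$, so condition~\ref{it:disjoint} of Theorem~\ref{thm:syntactic-obedience} is violated and $\{(R,l)\}$ is disobedient; by Corollary~\ref{cor:obedient-position} the set $\{(R,k+1),\dots,(R,n)\}\setminus\{(R,i)\}$ is then disobedient, i.e.\ condition~\eqref{it:marking} holds, while nonempty keys give $\FD{q}\not\models\fd{\emptyset}{\{u\}}$ (condition~\eqref{it:no-constant}) and obedience of $S$ gives condition~\eqref{it:obedientO}; hence $\foreignkey{R}{i}{S}$ is block-interfering, the desired contradiction. (The paper also observes that in this sub-case $R$ is automatically disobedient, so hypothesis~\ref{it:obedientR} cannot hold and only hypothesis~\ref{it:everyF} needs this branch; your hypothesis-\ref{it:obedientR} argument via Theorem~\ref{thm:syntactic-obedience} does cover $T=R$ correctly.) Adding this branch repairs the proof; the remainder of your proposal needs only the routine elaboration you already sketch.
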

\begin{proof}
First we make some observations about the unique $S$-atom of $q$. 
Suppose this atom is of the form $S(\underline{y}, v_2, \dots ,v_m)$.  We may assume that $S$ has signature $[m,1]$, for $S$ has an  incoming (unary) foreign key. 

Since $S$ is obedient, $\fkclosure{S}{q}{\fk}=\{S\}$, and $\fk$ contains only strong foreign keys, it follows that $\fk$ does not contain any foreign key of the form $\foreignkey{S}{i}{T}$ for $T\neq S$. Moreover, it does not contain any foreign key of the form $\foreignkey{S}{i}{S}$ by Theorem~\ref{thm:syntactic-obedience}~\ref{it:strongweak}. 
This implies that $R\neq S$, and moreover that $\posclosure{(P_S)}{\fk}=P_S$ for $P_S\defeq \{(S,2), \dots, (S,m)\}$.
It follows by Theorem \ref{thm:syntactic-obedience} that $v_1, \dots ,v_m$ are orphan variables of $q$. 

Also, since $\fk$ is about $q$, position $(R,i)$ is occupied in $q$ by the primary key term $y$ of $S(\underline{y}, v_2, \dots ,v_m)$. Since $R\neq S$, the unique $R$-atom of $q$ must belong to $q_0$. Hence, position $(R,i)$ is occupied also in $q_0$ by $y$. Note that $y$ must be a variable.
In case~\ref{it:everyF} this is immediate. In case~\ref{it:obedientR}, we note that that the non-primary-key position $(R,i)$ of an obedient atom $R$ cannot be occupied by a constant.

Let us then turn to the claim that $\fk_0$ is about $q_0$. For this, we show that $\fk_0$ contains no foreign key in which $S$ appears. Since we have already shown that  $\outgoing{S}{\fk_0}=\emptyset$, it remains to be shown that $\incoming{S}{\fk_0}=\emptyset$.

Assume for the sake of contradiction that $\incoming{S}{\fk_0}$ contains a foreign key. This foreign key must be of the form $\foreignkey{T}{j}{S}$ with either $i\neq j$ or $T\neq R$ (or both), and it is strong by the hypothesis of the lemma.
Two cases can occur:
\begin{description}
\item[Case that $T\neq R$.] 
Since $\fk$ is about $q$, the variable $y$ occurs in $q$ at position $(T,j)$. Assuming~\ref{it:obedientR}, this is not possible since the non-primary-key position $(R,i)$ of an obedient atom cannot be occupied by the same variable as the non-primary-key position $(T,j)$ of another atom. Hence we assume~\ref{it:everyF}.
By the assumption that $q$ has an acyclic attack graph, either $R\nattacks{q}T$ or  $T\nattacks{q}R$ (or both).
Assume $R\nattacks{q}T$ (the other possibility is symmetrical).
Then, it must be that $y\in\keyclosure{R}{q}$. 
From the hypothesis that $\keyvars{F}\neq\emptyset$ for every $F\in q$, it follows that $y$ is connected to some variable of $\keyvars{R}$ in $q\setminus\{R\}$.
Then $(q,\fk)$ has block-interference, a contradiction. 
\item[Case that $T=R$ and $i\neq j$.] 
In this case, $\fk$ contains strong foreign keys $\foreignkey{R}{i}{S}$ and $\foreignkey{R}{j}{S}$.  
Both non-primary-key positions $(R,i)$ and $(R,j)$ are non-obedient since they share the variable $y$ by the assumption that $\fk$ is about $q$. 
Hence condition~\eqref{it:marking} of block-interference holds for both foreign keys. Moreover, $R$ is disobedient, and thus case~\ref{it:everyF} of the current lemma statement must hold. 
Since $\keyvars{F}\neq\emptyset$ for every $F\in q$, it follows that $\FD{q}\not\models\fd{\emptyset}{\{y\}}$. 
Consequently, condition~\eqref{it:no-constant} of block-interference holds for the variable $y$. Lastly, since $O$ is obedient, we conclude that both foreign keys are block-interfering in $q$. This contradicts the assumption that $(q,\fk)$ has block-interference. 
\end{description}
We conclude by contradiction that $\incoming{S}{\fk_0}=\emptyset$. This establishes that $\fk_{0}$ is about $q_{0}$.

Next, we turn to the claim that $(q_0,\fk_0)$ has  no block-interference.
Consider first the following claim.
\begin{claim}\label{claim:P}
Let $P$ be a set of non-primary-key positions concerning some relation name appearing in $q_0$. Then, $P$ is obedient over $\fk$ and $q$ if and only if it is obedient over $\fk_0$ and $q_0$.
\end{claim}
\begin{proof}[Proof of Claim \ref{claim:P}]
We prove this claim using Theorem \ref{thm:syntactic-obedience}. 
Consider two different cases:

\paragraph{Case that $(S,1)\in \poscomplement{P}{\fk}$.} It is clear that in this case $\posclosure{P}{\fk_0}=\posclosure{P}{\fk}$. 
Moreover, since $\fk_0$ is about $q_0$, all the $S$-positions must belong to $\poscomplement{P}{\fk_0}$. It also follows that $\poscomplement{P}{\fk_0}=\poscomplement{P}{\fk}$.
Hence,
 the conditions listed in Theorem \ref{thm:syntactic-obedience} that concern only the closure (and not the complement) of $P$ are equivalent under $(\fk,q)$ and $(\fk_0,q_0)$. We thus only need to consider
the condition Theorem~\ref{thm:syntactic-obedience}~\ref{it:disjoint}. 

For one direction, it is clear that if no variable occurs in $q$ at positions of $\posclosure{P}{\fk}$ and $\poscomplement{P}{\fk}$, then likewise  no variable occurs in $q_0$ at positions of $\posclosure{P}{\fk_0}$ and $ \poscomplement{P}{\fk_0}$. 

For the other direction, assume that no variable occurs in $q_0$ at positions of $\posclosure{P}{\fk_0}$ and $\poscomplement{P}{\fk_0}$. We need to show that no variable occurs in $q$ at positions of $\posclosure{P}{\fk}$ and $\poscomplement{P}{\fk}$. 
Note that every position of $\posclosure{P}{\fk}$ that occurs in $q$ also occurs in $q_0$, and the $S$-positions are the only positions 
 of   $\poscomplement{P}{\fk}$ that occur in $q$ but not in $q_0$. 
 It now remains to be shown that  in $q$ no $S$-position shares a variable with some position in $\posclosure{P}{\fk}$. 
 Since $v_2, \dots ,v_m$ are orphan, it suffices to show that $y$  does not occur at a position of  $\posclosure{P}{\fk}$ in $q$.
 Recall that $y$ occurs in $q_0$ at position $(R,i)$. 
 Since $(S,1)\in \poscomplement{P}{\fk}$, we obtain that $(R,i)\in \poscomplement{P}{\fk}$. Since $\poscomplement{P}{\fk}=\poscomplement{P}{\fk_0}$, 
and since we assumed that no variable occurs in $q_0$ at positions of $\posclosure{P}{\fk_0}$ and $\poscomplement{P}{\fk_0}$, the variable $y$ cannot occur in any position of $\posclosure{P}{\fk_0}$. Thus $y$ cannot occur in $\posclosure{P}{\fk}$, for $\posclosure{P}{\fk_0}=\posclosure{P}{\fk}$.  This  concludes the proof in the case that $(S,1)\in \poscomplement{P}{\fk}$.

\paragraph{Case that $(S,1)\in \posclosure{P}{\fk}$.} Recall that $\foreignkey{R}{i}{S}\in \fk$ is the only foreign key of $\fk$ in which the relation name $S$ appears, since we showed that $q_0$ is about $\fk_0$. Hence $(R,i)\in \posclosure{P}{\fk_0}$, and $\posclosure{P}{\fk}=\posclosure{P}{\fk_0}\cup \{(S,1), \dots ,(S,m)\}$. It is thus easy to see using Theorem \ref{thm:syntactic-obedience} that if $P$ is obedient over $(\fk,q)$, then the same holds over $(\fk_0,q_0)$.

For the other direction, suppose $P$ is obedient over $(\fk_0,q_0)$. 
By Theorem~\ref{thm:syntactic-obedience}~\ref{it:strongweak}, no position of $P$ belongs to a cycle in the dependency graph of $\fk_0$. Since $\fk$ extends $\fk_0$ with a foreign key $\foreignkey{R}{i}{S}$, and since $\fk$ contains no foreign keys outgoing from $S$, we observe that $P$ cannot belong to any cycle in the dependency graph of $\fk$; that is, condition~\ref{it:strongweak} of Theorem~\ref{thm:syntactic-obedience} holds with respect to $P$ and $\fk$.

Recall that the primary-key variable $y$ of $S(\underline{y}, v_2, \dots ,v_m)$ occurs at position $(R,i)$ in $q_0$.
Since $v_2, \dots ,v_m$ are orphan variables of~$q$, we observe that $\posclosure{P}{\fk}$ and $q$ inherit conditions~\ref{it:no-const} and~\ref{it:distinct} of Theorem~\ref{thm:syntactic-obedience} from $\posclosure{P}{\fk_0}$ and $q_0$. 

Let us then turn to the remaining condition~\ref{it:disjoint} of Theorem~\ref{thm:syntactic-obedience}. Note that every position of $\poscomplement{P}{\fk}$ that occurs in $q$ also occurs in~$q_0$, and the $S$-positions are the only positions of $\posclosure{P}{\fk}$ that occur in $q$ but not in $q_0$. Hence it suffices to show that no $S$-position shares a variable in $q$ with some position of 
$\poscomplement{P}{\fk}$. This is clear for positions $(S,2), \dots ,(S,m)$, for we know that $v_2, \dots ,v_m$ are orphan. The variable $y$ that occurs at $(S,1)$ is also known to occur at $(R,i)$ in~$q$.
Since $(R,i)\in \posclosure{P}{\fk_0}$, we obtain by Theorem~\ref{thm:syntactic-obedience}~\ref{it:disjoint} that  $y$ does not occur in $q_0$ at any position of $\poscomplement{P}{\fk_0}$. Since $q\setminus q_0=\{S\}$ and all the $S$-positions belong to $\posclosure{P}{\fk}$, this entails that
  $y$ does not occur in $q$ at any position of $\poscomplement{P}{\fk}$.
  This shows that condition~\ref{it:disjoint} of Theorem~\ref{thm:syntactic-obedience} extends from $\posclosure{P}{\fk_0}$ and $q_0$  to $\posclosure{P}{\fk}$ and $q$. 
  
  Hence  $P$ is obedient over $(\fk,q)$. This concludes the proof of Claim \ref{claim:P}.
\end{proof}

We can now prove that $(q_0,\fk_0)$ has no block-interference.
Toward contradiction, assume that this is not the case. 
Then, $\lclosure{\fk_0}$ contains a block-interfering foreign key $\sigma\defeq \foreignkey{N}{j}{O}$ for some atoms 
 $N(\underline{t_{1}, \dots ,t_k},t_{k+1},\dots,t_{n})$ and  $O(\underline{t_j},\vec{y})$ of $q_0$.
Note that $\lclosure{\fk_0}=\fk_0$ since $\fk_0$ contains only strong foreign keys. In what follows, we establish a contradiction by showing that $\sigma$ is block-interfering also with respect to $(q,\fk)$.

Claim \ref{claim:P} now entails that, with respect to $(q,\fk)$, $\sigma$ satisfies 
Definition~\ref{def:block-int}~\eqref{it:obedientO}, and also Definition~\ref{def:block-int}~\eqref{it:marking} if this holds with respect to $(q_0,\fk_0)$. It remains to show that $\sigma$ satisfies  Definition~\ref{def:block-int}~\eqref{it:no-constant}, 
and also Definition~\ref{def:block-int}~\eqref{it:bifour} if this holds with respect to $(q_0,\fk_0)$. 
Let us denote the terms $t_i$ and $t_j$ of the remaining conditions by $u$ and $z$, respectively.
Next we turn to these conditions.

Concerning Definition~\ref{def:block-int}~\eqref{it:no-constant}, we claim that $z$ belongs to
\begin{equation}\label{eq:V}
V=\{v\in \queryvars{q'}\mid \FD{q}\not\models\fd{\emptyset}{\{v\}}\},
\end{equation}
where $q'\defeq q\setminus \{N\}$.
 By the assumption that $(q,\fk)$ has no block-interference, we obtain that $z$ belongs to 
 \begin{equation}\label{eq:V0}
 V_0=\{v\in \queryvars{q'_0}\mid \FD{q_0}\not\models\fd{\emptyset}{\{v\}}\},
\end{equation}
where $q_0' \defeq q_0\setminus \{N\}$.
Toward contradiction, assume that $z\notin V$.  Let $(F_1, \dots, F_n)$ be the shortest proof of $ \FD{q}\models\fd{\emptyset}{\{z\}}$, i.e., the shortest sequence of atoms from $q$ such that $z\in \atomvars{F_n}$, $\keyvars{F_1}=\emptyset$, and $\keyvars{F_{i+1}}\subseteq \bigcup_{j=1}^i \atomvars{F_j}$ for $i\in [n-1]$. By our assumption, we find $i\in [n]$ such that $F_i$ does not belong to $q_0$, whence $F_i=S(\underline{t}, v_2, \dots ,v_m)$. Since $v_2, \dots ,v_m$ are  orphan variables of $q$, they cannot occur at a primary-key position of any atom in $q$. Since we selected the shortest sequence, we obtain that $i=n$. For the same reason, $z\in \{v_2, \dots ,v_m\}$. But then, $z$ is an orphan variable of $q$ that appears in an atom of $q$ that is not in $q_0$. It follows that $z\notin \queryvars{q_0}$. This contradicts our assumption that $z$ belongs to $ V_0$. We conclude that $z$ belongs to $ V$. Hence $\sigma$ satisfies Definition~\ref{def:block-int}~\eqref{it:no-constant} with respect to $(q,\fk)$.

Concerning Definition~\ref{def:block-int}~\eqref{it:bifour}, suppose $(v_1, \dots ,v_n)$ is a path in $\constgraph{V_0}{q'_0}$ for $q'_0\defeq q_0\setminus\{N(\underline{t_1, \dots ,t_k},t_{k+1},\dots,t_{n})\}$ such that $v_1=u$ and $v_n=z$. We claim that that the same path exists in $\constgraph{V}{q'}$ for $q'\defeq q\setminus\{N(\underline{t_1, \dots ,t_k},t_{k+1},\dots,t_{n})\}$. Note that $V$ and $V_0$ are given by Eq.~\eqref{eq:V} and Eq.~\eqref{eq:V0}. Thus, we can argue exactly as in the previous case that $v_i\in V$ for $i\in [n]$. The claim follows from this. We conclude that $\sigma$ satisfies Definition~\ref{def:block-int}~\eqref{it:bifour} with respect to $(q,\fk)$ if the same happens with respect to $(q_0,\fk_0)$.

We have shown that $\sigma$ is block-interfering with respect to $(q,\fk)$, contradicting our assumption. We conclude by contradiction that $(q_0,\fk_0)$ has no block-interference.
This concludes the proof of Lemma~\ref{lem:obedience-opo}.
\end{proof}

\subsection{Removal of Weak Foreign Keys}

In the following lemma, we assume that $\fk$ is closed under logical implication.
Under this assumption, it is not sufficient to remove one weak foreign key $\sigma$ at a time, because it may be that $\lclosure{\fk}\setminus\{\sigma\}\equiv\lclosure{\fk}$.
Instead, we remove all weak foreign keys referencing a same relation name.

\begin{lemma}[Weak foreign keys]\label{lem:nonProper}
Let $\fk$ be a set of foreign keys such that $\lclosure{\fk}=\fk$.
Let $\fkw$ be the set of weak foreign keys in $\fk$.
Assume that some non-trivial foreign key in $\fkw$ references $S$, and
let $\fk_{0}=\fk\setminus\incoming{S}{\fkw}$.
Let $q$ be a query in $\sjfbcq$ such that $\fk$ is about $q$.
Then, $\fk_{0}$ is about $q$, and the following hold:
\begin{itemize}
\item
$\certainty{q}{\fk}\reducesTo{\FO}\certainty{q}{\fk_{0}}$; and
\item
if $(q,\fk)$ has no block-interference,
then $(q,\fk_{0})$ has no block-interference. 
\end{itemize}
\end{lemma}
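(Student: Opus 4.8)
The plan is to prove the three assertions in turn: that $\fk_0$ is about $q$ (immediate), the reduction $\certainty{q}{\fk}\reducesTo{\FO}\certainty{q}{\fk_0}$ (the substance), and the preservation of the absence of block-interference (a syntactic argument). For the first assertion, observe that $\fk_0=\fk\setminus\incoming{S}{\fkw}\subseteq\fk$, and about-ness is inherited by subsets: every foreign key of $\fk_0$ lies in $\fk$ and is therefore satisfied by $q$, and every relation name occurring in $\fk_0$ occurs in $\fk$ and hence in $q$. So $\fk_0$ is about $q$ with no further work.

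For the reduction I would exploit the defining feature of a weak foreign key $\foreignkey{R}{i}{S}\in\incoming{S}{\fkw}$: its referencing position is a primary-key position of $R$, so the referencing value is constant across each $R$-block, and whether that value occurs as a primary key in $S$ is a blockwise, first-order--definable property. The plan is to map $\db$ to $\db'$ by keeping $\db$ and adjoining, for every value $a$ occupying such a referencing position in a relevant fact while no $S$-fact of $\db$ has primary key $a$, a bounded set of referenced facts seeded by an $S$-fact with primary key $a$ and closed under the foreign keys outgoing from $S$ (constructed with the help of Lemma~\ref{lem:gadget}). Because the foreign-key dependency graph restricted to the relation names reachable from $S$ is fixed (independent of $\db$), this closure has bounded size, so $\db\mapsto\db'$ is first-order. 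The adjoined material makes every weak foreign key referencing $S$ satisfiable inside $\db'$, either by tuples that mirror the insertions a $\sd$-repair over $\fk$ is forced to perform, or by tuples irrelevant for $q$; the constraints of $\incoming{S}{\fkw}$ then become redundant and may be dropped. Correctness --- that $\db$ is a ``yes''-instance of $\certainty{q}{\fk}$ iff $\db'$ is a ``yes''-instance of $\certainty{q}{\fk_0}$ --- I would establish through the pre-repair characterization of Theorem~\ref{thm:pre-repair}, matching falsifying (pre-)repairs on the two sides with Lemmas~\ref{lem:relevantblockter} and~\ref{lem:norepair}. The key leverage is that any pre-repair already satisfies every weak foreign key (an irrelevantly dangling fact must dangle at a non-primary-key position, i.e.\ at a strong foreign key), so deleting the weak foreign keys changes only which matchings of $q$ survive, and the gadget in $\db'$ is designed to compensate exactly for that change.

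For the preservation of block-interference I would argue the contrapositive: assume a strong $\sigma=\foreignkey{N}{j}{O}\in\lclosure{\fk_0}$ is block-interfering in $q$ over $\fk_0$ and show it is block-interfering over $\fk$. Since $\fk_0\subseteq\fk$ and $\lclosure{\fk}=\fk$, we have $\sigma\in\fk$, and $\sigma$ stays strong (strength depends only on the expression). Conditions~\eqref{it:no-constant} and~\eqref{it:bifour} of Definition~\ref{def:block-int} are insensitive to the foreign-key set, since they involve only $V$, the variable $t_j$, and the graph $\constgraph{V}{q'}$, all determined by $q$ alone; hence they transfer verbatim. The foreign-key--dependent conditions~\eqref{it:obedientO} and~\eqref{it:marking} are reduced, via the syntactic characterization of Theorem~\ref{thm:syntactic-obedience}, to the central claim that for every set $P$ of non-primary-key positions one has $\posclosure{P}{\fk}=\posclosure{P}{\fk_0}$, so that $P$ is obedient over $\fk$ iff over $\fk_0$. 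The inclusion $\posclosure{P}{\fk_0}\subseteq\posclosure{P}{\fk}$ is trivial; for the converse I would take a shortest $\fk$-path witnessing a position of $\posclosure{P}{\fk}\setminus\posclosure{P}{\fk_0}$, inspect its first removed edge (whose source, being reachable from a non-primary-key position, lies at position $1$ of a referenced relation), walk back to the last non-primary-key position $p$ preceding it, and compose the intervening non-special foreign keys with $\foreignkey{T}{1}{S}$; closure of $\fk$ under logical implication (transitivity of unary foreign keys) yields a \emph{strong} foreign key from $p$ to $S$, which therefore already lies in $\fk_0$, so the removed edge is superfluous and may be shortcut. Iterating eliminates all removed edges (the same shortcut also reroutes any cycle through a $P$-position), giving the closure identity; then all four conditions of Theorem~\ref{thm:syntactic-obedience} coincide, obedience coincides, and $\sigma$ is block-interfering over $\fk$, contradicting the hypothesis.

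The main obstacle is the correctness of the reduction. The about-ness claim is immediate and the obedience-preservation claim reduces cleanly to the closure identity above, but the reduction must reconcile two genuinely different repair behaviours: over $\fk$ a relevant $R$-fact with a violated weak key survives only by forcing an $S$-tuple, whereas over $\fk_0$ it may survive untouched. Proving that $\db'$ makes these behaviours agree on the truth of $q$ in \emph{every} repair --- separating the case where the forced $S$-tuple can be made irrelevant (so both survival and deletion falsify $q$, and the gadget of Lemma~\ref{lem:gadget} applies) from the case where $S$ is obedient and the forced tuple necessarily matches the referenced atom (so survival satisfies $q$) --- is the delicate part, and it is precisely here that the blockwise, primary-key nature of weak foreign keys and the pre-repair machinery of Theorem~\ref{thm:pre-repair} carry the argument.
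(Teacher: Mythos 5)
Your handling of about-ness and of the preservation of block-interference is sound and essentially the paper's own argument: both reduce the latter to the closure identity $\posclosure{P}{\fk}=\posclosure{P}{\fk_0}$, proved by locating the last non-primary-key position on an offending path and composing the remaining weak edges into a strong foreign key that lies in $\lclosure{\fk}=\fk$ and hence in $\fk_{0}$. The genuine gap is in the reduction $\certainty{q}{\fk}\reducesTo{\FO}\certainty{q}{\fk_{0}}$. The paper's reduction is the \emph{identity}: it proves outright that $\db$ is a ``no''-instance of $\certainty{q}{\fk}$ if and only if it is one of $\certainty{q}{\fk_{0}}$. The nontrivial direction starts from a falsifying $\fk_{0}$-repair $\rep$, deletes the set $\Delta^{-}$ of facts of $\rep$ that are dangling with respect to $\incoming{S}{\fkw}$, and shows $\rep\setminus\Delta^{-}\models\fk$; it is exactly here that the hypothesis $\lclosure{\fk}=\fk$ earns its keep, by taming cascading danglings: if deleting a dangling $R$-fact leaves a $T$-fact dangling via $\foreignkey{T}{j}{R}$, then $\foreignkey{T}{j}{S}\in\fk$, and this key is either strong (hence in $\fk_{0}$ and already satisfied by $\rep$, contradiction) or weak (hence the $T$-fact was already in $\Delta^{-}$). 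Your proposal never invokes $\lclosure{\fk}=\fk$ in the reduction and never confronts cascading deletions, which is a first sign that the insertion-based route misses the actual mechanism.

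More seriously, the gadget construction is misapplied and, as described, unsound. Lemma~\ref{lem:gadget} and the irrelevantly-dangling machinery are stated only for sets $P$ of \emph{non-primary-key} positions occupied by orphan constants; indeed the remark after Definition~\ref{def:irr} records that an irrelevantly dangling fact can only dangle at a \emph{strong} foreign key. A violation of a weak foreign key dangles at a primary-key position whose value is shared by an entire block of $\db$ and is never orphan, so the gadget has nothing to offer here. Moreover, pre-inserting referenced facts into $\db'$ changes the certain answer: an adjoined $S$-fact is retained by every $\sd$-repair of $\db'$ (removing it is never $\closer{\db'}$-minimal), and when the $S$-atom carries orphan variables at its non-key positions that fact is relevant. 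Concretely, for $q=\{R(\underline{x},y),\ S(\underline{x},w)\}$, $\fk=\{\foreignkey{R}{1}{S}\}$, and $\db=\{R(\underline{a},b)\}$, the empty instance is a falsifying repair, so $\db$ is a ``no''-instance of $\certainty{q}{\fk}$; but $\db'=\{R(\underline{a},b),\ S(\underline{a},\bot)\}$ has itself as its unique repair with respect to $\fk_{0}=\emptyset$, and that repair satisfies $q$. Your ``relevant fact'' trigger escapes this only by vacuity: when $\fk$ is about $q$, a fact that is relevant for $q$ in $\db$ is never dangling with respect to a weak foreign key into $S$, because the primary-key term of the $S$-atom of $q$ coincides with the referencing term of the $R$-atom --- so under the strict reading the gadget never fires, $\db'=\db$, and you are back at the identity map with both correctness directions, i.e.\ the entire content of the lemma, still unproven. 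The dichotomy you flag at the end (the forced $S$-tuple being irrelevant versus $S$ being obedient) is where your construction would have to do its work, and it is precisely where it cannot: the paper resolves the tension not by inserting referenced tuples but by observing that for weak foreign keys deletion of the dangling facts is always available, can never create a query match, and --- thanks to $\lclosure{\fk}=\fk$ --- never cascades outside $\Delta^{-}$.
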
 
\begin{proof}[Proof of Lemma~\ref{lem:nonProper}]
It is obvious that $\fk_{0}$ is about $q$.

\textbf{Proof of the second item.}
Let  $P$ be a set of non-primary-key positions concerning some relation name appearing in $q$. We claim that $P$ is obedient over $(\fk,q)$ if and only if it is obedient over $(\fk_0,q)$.  It is easy to see that the second item  follows readily from this claim. Moreover, by Theorem \ref{thm:syntactic-obedience} the claim holds if $\posclosure{P}{\fk}= \posclosure{P}{\fk_0}$. Since $\posclosure{P}{\fk_0}\subseteq \posclosure{P}{\fk}$ is clear, we only need to show that
 $\posclosure{P}{\fk}\subseteq \posclosure{P}{\fk_0}$.
 
 Assume toward contradiction that $\posclosure{P}{\fk}\not\subseteq \posclosure{P}{\fk_0}$. It follows that $(S,1)\in\posclosure{P}{\fk}\setminus \posclosure{P}{\fk_0}$. That is, the dependency graph of $\fk$ contains a path 
 $
 \calP=((R_1,i_1),\dots ,(R_n,i_n),(S,1))
 $
 where $(R_1,i_1)\in P$, while no such path from $P$ to $(S,1)$ exists in the dependency graph of $\fk_0$.
 We may assume w.l.o.g. that the subpath $\calP' =((R_1,i_1),\dots ,(R_n,i_n))$ is in the dependency graph of $\fk_0$. 
 
 Let $k\in [n]$ be the greatest integer such that $(R_k,i_k)$ is a non-primary-key position; note that $k$ is well defined since $(R_1,i_1)$ is a non-primary-key position. Observe that the subpath of $\calP$ from $(R_k,i_k)$ to $(S,1)$ is weak, whence $\foreignkey{R_k}{i_k}{S}$ must be a strong foreign key in $ \lclosure{\fk}$. But now, since $\lclosure{\fk}=\fk$ and since $\fk\setminus\fk_0$ contains only weak foreign keys, we observe that $\foreignkey{R_k}{i_k}{S}\in \fk_0$. In particular, the dependency graph of $\fk_0$ contains a path $\calP''=((R_1,i_1),\dots ,(R_k,i_k),(S,1))$, whence $(S,1) \in \posclosure{P}{\fk_0}$. This leads to a contradiction, by which we conclude that $\posclosure{P}{\fk}\subseteq \posclosure{P}{\fk_0}$. This concludes the proof of the second item.
  
\textbf{Proof of the first item.}
Let $\db$ be a database instance that is input to $\certainty{q}{\fk}$.
We show the following:
\begin{enumerate}[label=(\Alph*)]
\item\label{it:nno}
if $\db$ is a ``no''-instance of $\certainty{q}{\fk}$,
then it is a ``no''-instance of $\certainty{q}{\fk_{0}}$; and
\item\label{it:yyo}
if $\db$ is a ``no''-instance of $\certainty{q}{\fk_{0}}$,
then it is a ``no''-instance of $\certainty{q}{\fk}$.
\end{enumerate}
Note that the reduction is the identity.

\framebox{Proof of \ref{it:nno}}
Assume that $\db$ is a ``no''-instance of the problem $\certainty{q}{\fk}$.
There is a repair $\rep$ of $\db$ with respect to $\fk\cup\pk$ such that $\rep\not\models q$.
Since $\fk_{0}\subseteq\fk$, we have $\rep\models\fk_{0}$.
Let $\rep^{*}$ be a database instance that satisfies $\fk_{0}\cup\pk$ such that $\rep^{*}\closer{\db}\rep$.
It suffices to show $\rep^{*}\not\models q$.
We have
\begin{align}
\rep\cap\db & \subseteq\rep^{*}\label{eq:rightstar}\\
\rep^{*} & \subseteq\rep\cup\db\label{eq:leftstar}
\end{align}
Assume for the sake of contradiction that there is a valuation $\theta$ over $\queryvars{q}$ such that $\theta(q)\subseteq\rep^{*}$.
Since $\theta(q)\nsubseteq\rep$ and by~\eqref{eq:leftstar}, there must be a fact $A\in\theta(q)$ such that $A\in\db\setminus\rep$.
By Lemma~\ref{lem:norepair}, it is correct to conclude that $\rep$ is not a repair of $\db$ with respect to $\fk$, a contradiction.

\framebox{Proof of \ref{it:yyo} }
Assume that $\db$ is a ``no''-instance of the problem $\certainty{q}{\fk_{0}}$.
There is a repair $\rep$ of $\db$ with respect to $\fk_{0}\cup\pk$ that falsifies~$q$.

If $\rep$ satisfies $\incoming{S}{\fkw}$, then $\rep$ is also a repair of $\db$ with respect to $\fk\cup\pk$, and the desired result holds vacuously.
Assume from here on that $\rep\not\models\incoming{S}{\fkw}$.
Let $\Delta^{-}$ be the set of all facts of $\rep$ that are dangling with respect to $\incoming{S}{\fkw}$.

We show that $\rep\setminus\Delta^{-}\models\fk$.
Assume for the sake of contradiction that $R(\underline{a_{1},\dots,a_{k}},\vec{b})\in\Delta^{-}$ is dangling with respect to some foreign key $\foreignkey{R}{i}{S}$ in $\incoming{S}{\fkw}$, and its removal entails a dangling $T$-fact $T(b_{1},\dots,b_{m})$ with respect to a foreign key $\foreignkey{T}{j}{R}$ in $\fk$. 
Clearly, $T(b_{1},\dots,b_{m})\notin\Delta^{-}$.
Then it must be that $R$ has signature $\signature{n}{1}$ (that is, $k=1$) and $b_{j}=a_{1}$.
Then $\fk\models\foreignkey{T}{j}{S}$, and therefore, by the hypothesis of the lemma that $\fk=\lclosure{\fk}$, it follows that $\foreignkey{T}{j}{S}$ belongs to $\fk$.

We distinguish two cases. 
\begin{itemize}
\item
\emph{Case that $\foreignkey{T}{j}{S}$ is a strong foreign key.}
Then $\foreignkey{T}{j}{S}$ belongs to $\fk_{0}$ and would be falsified by $\rep$, contradicting that $\rep$ is a repair of $\db$ with respect to $\fk_{0}$.
\item
\emph{Case that $\foreignkey{T}{j}{S}$ is a weak foreign key.}
Then $\foreignkey{T}{j}{S}$ belongs to $\incoming{S}{\fkw}$, and therefore $T(b_{1},\dots,b_{m})$ is in $\Delta^{-}$, a contradiction.
\end{itemize}
We conclude by contradiction that $\rep\setminus\Delta^{-}\models\fk$.

We show $\Delta^{-}\subseteq\db$. 
Assume for the sake of contradiction that $\Delta^{-}\setminus\db$ contains $R(\underline{\vec{a}},\vec{b})$.
Since $\rep\setminus\{R(\underline{\vec{a}},\vec{b})\}\closerneq{\db}\rep$ and, by the reasoning in the previous paragraph, 
$\rep\setminus\{R(\underline{\vec{a}},\vec{b})\}\models\fk_{0}$, it follows that $\rep$ is not a repair of $\db$ with respect to $\fk_{0}$, a contradiction.


Clearly, $\rep\models\pk$.
Let $\rep^{*}$ be a database instance such that $\rep^{*}\models\fk\cup\pk$ and $\rep^{*}\closer{\db}\rep\setminus\Delta^{-}$. That is,
\begin{align}
\formula{\rep\setminus\Delta^{-}}\cap\db & \subseteq\rep^{*}\label{eq:seprightstar}\\
\rep^{*} & \subseteq\formula{\rep\setminus\Delta^{-}}\cup\db\subseteq\rep\cup\db\label{eq:sepleftstar}
\end{align}

It suffices to show $\rep^{*}\not\models q$.
Assume for the sake of contradiction that there is a valuation $\theta$ over $\queryvars{q}$ such that $\theta(q)\subseteq\rep^{*}$.
Since $\theta(q)\nsubseteq\rep$ and by~\eqref{eq:sepleftstar}, there must be a fact $A\in\theta(q)$ such that $A\in\db\setminus\rep$.

By Lemma~\ref{lem:relevantblockter}, $\rep\cap\db$ contains a fact $A'$ such that $A'\keyeq A$, where $A'\neq A$.
By~\eqref{eq:seprightstar} and since~$\rep^{*}$ satisfies primary keys, $A'\in\Delta^{-}$.
So for some $\vec{a}\defeq\tuple{a_{1},a_{2},\ldots,a_{k}}$, we can assume $A=R(\underline{\vec{a}},\vec{b})$, $A'=R(\underline{\vec{a}},\vec{b}')$, and the latter fact belongs to $\db\cap\rep$ and is dangling in $\rep$ with respect to some foreign key $\foreignkey{R}{i}{S}$ in $\incoming{S}{\fkw}$. 
Since $\rep^{*}$ satisfies $\foreignkey{R}{i}{S}$ and by~\eqref{eq:sepleftstar}, there is a fact of the form $S(\underline{a_{i}},\filler)$ in $\db\cap\theta(q)$.
On the other hand, $\rep$ contains no fact of the non-empty block $S(\underline{a_{i}},*)$ of $\db$.
By Lemma~\ref{lem:norepair}, it is now correct to conclude that $\rep$ is not a repair of $\db$ with respect to $\fk_{0}$, a contradiction.
This concludes the proof of~\ref{it:yyo}.
\end{proof}

\subsection{Removal of Strong Foreign Keys}



\subsubsection{Removal of $\opo$ Foreign Keys}

\begin{lemma}[Type $\opo$]\label{lem:opo}
Let $q$ be query in $\sjfbcq$, and $\fk$ a set of foreign keys about~$q$.
Let $\foreignkey{R}{i}{S}$ be a strong foreign key of type $\opo$ in $\fk$.
Assume that $\fkclosure{S}{q}{\fk}=\{S\}$.\footnote{Recall from Section~\ref{sec:preliminaries} that if we use a relation name~$S$ wherever an atom is expected, we mean the unique $S$-atom of the self-join-free query that can be understood from the context.}
Following Lemma~\ref{lem:nonProper}, assume that every foreign key in $\fk$ is strong.
Let $q_{0}=q\setminus\{S\}$ and $\fk_{0}=\fk\setminus\{\foreignkey{R}{i}{S}\}$.
Suppose $(q,\fk)$ has no block-interference.
Then, $\fk_{0}$ is about $q_{0}$, and the following hold:
\begin{itemize}
\item
$\certainty{q}{\fk}\reducesTo{\FO}\certainty{q_{0}}{\fk_{0}}$; 
\item $(q_{0},\fk_{0})$  has no block-interference; and
\item
if the attack graph of $q$ is acyclic, then the attack graph of $q_{0}$ is acyclic.
\end{itemize}
\end{lemma}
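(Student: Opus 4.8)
The plan is to read off the first two displayed conclusions from the helping Lemma~\ref{lem:obedience-opo} and to prove the reduction and the acyclicity claim directly. Since $\foreignkey{R}{i}{S}$ has type $\opo$, the $R$-atom is obedient, so condition~\ref{it:obedientR} of Lemma~\ref{lem:obedience-opo} holds; together with the standing hypotheses (namely $S$ obedient, $\fkclosure{S}{q}{\fk}=\{S\}$, all foreign keys strong, and $(q,\fk)$ without block-interference) that lemma immediately yields that $\fk_0$ is about $q_0$ and that $(q_0,\fk_0)$ has no block-interference. From its proof I also take the shape of the unique $S$-atom: it is $S(\underline{y},v_2,\dots,v_m)$ with $v_2,\dots,v_m$ orphan in $q$ and with the variable $y$ occurring at position $(R,i)$ of the $R$-atom. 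I use this shape throughout.

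For the acyclicity claim I would first show that deleting $S$ cannot create attacks. The only non-primary-key variables of $S$ are the orphans $v_2,\dots,v_m$, which occur nowhere else, so the functional dependency contributed by $S$, namely $\fd{\{y\}}{\atomvars{S}}$, is useless for deriving any variable other than the $v_j$; hence $\keyclosure{F}{q_0}=\keyclosure{F}{q}\setminus\{v_2,\dots,v_m\}$ for every $F\in q_0$. Consequently the variables attackable from $F$ in $q_0$, namely those of $\queryvars{q_0}\setminus\keyclosure{F}{q_0}$, are exactly those attackable from $F$ in $q$ with the orphans deleted, hence a subset of $\queryvars{q}\setminus\keyclosure{F}{q}$. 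Since $q_0\subseteq q$ and the only co-occurrences lost in passing from $q$ to $q_0$ involve an orphan, every variable path witnessing $F\attacks{q_0}G$ is also a path witnessing $F\attacks{q}G$. Thus the attack graph of $q_0$ is a subgraph of that of $q$ restricted to the atoms of $q_0$, and acyclicity of the latter is inherited by the former.

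For the reduction I would proceed in two steps. First, over any database satisfying $\fk$ the queries coincide, $q\fkequiv{\fk}q_0$: the direction $q\fkmodels{\fk}q_0$ is immediate, and conversely, if $\theta_0(q_0)\subseteq\rep$ with $\rep\models\fk$, then $\theta_0(R)\in\rep$ carries $\theta_0(y)$ at position $i$, so $\foreignkey{R}{i}{S}$ forces a fact $S(\underline{\theta_0(y)},\dots)\in\rep$, and since $v_2,\dots,v_m$ are orphan I can extend $\theta_0$ to a valuation sending $S$ into $\rep$, whence $\rep\models q$. Therefore $\certainty{q}{\fk}$ and $\certainty{q_0}{\fk}$ are literally the same decision problem on every input. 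Second, I reduce $\certainty{q_0}{\fk}$ to $\certainty{q_0}{\fk_0}$ by the first-order map $\db\mapsto\db\setminus D$, where $D$ is the set of $R$-facts of $\db$ that are dangling with respect to $\foreignkey{R}{i}{S}$; note that $S$ no longer occurs in $q_0$, nor (by the first bullet) in $\fk_0$, so the removed foreign key points outside the query. Correctness amounts to proving that $\db$ is a ``no''-instance of $\certainty{q_0}{\fk}$ if and only if $\db\setminus D$ is a ``no''-instance of $\certainty{q_0}{\fk_0}$, which I would establish in the style of Lemma~\ref{lem:nonProper}: from a repair falsifying $q_0$ on one side I build a consistent instance that is $\sd$-closer to the relevant input on the other side, descend to a $\sd$-minimal repair below it, and invoke Lemma~\ref{lem:norepair} to certify that this repair still falsifies $q_0$.

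I expect the main obstacle to be the $\sd$-minimality bookkeeping underlying the second step, which is more delicate than in Lemma~\ref{lem:nonProper} because here the input itself changes by $D$. The structural facts about $S$ are what make it go through: by $\fkclosure{S}{q}{\fk}=\{S\}$ and strongness, $S$ has no outgoing foreign key, so an $S$-block is never emptied in a repair; thus a non-dangling $R$-fact keeps a matching $S$-fact in every $\fk$-repair and behaves exactly as it does under $\fk_0$, while a dangling $R$-fact can always be either deleted or rescued by inserting a fresh, query-irrelevant fact $S(\underline{v},\bot,\dots,\bot)$. The one genuinely delicate point is the cascade through other foreign keys $\foreignkey{T}{j}{R}$ pointing into $R$: deleting a dangling $R$-fact may create new dangling $T$-facts, but such foreign keys survive in $\fk_0$, so the cascade is reproduced identically on the $\db\setminus D$ side, which is precisely why the reduction removes only $\foreignkey{R}{i}{S}$ and deletes only the $R$-facts dangling for it. Verifying that these insertions and deletions never resurrect $q_0$ when passing to a $\sd$-minimal repair---using that $S$ is query-irrelevant and that its orphan non-key positions may be filled arbitrarily---will complete the argument.
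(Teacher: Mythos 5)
Your second and third bullets are fine: reading the about-ness and block-interference claims off Lemma~\ref{lem:obedience-opo} via its condition for obedient $R$ is exactly what the paper does, and your acyclicity argument (the $S$-atom contributes only the functional dependency $\fd{\{y\}}{\atomvars{S}}$, whose non-trivial consequences are the orphans $v_2,\dots,v_m$, so closures shrink only by orphans and attack witnesses in $q_0$ are attack witnesses in $q$) correctly fills in what the paper dismisses as ``easy.'' Your step~1 is also sound: since every $\sd$-repair satisfies $\fk$ and $q\fkequiv{\fk}q_0$ (extending a valuation through the obedient $S$-atom whose non-key positions carry distinct orphans), the certainty questions for $q$ and $q_0$ under $\fk$ coincide. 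This factoring is a genuinely different decomposition from the paper, which instead re-proves the obedience-extension equivalence inline in both directions of its reduction.

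The genuine gap is step~2, which is precisely the part you defer, and the engine you propose for it fails as stated. First, a small but real point: Lemma~\ref{lem:norepair} cannot be invoked for the pair $(q_0,\fk)$ at all, since its hypothesis requires the query to satisfy the foreign keys, and $q_0$ viewed as a database violates $\foreignkey{R}{i}{S}$ (its $R$-atom is dangling); on the $(\db,\fk)$ side you must always lift back to $q$. Second, and more seriously, the pattern ``build a closer consistent instance, descend to a repair, invoke Lemma~\ref{lem:norepair}'' breaks in both directions. Going from a falsifying $\fk$-repair $\rep$ of $\db$ to $\db\setminus D$: after descending to a repair $\rho^*\closer{\db\setminus D}\rep$ and extending a satisfying valuation $\theta$ of $q_0$ to $q$, Lemma~\ref{lem:norepair} needs a fact of $\theta(q)\setminus\rep$ with \emph{no} key-equal fact in $\rep\cap\db$; for non-$R$-atoms this works because $D$ contains only $R$-facts, but for $\theta(R)$ it can fail: $\theta(R)$ may be a non-dangling fact of $\db\setminus D$ whose key-equal sibling $A'\in D$ was kept by $\rep$ together with an invented $S$-fact, and $A'\in\rep\cap\db$ blocks the lemma. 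Ruling this case out is possible, but only by an argument your sketch never makes and which is where the type-$\opo$ hypothesis really bites: since $R$ is obedient, any kept $R$-fact in an $\fk$-consistent instance auto-completes all of $\fkclosure{R}{q}{\fk}$ (the foreign keys force the downstream facts and obedience makes them match the atoms), and the rest of $q$ joins to $R$ only through its primary key, so key-equal $R$-facts are interchangeable for query satisfaction --- whence $A'$ would already witness $\rep\models q$, a contradiction. In the converse direction, Lemma~\ref{lem:norepair} certifies nothing: the instance you lift from a falsifying $\fk_0$-repair of $\db\setminus D$ by adding invented $S$-facts is not known to be a repair of $\db$, and the lemma (via Lemma~\ref{lem:relevantblockter}) refutes repair-hood only of actual repairs. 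This is exactly why the paper, in that direction, constructs the repair explicitly and proves its $\sd$-minimality by hand (Claim~\ref{cla:opo}), after pruning the input by whole $R$-blocks that are \emph{irrelevant} for $\fkclosure{R}{q}{\fk}$ (so that Corollary~\ref{cor:relevantblockbis} applies), rather than by individual dangling facts as in your map $\db\mapsto\db\setminus D$. Your reduction may well be semantically correct, but its correctness proof is missing its central ingredient, and the bookkeeping you flag as ``delicate'' is not a routine adaptation of Lemma~\ref{lem:nonProper}.
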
 
\begin{proof}[Proof of Lemma~\ref{lem:opo}]
That $\fk_0$ is about $q_0$ follows by Lemma~\ref{lem:obedience-opo}.

\textbf{Proof of the third item.}
Easy.

\textbf{Proof of the second item.}
Follows by Lemma \ref{lem:obedience-opo}.



\textbf{Proof of the first item.}
Let $\db$ be a database instance that is input to $\certainty{q}{\fk}$.
Let $\Delta$ be the union of all $R$-blocks of $\db$ that are not relevant for $\fkclosure{R}{q}{\fk}$ in $\db$.
Let $\db_{0}=\restrict{\formula{\db\setminus\Delta}}{q\setminus\{S\}}$.
Clearly, $\db_{0}$ can be obtained from $\db$ by a first-order query.
Note that if the $R$-atom is, for example, $R(\underline{c},y)$,
then $\Delta$ contains all $R$-facts $\db$ of the form $R(\underline{a},\filler)$ where $a\neq c$. 

We show the following:
\begin{enumerate}[label=(\Alph*)]
\item\label{it:opoA}
if $\db$ is a ``no''-instance of $\certainty{q}{\fk}$,
then $\db_{0}$ is a ``no''-instance of $\certainty{q\setminus\{S\}}{\fk\setminus\{\foreignkey{R}{i}{S}\}}$; and
\item\label{it:opoB}
if $\db_{0}$ is a ``no''-instance of $$\certainty{q\setminus\{S\}}{\fk\setminus\{\foreignkey{R}{i}{S}\}},$$
then $\db$ is a ``no''-instance of $\certainty{q}{\fk}$.
\end{enumerate}

\framebox{Proof of \ref{it:opoA}}
Assume $\db$ is a ``no''-instance of $\certainty{q}{\fk}$.
There is a repair $\rep$ of $\db$ (with respect to foreign keys in $\fk$ and primary keys) that falsifies $q$. 

Let $\rep_{0}=\restrict{\rep}{q\setminus\{S\}}$.
Assume for the sake of contradiction that $\rep_{0}\models q\setminus\{S\}$.
Then there is a valuation $\mu$ over $\queryvars{q\setminus\{S\}}$ such that $\mu(q\setminus\{S\})\subseteq\rep_{0}$.
Since $\rep\models\foreignkey{R}{i}{S}$ and the $S$-atom is obedient, it follows that $\mu$ can be extended into a valuation $\mu^{+}$ over $\queryvars{q}$ such that $\mu^{+}(q)\subseteq\rep$, a contradiction.
We conclude by contradiction that $\rep_{0}\not\models q\setminus\{S\}$.

Clearly, $\rep_{0}\models\fk\setminus\{\foreignkey{R}{i}{S}\}$ and $\rep_{0}$ satisfies primary keys.
Let $\rep_{0}^{*}$ be a database instance such that $\rep_{0}^{*}\closer{\db_{0}}\rep_{0}$ and $\rep_{0}^{*}$ is consistent with respect to foreign keys in $\fk\setminus\{\foreignkey{R}{i}{S}\}$ and primary keys.
Thus,
\begin{align}
\rep_{0}\cap\db_{0} & \subseteq\rep_{0}^{*}\label{eq:lower}\\
\rep_{0}^{*} & \subseteq\rep_{0}\cup\db_{0}\label{eq:upper}
\end{align}
It suffices to show $\rep_{0}^{*}\not\models q\setminus\{S\}$.
Assume for the sake of contradiction that there is a valuation~$\theta$ over $\queryvars{q\setminus\{S\}}$ such that $\theta(q\setminus\{S\})\subseteq\rep_{0}^{*}$.

Assume towards a contradiction that $\theta(R)\notin\rep_{0}$.
From~\eqref{eq:upper}, it follows $\theta(R)\in\db_{0}$.
By construction of $\db_{0}$, we have that $\theblock{\theta(R)}{\db}$ is relevant for $\fkclosure{R}{q}{\fk}$ in $\db$.
By Corollary~\ref{cor:relevantblockbis},
$\rep$ must contain an $R$-fact (call it $B$) of $\theblock{\theta(R)}{\db}$.
Then $B\in\rep_{0}\cap\db_{0}$, and by~\eqref{eq:lower}, $B\in\rep_{0}^{*}$.
Since $\rep_{0}^{*}$ is consistent with respect to primary keys, we obtain $\theta(R)=B$, and hence $\theta(R)\in\rep_{0}$, a contradiction.
We conclude by contradiction that $\theta(R)\in\rep_{0}$.

Since $\rep_{0}\not\models q\setminus\{S\}$ and by~\eqref{eq:upper}, there is a fact $A\in\theta(q\setminus\{S\})$ (and therefore $A\in\rep_{0}^{*}$) such that $A\in\db_{0}\setminus\rep_{0}$.
Let $\alpha$ be the set of facts in $\rep_{0}$ that are key-equal to~$A$.
Clearly, $\card{\alpha}\leq 1$.
Let $\beta$ be the set of facts in $\theta(q\setminus\{S\})$ that are not key-equal to a fact in $\rep_{0}$.
We have $\beta\subseteq\db_{0}$ by~\eqref{eq:upper}.
Let $\sep=\formula{\rep\setminus\alpha}\cup\{A\}\cup\beta$.
It can be verified that $\sep\closerneq{\db}\rep$ and that $\sep$ is consistent with respect to foreign keys in $\fk$ and primary keys. In particular, from $\theta(R)\in\rep_{0}\subseteq\rep$, it follows that $\theta(R)$ is not dangling in $\sep$ with respect to $\foreignkey{R}{i}{S}$. 
But then $\rep$ is not a repair, a contradiction.

\framebox{Proof of \ref{it:opoB}}
Assume $\db_{0}$ is a ``no''-instance of  $$\certainty{q\setminus\{S\}}{\fk\setminus\{\foreignkey{R}{i}{S}\}}.$$
Let $\rep_{0}$ be a repair of $\db$ (with respect to foreign keys in $\fk\setminus\{\foreignkey{R}{i}{S}\}$ and primary keys) that falsifies~$q\setminus\{S\}$. 

Construct $\rep$ from $\rep_{0}$ as follows:
\begin{itemize}
\item
for every $S$-fact $A$ in $\db$,
insert a fact from $\theblock{A}{\db}$;
\item
{\emph{Chase step:}}
as long as some $R$-fact $R(a_{1},\dots,a_{n})$ is still dangling with respect to $\foreignkey{R}{i}{S}$, insert an $S$-fact $R(\underline{a_{i}},\vec{b})$ where $\vec{b}$ is a sequence of fresh constants.
\end{itemize}
By construction $\rep$ is consistent with respect to foreign keys in $\fk$ and primary keys.

\begin{claim}\label{cla:opo}
$\rep$ is a repair of $\db$ with respect to foreign keys in $\fk$ and primary keys.
\end{claim}
\begin{proof}[Proof sketch of Claim~\ref{cla:opo}.]
By Corollary~\ref{cor:relevantblockbis}, $\rep_{0}$ contains an $R$-fact from every $R$-block of $\db_{0}$. 
The invented $S$-facts inserted in the \emph{Chase Step} are needed to satisfy $\foreignkey{R}{i}{S}$.

Assume there is a repair $\sep$ of $\db$ with respect to foreign keys in $\fk$ and primary keys that contains more $R$-facts of $\db$. 
Since these additional $R$-facts are not relevant for $\fkclosure{R}{q}{\fk}$ in $\db$, the repair $\sep$ must also contain invented fresh $S$-facts not in $\rep$, and therefore $\sep$ and $\rep$ would not be comparable by $\closer{\db}$.

Finally, note that, since $\incoming{S}{\fk}=\{\foreignkey{R}{i}{S}\}$, the insertion of $S$-facts in the \emph{Chase step} does not entail further insertions. 
\end{proof}
From $\rep_{0}\not\models q\setminus\{S\}$, it follows $\rep\not\models q$. 
This concludes the proof of Lemma~\ref{lem:opo}.
\end{proof}

\subsubsection{Removal of $\dpd$ Foreign Keys}

\begin{lemma}[Type $\dpd$]\label{lem:dpd}
Let $q$ be query in $\sjfbcq$, and $\fk$ a set of foreign keys about~$q$.
Let $\foreignkey{R}{i}{S}$ be a strong foreign key of type $\dpd$ in $\fk$.
Following Lemma~\ref{lem:nonProper}, assume that every foreign key in $\fk$ is strong.
Let $\fk_{0}=\fk\setminus\{\foreignkey{R}{i}{S}\}$.
Then, $\fk_{0}$ is about $q$, and the following hold:
\begin{itemize}
\item
$\certainty{q}{\fk}\reducesTo{\FO}\certainty{q}{\fk_{0}}$; and
\item
if $(q,\fk)$ has no block-interference,
then $(q,\fk_{0})$ has no block-interference. 
\end{itemize}
\end{lemma}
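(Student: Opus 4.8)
The plan is to dispatch the three claims separately, reusing the framework already built for the other strong-foreign-key types. That $\fk_{0}$ is \emph{about} $q$ is immediate: being about $q$ is a condition that only bounds a set from above (each foreign key must be satisfied by $q$, and each relation name occurring in the set must occur in $q$), and $\fk_{0}\subseteq\fk$ inherits both requirements. Throughout I would use that a set of strong foreign keys is closed under logical implication, so $\lclosure{\fk}=\fk$ and $\lclosure{\fk_{0}}=\fk_{0}$; consequently $(q,\fk)$ (resp.\ $(q,\fk_{0})$) has block-interference exactly when some $\sigma\in\fk$ (resp.\ $\sigma\in\fk_{0}$) is block-interfering in $q$, and the two candidate sets differ only in $\foreignkey{R}{i}{S}$.

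For the block-interference clause I would show that obedience is insensitive to the removal of $\foreignkey{R}{i}{S}$. Conditions~\eqref{it:no-constant} and~\eqref{it:bifour} of Definition~\ref{def:block-int} are phrased purely in terms of $q$, so everything reduces to a Claim in the style of Claim~\ref{claim:P} in Lemma~\ref{lem:obedience-opo}: for every set $P$ of non-primary-key positions of a relation name in $q$, $P$ is obedient over $\fk$ if and only if it is obedient over $\fk_{0}$. Since removing $\foreignkey{R}{i}{S}$ only deletes the edges leaving $(R,i)$, we always have $\posclosure{P}{\fk_{0}}\subseteq\posclosure{P}{\fk}$, with equality unless $(R,i)\in\posclosure{P}{\fk}$ and the non-special edge $(R,i)\to(S,1)$ is the only route from $P$ to $(S,1)$; in that case every position of $S$ lies in $\posclosure{P}{\fk}$. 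I would then prove that, whenever the closures differ, $P$ is disobedient over \emph{both} $\fk$ and $\fk_{0}$. Over $\fk_{0}$ this is easy: because $\fk$ is about $q$, the term at $(R,i)$ equals the primary-key term of the $S$-atom, and since $(R,i)\in\posclosure{P}{\fk_{0}}$ while $(S,1)$ has dropped into the complement, this term witnesses a failure of condition~\ref{it:disjoint} (or of~\ref{it:no-const}, if it is a constant) for $P$. Over $\fk$ I would use that the key is of type $\dpd$, so $S$ is disobedient: one of conditions~\ref{it:strongweak}--\ref{it:distinct} fails for the positions of $S$, and since all of them lie in $\posclosure{P}{\fk}$, a failure of~\ref{it:no-const} or~\ref{it:distinct} is inherited by $P$, and a failure of~\ref{it:strongweak} produces around the cycle a repeated non-primary-key variable and hence again a failure of~\ref{it:distinct}. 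The delicate case is a failure of~\ref{it:disjoint} realized by a variable sitting at a non-primary-key position in $\posclosure{P}{\fk}$ and, coincidentally, at a primary-key position also in $\posclosure{P}{\fk}$; here I would invoke that \emph{all foreign keys are strong}, so the edge into that primary-key position leaves a non-primary-key position carrying the same variable, forcing a repeated non-primary-key variable and thus a failure of~\ref{it:distinct} for $P$. This strongness argument is the subtle point of this part.

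For the reduction $\certainty{q}{\fk}\reducesTo{\FO}\certainty{q}{\fk_{0}}$ I would take the identity reduction and prove that the two problems have the same ``no''-instances. The direction ``$\fk$-no $\Rightarrow$ $\fk_{0}$-no'' is the relaxation argument already used in Lemma~\ref{lem:nonProper}: from an $\fk$-repair $\rep$ with $\rep\not\models q$, take any $\closer{\db}$-minimal $\sep\models\fk_{0}\cup\pk$ below $\rep$; if $\sep\models q$, the witnessing image contains a fact of $\db\setminus\rep$ with no key-equal partner in $\rep\cap\db$, so Lemma~\ref{lem:norepair} contradicts that $\rep$ is an $\fk$-repair.

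The converse direction is the technical core, and I expect it to be the main obstacle. Starting from an $\fk_{0}$-repair $\rep_{0}$ with $\rep_{0}\not\models q$, equivalently (by Theorem~\ref{thm:pre-repair}) from a pre-repair over $(\fk_{0},q)$ falsifying $q$, the only constraints of $\fk$ that $\rep_{0}$ can violate are instances of the deleted key $\foreignkey{R}{i}{S}$. For each $R$-fact dangling with respect to $\foreignkey{R}{i}{S}$ I would append a gadget supplied by Lemma~\ref{lem:gadget}, whose applicability rests on $R$ being disobedient (type $\dpd$) with $(R,i)$ in the relevant position set: by the properties listed in Lemma~\ref{lem:gadget} the gadget makes the fact non-dangling, meets the existing active domain only in orphan constants, introduces no primary-key collision, and keeps every added fact irrelevant for $q$. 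Closing off all such violations then yields an instance satisfying $\fk\cup\pk$ that still falsifies $q$, and an appeal to Lemma~\ref{lem:rep-pre} extracts a genuine $\fk$-repair $\rep$ with $\rep\not\models q$. The crux—the step I expect to require the most care—is ensuring that the dangling $R$-facts carry \emph{orphan} values at position $(R,i)$, so that Lemma~\ref{lem:gadget} applies; this has to be arranged through the pre-repair reformulation together with the disobedience of $R$ built into type $\dpd$, and is where the two halves of the $\dpd$ hypothesis ($R$ disobedient for the reduction, $S$ disobedient for the block-interference clause) are each used.
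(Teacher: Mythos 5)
Your treatment of aboutness, of the block-interference clause, and of the easy direction of the reduction is sound and essentially the paper's: the obedience-transfer claim (closures coincide unless $(R,i)\in\posclosure{P}{\fk}$, in which case $P$ is disobedient on both sides) is exactly the paper's argument, and your strongness trick for converting an inherited condition-\ref{it:disjoint} failure into a condition-\ref{it:distinct} failure spells out what the paper leaves implicit when it asserts that $(R,i)\notin\posclosure{P}{\fk}$ for obedient $P$. Likewise, your ``$\fk$-no $\Rightarrow$ $\fk_{0}$-no'' direction via Lemma~\ref{lem:norepair} is the paper's.

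The genuine gap is in the direction ``$\fk_{0}$-no $\Rightarrow$ $\fk$-no''. You propose to apply Lemma~\ref{lem:gadget} to each $R$-fact of $\rep_{0}$ that is dangling with respect to $\foreignkey{R}{i}{S}$, and you correctly flag the obstacle---the lemma requires the values at the positions of a disobedient set $P\ni(R,i)$ to be orphan constants outside $\queryconst{q}$---but the fix you hope for does not exist: the dangling $R$-facts are in general facts of $\db$, and their value at position $(R,i)$ is an arbitrary database constant that no ``pre-repair reformulation'' can alter. Concretely, for $q=\{R(\underline{x},y),\,S(\underline{y},c)\}$, $\fk=\{\foreignkey{R}{2}{S}\}$ (a bona fide $\dpd$ key) and $\db=\{R(\underline{a},b),\,R(\underline{a'},b)\}$, the $\fk_{0}$-repair $\db$ itself falsifies $q$, both $R$-facts are dangling, and $b$ is not orphan, so no admissible $P$ contains $(R,2)$: Lemma~\ref{lem:gadget} is inapplicable and Definition~\ref{def:irr} fails. (Nor should you want its conclusion here: irrelevance of the dangling $R$-fact itself is false in general and is not needed.) The paper instead \emph{satisfies} the removed key by chasing: for each dangling $R(\underline{\vec{a}},a_{k+1},\dots,a_{n})$ it inserts $S(\underline{a_{i}},\vec{b})$ with fresh constants $\vec{b}$. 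Disobedience of the \emph{$S$-atom}---not of $R$---then does all the work: it makes the inserted $S$-facts irrelevant for $q$ (the non-primary-key terms of the $S$-atom cannot all be orphan variables, so the atom cannot map onto $S(\underline{a_{i}},\vec{b})$), and it makes the chased instance irrelevantly dangling, because the only facts still possibly dangling are these inserted $S$-facts (with respect to strong keys outgoing from $S$), whose non-key values are fresh, hence orphan, and for which the relevant set $P$ is the full set of non-primary-key $S$-positions, which is disobedient precisely because the key has type $\dpd$. Theorem~\ref{thm:pre-repair} then converts the falsifying pre-repair into a falsifying repair; Lemma~\ref{lem:gadget} is invoked only inside that theorem, applied to the inserted $S$-facts, where its orphan hypothesis holds by freshness. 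Your closing role assignment (``$R$ disobedient for the reduction, $S$ disobedient for the block-interference clause'') is accordingly inverted: $S$'s disobedience powers the reduction as well, and since type $\opd$ cannot occur, $R$'s disobedience is implied by $S$'s and carries no independent weight in the proof.
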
 
\begin{proof}[Proof of Lemma~\ref{lem:dpd}]\mbox{}\\
\textbf{Proof of the second item.}
Let $P$ be a set of non-primary-key positions concerning some relation name that appears in $q$. We claim that $P$ is obedient over $(\fk,q)$ if and only if it is obedient over $(\fk_0,q)$. This is sufficient for the second item, since the query $q$ is not modified in the reduction.

Assume first that $P$ is obedient over $(\fk,q)$. Since $\foreignkey{R}{i}{S}$ is of type $\dpd$ (with respect to $\fk$), it follows that $(R,i)\notin \posclosure{P}{\fk}$. In particular, it holds that $\posclosure{P}{\fk}=\posclosure{P}{\fk_0}$. By Theorem \ref{thm:syntactic-obedience} we obtain that $P$ is obedient over $(\fk_0,q)$.

Assume then that $P$ is obedient over $(\fk_0,q)$. If $\posclosure{P}{\fk}=\posclosure{P}{\fk_0}$, then, analogously to the previous case, $P$ is obedient over $(\fk,q)$. 
Finally, we show that the case $\posclosure{P}{\fk}\neq \posclosure{P}{\fk_0}$ cannot occur.
To this end, assume, for the sake of contradiction, that $\posclosure{P}{\fk}\neq \posclosure{P}{\fk_0}$ holds true.
It must be the case that $(R,i)\in \posclosure{P}{\fk_0}$ and $(S,1)\notin \posclosure{P}{\fk_0}$. But then, since $q$ is about $\fk$, we note that $(S,1)$ and $(R,i)$ are occupied by the same variable in $q$. 
This violates Theorem~\ref{thm:syntactic-obedience}~\ref{it:disjoint}, contradicting the assumption that $P$ is obedient over $(\fk_0,q)$. 
This concludes the proof of the claim and thus that of the second item.

\textbf{Proof of the first item.}
We show the following:
\begin{enumerate}[label=(\Alph*)]
\item\label{it:dpdA}
if $\db$ is a ``no''-instance of $\certainty{q}{\fk}$,
then $\db$ is a ``no''-instance of  $\certainty{q}{\fk\setminus\{\foreignkey{R}{i}{S}\}}$; and
\item\label{it:dpdB}
if $\db$ is a ``no''-instance of $\certainty{q}{\fk\setminus\{\foreignkey{R}{i}{S}\}}$,
then $\db$ is a ``no''-instance of $\certainty{q}{\fk}$.
\end{enumerate}
Note that the reduction is the identity.

\framebox{Proof of \ref{it:dpdA}}
Assume $\db$ is a ``no''-instance of $\certainty{q}{\fk}$.
There is a repair $\rep$ of $\db$ (with respect to foreign keys in $\fk$ and primary keys) that falsifies $q$. 
Clearly, $\rep\models\fk\setminus\{\foreignkey{R}{i}{S}\}$.
Let $\rep^{*}$ be a database instance such that $\rep^{*}\closer{\db}\rep$ and $\rep^{*}$ is consistent with respect to foreign keys in $\fk\setminus\{\foreignkey{R}{i}{S}\}$ and primary keys.
Thus,
\begin{align}
\rep\cap\db & \subseteq\rep^{*}\label{eq:dpdlowerbis}\\
\rep^{*} & \subseteq\rep\cup\db\label{eq:dpdupperbis}
\end{align}
It suffices to show $\rep^{*}\not\models q$.
Assume for the sake of contradiction that there is a valuation~$\theta$ over $\queryvars{q}$ such that $\theta(q)\subseteq\rep^{*}$.
By~\eqref{eq:dpdupperbis}, $\theta(q)\subseteq\rep\cup\db$.

Let $\theta(q)\setminus\rep=\{A_{1}, A_{2},\dots, A_{\ell}\}$.
Since $\rep\not\models q$, we have $\ell\geq 1$.
By~\eqref{eq:dpdlowerbis}, no $A_{i}$ is key-equal to a fact in $\rep\cap\db$.  
Then, by Lemma~\ref{lem:norepair}, $\rep$ is not a repair, a contradiction.

\framebox{Proof of \ref{it:dpdB}}
Suppose the $S$-atom is of the form $S(\underline{{s}},\vec{t})$.

Assume $\db$ is a ``no''-instance of $$\certainty{q}{\fk\setminus\{\foreignkey{R}{i}{S}\}}.$$
There is a repair $\rep_{0}$ of $\db$ (with respect to foreign keys in $\fk\setminus\{\foreignkey{R}{i}{S}\}$ and primary keys) that falsifies $q$.
Let $\rep_{1}$ be the database instance obtained from $\rep_{0}$ as follows:
\begin{itemize}
\item[]
\emph{Chase step:} if some $R$-fact $R(\underline{a_{1},\dots,a_{k}}, a_{k+1},\dots,a_{n})$ in $\rep_{0}$ is dangling with respect to $\foreignkey{R}{i}{S}$, then insert an invented $S$-fact $S(\underline{a_{i}},\vec{b})$, for some sequence $\vec{b}$ of fresh constants.
Note that $R=S$ is possible given $k=1$. 
\end{itemize}
We first show that $\rep_1$ is irrelevantly dangling (with respect to $(\db,\fk,q)$). As per Definition \ref{def:irr}, we need to show that whenever a fact $A=T(\underline{b_{1},\dots,b_{k}}, b_{k+1},\dots,b_{n})\in \rep_1$ is dangling in $\rep_1$ for a foreign key $\foreignkey{T}{i}{U}\in {\fk}$, then 
$P$ contains $(T,i)$ and is disobedient over $\fk$ and~$q$, where $P_A$ is defined as the set of non-primary-key positions $(T,j)$ such that $b_j$ is orphan in $\db\cup\rep_1$ and does not belong to $\queryconst{q}$.

Note that the facts added by the chase step are the only facts of $\rep_1$ that can be dangling. 
Suppose $B=S(\underline{a},\vec{b})$ is dangling for $\foreignkey{S}{i}{T}\in {\fk}$. 
By construction, the set $P_B$ consists of all non-primary-key positions of $S$. This readily implies that $P_B$ is disobedient, because we know that $S$ is disobedient. Moreover, $\foreignkey{S}{i}{T}$ is strong, since $\fk$ contains only strong foreign keys. Consequently, $(S,i)\in P_B$. Furthermore, we note that $\rep_1$ clearly satisfies all the primary keys. We conclude that $\rep_1$ is irrelenvantly dangling,

Define $\rep$ as $\rep_1$ if $\rep_1$ is a pre-repair. Otherwise, let $\rep$ be any pre-repair of $\db$ over $\fk$ and $q$ such that $\rep \capcloserneq{\db} \rep_1$. We show that $\rep\not\models q$. Assume toward contradiction that this is not the case. Let $\theta$ be a valuation such that $\theta(q)\subseteq \rep$.

First we note that 
\begin{equation}\label{eq:emptyset}
\theta(q)\cap (\rep_1\setminus \rep_0)=\emptyset,
\end{equation}
 i.e., $\theta(q)$ does not contain any fact that has been added by the chase step. Indeed, if $S(\underline{a},\vec{b})$ is such a fact, then $\vec{b}$ lists orphan constants of $\rep_1$, each of which not appearing in $q$. The terms occupying the non-primary-key positions of $S(\underline{\vec{s}},t_{k+1}, \dots ,t_n)$ however cannot be orphan variables of $q$, since this atom is disobedient. It follows that $S(\underline{a},\vec{b})$ is irrelevant for $q$ in $\rep_1$. 

Note that $\rep\subseteq \rep_1 \cup \db$. Hence we obtain by \eqref{eq:emptyset} and $\theta(q)\subseteq \rep$ that 
\begin{equation}\label{eq:indluded}
\theta(q)\subseteq \rep_0\cup \db.
\end{equation} 

Since $\rep_0\not\models q$,  we find a fact $A\in \theta(q)\setminus \rep_0$. By \eqref{eq:emptyset} we obtain that $A\notin \rep_1$, whence $A\in \rep \cap \db$. Moreover, $\rep$ cannot have been defined as $\rep_1$, whence $\rep \closer{\db} \rep_1$. 
 In particular, we have $\rep_0 \cap\db \subseteq \rep_1 \cap\db \subseteq \rep \cap \db$. Since $A\in \rep$ and $\rep\models \pk$, 
this means that $\rep_0 \cap \db$ contains no fact that is key-equal with $A$. Hence, and by \eqref{eq:indluded}, we conclude by Lemma \ref{lem:norepair} that $\rep_0$ is not a repair. This leads to a contradiction, by which we conclude that $\rep\not\models q$. Since $\rep$ is a pre-repair of $\db$ over $\fk$ and $q$, we conclude by Theorem \ref{thm:pre-repair} that there exists a repair of $\db$ over $\fk$ falsifying $q$. This concludes the proof of~\ref{it:dpdB}.
The proof of Lemma~\ref{lem:dpd} is now concluded.
\end{proof}

\subsubsection{Removal of $\dpo$ Foreign Keys}

Finally, we show two lemmas for removing strong foreign keys of type $\dpo$.
Lemma~\ref{lem:dpo} deals with queries $q$ such that $\queryvars{F}\neq\emptyset$ for every $F\in q$.
Lemma~\ref{lem:constantkey} deals with queries containing an atom $F$ with $\queryvars{F}=\emptyset$.

\begin{lemma}[Type $\dpo$]\label{lem:dpo}
Let $q$ be query in $\sjfbcq$, and $\fk$ a set of foreign keys about~$q$.
Following Lemmas~\ref{lem:nonProper}, \ref{lem:opo}, and~\ref{lem:dpd}, assume that all foreign keys in $\fk$ are strong and of type $\dpo$.
Assume the following:
\begin{enumerate}
\item\label{it:dpovarfree}
for every $F\in q$, $\keyvars{F}\neq\emptyset$;
\item\label{it:dponobi}
$(q,\fk)$ has no block-interference; and
\item\label{it:dpoacyclic}
the attack graph of $q$ is acyclic.
\end{enumerate}
Let $\foreignkey{N}{i}{O}$ belong to $\fk$ (and therefore, by our previous assumption, $\fkclosure{O}{q}{\fk}=\{O\}$).
Let $q_{0}=q\setminus\{O\}$ and $\fk_{0}=\fk\setminus\{\foreignkey{N}{i}{O}\}$.
Then, $\fk_{0}$ is about $q_{0}$, and the following hold:
\begin{itemize}
\item
$\certainty{q}{\fk}\reducesTo{\FO}\certainty{q_{0}}{\fk_{0}}$; 
\item
$(q_{0},\fk_{0})$ has no block-interference; and
\item
the attack graph of $q_{0}$ is acyclic.
\end{itemize}
\end{lemma}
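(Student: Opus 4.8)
The statement bundles four claims, three of which I would dispatch quickly. For the claims that $\fk_0$ is about $q_0$ and that $(q_0,\fk_0)$ has no block-interference, I would apply Lemma~\ref{lem:obedience-opo} with $R\defeq N$ and $S\defeq O$: every foreign key is strong by hypothesis, $O$ is obedient with $\fkclosure{O}{q}{\fk}=\{O\}$, and assumptions~\eqref{it:dpovarfree} and~\eqref{it:dpoacyclic} supply exactly case~\ref{it:everyF} of that lemma, whose conclusion is what we need. For acyclicity of the attack graph of $q_0$, write the obedient atom as $O(\underline{y},\vec{v})$; because $O$ is obedient with $\fkclosure{O}{q}{\fk}=\{O\}$, the terms $\vec{v}$ are orphan in $q$, and the functional dependency contributed by $O$ to $\FD{q}$ is $\fd{\{y\}}{\{y\}\cup\vec{v}}$, whose only new consequences are the orphan variables $\vec{v}$. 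Hence $\keyclosure{F}{q_0}$ equals $\keyclosure{F}{q}$ with the variables $\vec{v}$ removed, for every $F\in q_0$; and since no variable of $\vec{v}$ occurs in $q_0$, every witnessing sequence for an attack $F\attacks{q_0}G$ also witnesses $F\attacks{q}G$. A cycle in the attack graph of $q_0$ would therefore lift to one in the attack graph of $q$, contradicting assumption~\eqref{it:dpoacyclic}.

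The substance of the lemma — and the only place the no-block-interference hypothesis is consumed — is the reduction $\certainty{q}{\fk}\reducesTo{\FO}\certainty{q_0}{\fk_0}$. The map I would define drops the relation $O$ and, crucially, deletes every $N$-block of $\db$ all of whose facts are dangling with respect to $\foreignkey{N}{i}{O}$; that is, $\db_0\defeq\restrict{(\db\setminus\Delta)}{q_0}$, where $\Delta$ is the union of the entirely-dangling $N$-blocks. This is first-order definable, since ``entirely-dangling block'' is a first-order property and $q_0$-projection is first-order. The design is dictated by the two repair primitives available under $\fk$: a non-dangling $N$-fact can never be profitably deleted, because keeping it yields an instance strictly $\closer{\db}$ than any $\sd$-repair deleting its block; whereas a block all of whose facts are dangling may be deleted by a $\sd$-repair at a cost incomparable to that of keeping one member and inserting its $O$-witness. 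Removing exactly these blocks lets a subset-repair of $\db_0$ simulate the ``delete the block'' option, while a dangling $N$-fact that a $\sd$-repair keeps — paying a fresh $O$-insertion that, since $O$ is obedient, cascades nowhere — is mirrored by a repair of $\db_0$ that simply picks the same fact.

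For correctness I would work through the pre-repair machinery of Theorem~\ref{thm:pre-repair}, together with Lemma~\ref{lem:norepair} and Corollary~\ref{cor:relevantblockbis}, first recording the query-level equivalence $q\fkequiv{\fk}q_0$ (immediate: a model of $q_0$ satisfying $\fk$ contains the $N$-fact witnessing $q_0$, hence by $\foreignkey{N}{i}{O}$ an $O$-fact on the same key, which the orphan tail $\vec{v}$ lets us match). The direction ``$\db_0$ is a no-instance $\Rightarrow$ $\db$ is a no-instance'' is the more routine one: from a falsifying $\fk_0$-repair $\rep_0$ of $\db_0$ I would chase in the forced $O$-facts (obedient, hence consistent and non-cascading), pass to a pre-repair $\rep\capcloserneq{\db}\rep_1$ of $\db$, and show $\rep\not\models q$ by inspecting the $O$-free part of any hypothetical match and invoking Lemma~\ref{lem:norepair} to contradict $\rep_0$ being a repair; this parallels the proof of Lemma~\ref{lem:dpd}, the new twist being that the inserted $O$-facts are relevant yet cannot complete a match whose $q_0$-part already fails.

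The hard direction is ``$\db$ is a no-instance $\Rightarrow$ $\db_0$ is a no-instance,'' and this is where no block-interference is indispensable. Given a falsifying $\sd$-repair $\rep$ of $\db$, I would build a candidate $\fk_0$-repair of $\db_0$ that keeps the $N$-facts $\rep$ keeps; the danger is that a block $\rep$ disposed of (by deletion, or by keeping a dangling ``escape'' fact together with a fresh $O$-insertion) must, in a subset-repair of $\db_0$, be populated by a fact that completes $q_0$. I would show this cannot happen: because $\foreignkey{N}{i}{O}$ is \emph{not} block-interfering, the defining condition of Definition~\ref{def:block-int} splits, via the syntactic characterization of obedience (Theorem~\ref{thm:syntactic-obedience}), into the case $\FD{q}\models\fd{\emptyset}{\{t_i\}}$ and the case where both~\eqref{it:marking} and~\eqref{it:bifour} fail, so that the value at position~$i$ of a kept dangling fact can be taken \emph{fresh} without forcing any other $N$-fact to become non-dangling. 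Thus the block-to-block reachability chain that drives the $\NL$-hardness of Lemma~\ref{lem:bi} is absent, and a match of $q_0$ in the subset-repair would have to reuse a fact key-equal to one $\rep$ genuinely deleted and could not be compelled to reinstate, whence Lemma~\ref{lem:norepair} contradicts $\rep$ being a $\sd$-repair. Turning this stability statement into the precise combinatorial claim about the repairs of $\db_0$ is the main obstacle, and it is exactly the step that must (and does) break down when block-interference is present.
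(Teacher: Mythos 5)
Your side items are handled correctly and exactly as in the paper: the claims that $\fk_0$ is about $q_0$ and that $(q_0,\fk_0)$ has no block-interference are indeed obtained from Lemma~\ref{lem:obedience-opo} via case~\ref{it:everyF}, and your orphan-variable argument for acyclicity of the attack graph of $q_0$ is fine (the paper dismisses it as easy). The overall shape of your reduction also matches the paper's. But there are two genuine gaps in the reduction itself. First, your pruning set $\Delta$ is defined with respect to $\foreignkey{N}{i}{O}$ alone, whereas it must be defined with respect to all of $\outgoing{N}{\fk}$. The paper first extracts a structural consequence of no block-interference that you never derive: if some $\foreignkey{N}{j}{O'}\in\fk$ had $j\neq i$, then failure of condition~\eqref{it:marking} for both foreign keys would make both $\{(N,k+1),\dots,(N,n)\}\setminus\{(N,i)\}$ and $\{(N,k+1),\dots,(N,n)\}\setminus\{(N,j)\}$ obedient, hence $N$ obedient by Corollary~\ref{cor:obedient-position}, contradicting type $\dpo$; so all foreign keys outgoing $N$ sit on position~$i$, and a block is pruned only when every fact is dangling with respect to \emph{all} of them. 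With your weaker $\Delta$, a retained block can be entirely dangling with respect to some remaining $\foreignkey{N}{i}{O_j}\in\fk_0$, so a repair (of $\db$ or of $\db_0$) may delete it outright; your stated invariant that ``a non-dangling $N$-fact can never be profitably deleted'' is then false, and with it the analogue of Claim~\ref{cla:dporel} (every repair keeps a fact from every retained $N$-block), on which both correctness directions lean — in particular the step that lets Lemma~\ref{lem:norepair} rule out the offending fact being an $N$-fact. You also do not account for invented $O'$-facts of the \emph{other} outgoing foreign keys, which the paper's Deletion step~2 must garbage-collect to preserve $\closer{\db_0}$-minimality.

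Second, you have the hard direction inverted, and your account of where no-block-interference is consumed does not survive scrutiny. The delicate direction is the lifting one, ``$\db_0$ is a no-instance $\Rightarrow$ $\db$ is a no-instance,'' which you call routine. There, the forced $O$-insertions are not innocuous: their keys are dictated by the position-$i$ values of the kept dangling $N$-facts (only the tails are fresh — there is no freedom to ``take the value at position~$i$ fresh''), and since the non-key positions of the obedient $O$-atom are orphan, an inserted $O$-fact can perfectly well serve a query match. Worse, it un-dangles $N$-facts inside the pruned blocks, and $\closer{\db}$-minimality then forces a fact from each such block back into the repair (the paper's Insertion step~2) — this is precisely the block-to-block propagation of Section~\ref{sec:bi}, so ``cascades nowhere'' is wrong. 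Defusing it is the actual substance of the lemma: the paper selects, among falsifying repairs of $\db_0$, one ($\rep_0$) that $\subseteq$-maximizes the set of $N$-facts non-dangling with respect to $\outgoing{N}{\fk}$; then, if a reinstated fact completes a match $\theta$ of $q$, maximality supplies an alternative non-dangling fact in the same block together with a match $\mu$ of $q_0$, and the failure of condition~\eqref{it:bifour} — $y_i$ disconnected from $\keyvars{N}$ in $\constgraph{V}{q'}$ — lets one splice $\theta$ (on the connected component of $y_i$) with $\mu$ (elsewhere) into a hybrid valuation $\gamma$ with $\gamma(q_0)\subseteq\rep_0$, a contradiction. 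You explicitly defer exactly this step (``the main obstacle''), so the crux is missing; note also that your proposed case split via $\FD{q}\models\fd{\emptyset}{\{t_i\}}$ is vacuous under hypothesis~\eqref{it:dpovarfree}, since every key is nonempty and no dependency with empty left-hand side is derivable.
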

\begin{proof}[Proof of Lemma~\ref{lem:dpo}]
That $\fk_0$ is about $q_0$ follows by Lemma \ref{lem:obedience-opo}.
This implies
\begin{equation}\label{eq:dpoio}
\incoming{O}{\fk}=\{\foreignkey{N}{i}{O}\}.
\end{equation}
Since $O$ is obedient and all foreign keys are strong and of type $\dpo$,
it follows $\outgoing{O}{\fk}=\emptyset$.

Concerning foreign keys outgoing $N$, we can write
\begin{equation}\label{eq:unique-out}
\outgoing{N}{\fk}=\{\foreignkey{N}{i}{O_1}, \dots ,\foreignkey{N}{i}{O_m}\}.
\end{equation}
That is, only the $i$th position of $N$ has outgoing foreign keys.
For this,  assume toward contradiction that $\foreignkey{N}{j}{O'}\in \fk$, where $i\neq j$. Denote by $P_N$ the non-primary-key positions of $N$. By the hypothesis of the lemma statement, $\foreignkey{N}{i}{O}\in {\fk}$ is a non-block-interfering strong foreign key of type $\dpo$. 
By the assumptions that $\keyvars{F}\neq\emptyset$ for every $F\in q$, and that $\fk$ is about $q$, the $i$th term of $N$ is a variable. In particular, this variable belongs to
 \begin{equation}\label{eq:emptyV}
 V=\{v\in \queryvars{q'}\mid \FD{q}\not\models\fd{\emptyset}{\{v\}}\}.
 \end{equation}
  We conclude that conditions \eqref{it:obedientO} and \eqref{it:no-constant} of block-interference hold true for $\foreignkey{N}{i}{O}$. But then, conditions \eqref{it:marking} and \eqref{it:bifour} of block-interference must both be false. Recall that \eqref{it:marking} being false means  that $P_N\setminus \{(N,i)\}$ is obedient. Now, the exact same reasoning as above can be repeated for $\foreignkey{N}{j}{O'}$. We conclude that $P_N\setminus \{(N,j)\}$ is likewise obedient. But then Corollary \ref{cor:obedient-position} implies that $P_N$ is obedient, since we assumed that $i\neq j$.
   In particular, $N$ is an obedient atom, contradicting the assumption that $\foreignkey{N}{i}{O}$ is of type $\dpo$. We conclude by contradiction that only the $i$th position of $N$ can have outgoing foreign keys. 

Before proceeding with the proof, let us make some observations about the $N$-atom of $q$. 
 We established that the non-primary-key position $(N,i)$ is occupied by a variable. Since all the remaining non-primary-key positions of $N$ are obedient and do not have outgoing foreign keys, they must be occupied by orphan variables of $q$. We conclude that the $N$-atom is of the form $N(\underline{\vec{t}},y_{k+1}, \dots ,y_n)$ over a sequence of terms $\vec{t}$ and variables $y_{k+1}, \dots ,y_n$, of which those in $\{y_{k+1},\dots ,y_n\}\setminus \{y_i\}$ are orphan. 

Concerning variable $y_i$, let us turn back to condition \eqref{it:bifour} of block-interference. Since this condition is false for the foreign key $\foreignkey{N}{i}{O}$, the variable $y_i$ is not connected  to any variable listed in $\vec{t}$ in $\constgraph{V}{q'}$, for $V$ defined in \eqref{eq:emptyV} and 
\begin{equation}\label{eq:qprime}
q'\defeq q\setminus\{N(\underline{\vec{t}},y_{k+1},\dots,y_{n})\}.
\end{equation}
Observe that $V=\queryvars{q'}$ by the assumption that $\keyvars{F}\neq\emptyset$ for every $F\in q$. 

\textbf{Proof of the third item.}
Easy.

\textbf{Proof of the second item.}
Follows by Lemma \ref{lem:obedience-opo}.

\textbf{Proof of the first item.}
Let $\db$ be a database instance that is input to $\certainty{q}{\fk}$.
We define $\db_{0}$ as the smallest database instance satisfying the following two conditions:
\begin{itemize}
\item
for every relation name $R$ that occurs in $q$ such that $R\notin\{N, O\}$,
$\db_{0}$ contains all $R$-facts of $\db$; and
\item\emph{Relevance restriction:}
for the relation name $N$, $\db_{0}$ includes all (and only) those $N$-blocks of $\db$ that contain at least one fact that is not dangling with respect to $\outgoing{N}{\fk}$.
\end{itemize}
Clearly, $\db_{0}\subseteq\db$. The following claim has an easy proof.
\begin{claim}\label{cla:dporel}
Every repair of $\db$ with respect to foreign keys in $\fk$ and primary keys contains an $N$-fact from every $N$-block of $\db_{0}$.
\end{claim}

It suffices to show the following:
\begin{enumerate}[label=(\Alph*)]
\item\label{it:redyy}
if $\db$ is a ``no''-instance of $\certainty{q}{\fk}$,
then $\db_{0}$ is a ``no''-instance of $\certainty{q\setminus\{O\}}{\fk\setminus\{\foreignkey{N}{i}{O}\}}$; and
\item\label{it:rednn}
if $\db_{0}$ is a ``no''-instance of 
$$\certainty{q\setminus\{O\}}{\fk\setminus\{\foreignkey{N}{i}{O}\}},$$
then $\db$ is a ``no''-instance of $\certainty{q}{\fk}$.
\end{enumerate}

\framebox{Proof of \ref{it:redyy}}
Assume that  $\db$ is a ``no''-instance of the problem  $\certainty{q}{\fk}$.
We can assume a repair~$\rep$ with respect to foreign keys in $\fk$ and primary keys such that $\rep\not\models q$.

We construct $\rep_{0}$ from $\rep$ by applying the following steps:
\begin{description}
\item[Deletion step~1:]
First, delete from $\rep$ all $N$-facts that are not in $\db_{0}$, and delete all $O$-facts.
\item[Deletion step~2:]
Then, for each $\foreignkey{N}{i}{O'}$ in $\outgoing{N}{\fk}$, delete all $O'$-facts of $\rep\setminus\db$ that are no longer referenced by an $N$-fact.
\end{description}
For example, assume a query with atoms atom $N(\underline{x},y)$, $O(\underline{y})$, and $O'(\underline{y})$, together with foreign keys $\foreignkey{N}{2}{O}$ and $\foreignkey{N}{2}{O'}$.
A repair $\rep$ may \emph{(i)}~share, with the input database instance $\db$, the facts $N(\underline{a},2)$ and $O(\underline{2})$, and \emph{(ii)}~invent a fresh fact $O'(\underline{2})$ not in $\db$.
If the former two facts are deleted in the first step, then the latter fact is subject to deletion in the second step. 

Since $\incoming{N}{\fk}=\emptyset$,
it follows that $\rep_{0}$ satisfies foreign keys in $\fk\setminus\{\foreignkey{N}{i}{O}\}$ and primary keys.
By Claim~\ref{cla:dporel}, $\rep_{0}$ contains an $N$-fact from every $N$-block in $\db_{0}$.
By construction, $\rep\cap\db_{0}\subseteq\rep_{0}\subseteq\rep$.

We show
\begin{equation}\label{eq:sepq}
\rep_{0}\not\models q\setminus\{O\}.
\end{equation}
Assume for the sake of contradiction that $\rep_{0}\models q\setminus\{O\}$.
Then there is a valuation $\mu$ over $\queryvars{q\setminus\{O\}}$ such that $\mu(q\setminus\{O\})\subseteq\rep_{0}\subseteq\rep$.
Let $N(a_{1},\dots,a_{n})$ be the $N$-fact of $\mu(q)$.
Since $\rep$ satisfies $\foreignkey{N}{i}{O}$,
it contains a fact of the form $O(\underline{a_{i}},\filler)$.
Since the $O$-atom is obedient, $\mu$ can be extended to a valuation $\mu^{+}$ over $\queryvars{q}$ such that $\mu^{+}(q)\subseteq\rep$, contradicting $\rep\not\models q$.

Let $\rep_{0}^{*}$ be a database instance, consistent with respect to foreign keys in $\fk\setminus\{\foreignkey{N}{i}{O}\}$ and primary keys, such that and $\rep_{0}^{*}\closer{\db_{0}}\rep_{0}$.
That is,
\begin{align}
\rep_{0}\cap\db_{0} & \subseteq\rep_{0}^{*}\label{eq:sandwichA}\\
\rep_{0}^{*} & \subseteq\db_{0}\cup\rep_{0}\subseteq\db\cup\rep\label{eq:sandwichB}
\end{align}

It suffices to show $\rep_{0}^{*}\not\models q\setminus\{O\}$. 
Assume for the sake of contradiction that there is a valuation $\theta$ over $\queryvars{q\setminus\{O\}}$ such that $\theta(q\setminus\{O\})\subseteq\rep_{0}^{*}$.
By~\eqref{eq:sandwichB}, $\theta(q\setminus\{O\})\subseteq\db\cup\rep$.
Since $\rep\models\foreignkey{N}{i}{O}$ and since the $O$-atom is obedient, $\theta$ can be extended to a valuation $\theta^{+}$ over $\queryvars{q}$ such that $\theta^{+}(q)\subseteq\db\cup\rep$.

Since $\rep_{0}\not\models q\setminus\{O\}$, there must be a fact $A\in\theta(q\setminus\{O\})$ such that $A\not\in\rep_{0}$.
By~\eqref{eq:sandwichB}, $A\in\db_{0}$.
 Since, as argued before, $\rep_{0}$ contains an $N$-fact of every $N$-block of $\db_{0}$, applying~\eqref{eq:sandwichA} it can be seen that $A$ cannot be an $N$-fact. We now obtain that $A\in\theta^{+}(q)\setminus\rep$. Moreover, 
if $\rep_{0}$ contains an atom $A'$ such that $A'\keyeq A$, then $A'\notin\db_{0}$ by~\eqref{eq:sandwichA}.
Hence, we also obtain that $\rep\cap\db$ contains no fact that is key-equal to~$A$.

By Lemma~\ref{lem:norepair}, it is now correct to conclude that $\rep$ is not a repair with respect to foreign keys in $\fk$ and primary keys, a contradiction.

\framebox{Proof of \ref{it:rednn}}
Assume that $\db_{0}$ is a ``no''-instance of the problem $\certainty{q\setminus\{O\}}{\fk\setminus\{\foreignkey{N}{i}{O}\}}$.
Among all repairs (with respect to foreign keys in $\fk\setminus\{\foreignkey{N}{i}{O}\}$ and primary keys) of $\db_{0}$ that falsify~$q\setminus\{O\}$ (there is at least one such repair),
let $\rep_{0}$ be one that $\subseteq$-maximizes the set of $N$-facts that are not dangling {in $\db$} with respect to $\outgoing{N}{\fk}$.
Recall that in moving from $\db$ to $\db_{0}$, an $N$-block is removed only if \emph{all} its facts are dangling  {in $\db$} with respect to $\outgoing{N}{\fk}$.
Thus, $\db_{0}$ can contain $N$-facts that are dangling  {in $\db$} with respect to $\outgoing{N}{\fk}$.
It can be easily verified that $\rep_{0}$ will contain a fact from every $N$-block in $\db_{0}$.
The proof now constructs a repair of $\db$, called $\rep$, that falsifies $q$.

We construct $\rep$ from $\rep_{0}$ by applying the following steps:
\begin{description}
\item[Insertion step~1:]
First, insert into $\rep_{0}$ all $O$-facts of $\db$.
Then, chase $\rep_{0}$ with the foreign key $\foreignkey{N}{i}{O}$.
That is, if there is a fact $N(\underline{\vec{a}},b_{k+1},\dots ,b_{n})$ that is dangling with respect to $\foreignkey{N}{i}{O}$,
then insert $O(\underline{b_{i}}, \vec{c})$ for some sequence $\vec{c}$ of fresh constants.
\item[Insertion step~2]
Consider every $N$-block of $\db$ that is not in $\db_{0}$.
If, due to the insertions in the previous step, one fact of such $N$-block is no longer dangling with respect to $\outgoing{N}{\fk}$, then insert a fact from that block.
\end{description}
By construction, $\rep$ is consistent with respect to foreign keys in $\fk$ primary keys. 

\begin{claim}\label{cla:rthree}
$\rep\not\models q$.
\end{claim}
\begin{proof}
Assume towards a contradiction that there is a valuation $\theta$ over $\queryvars{q}$ such that $\theta(q)\subseteq\rep$.
Let $N(\underline{\vec{a}},b_{k+1},\dots,b_{n})$ be the (unique) $N$-fact in $\theta(q)$.
Since $\rep_{0}\not\models q\setminus\{O\}$,
we observe that the fact $N(\underline{\vec{a}},b_{k+1},\dots,b_{n})$ does not belong to $\db_{0}$ and was inserted in \emph{Insertion step~2}.
Thus, every fact in the block $N(\underline{\vec{a}},*)$ is dangling in $\db$ with respect to $\outgoing{N}{\fk}$.
Then, there is a fact  $N(\underline{\vec{c}},p_{k+1},\dots,p_{n})\in\rep_{0}\cap\db_{0}$ such that
\begin{enumerate}
\item
$N(\underline{\vec{c}},p_{k+1},\dots,p_{n})$ is dangling in $\db$ with respect to $\outgoing{N}{\fk}$; and
\item
we have $b_{i}=p_{i}$ (since $\outgoing{N}{\fk}$ is of the form \eqref{eq:unique-out}).
Informally, due to $N(\underline{\vec{c}},p_{k+1},\dots,p_{n})\in\rep_{0}$, we insert, in \emph{Insertion step~1}, the invented fact $O(\underline{p_{i}})$ which in turn entails the insertion, in \emph{Insertion step~2}, of $N(\underline{\vec{a}},b_{k+1},\dots,b_{n})$.
\end{enumerate}
By our choice of $\rep_{0}$, there is a fact $N(\underline{\vec{c}},d_{k+1},\dots,d_{n})$ that is not dangling in $\db$ with respect to $\outgoing{N}{\fk}$, and a valuation $\mu$ over $\queryvars{q}$ such that 
\begin{equation}\label{eq:pivot}
\mu(q\setminus\{O\})\subseteq\formula{\rep_{0}\setminus\{N(\underline{\vec{c}},p_{k+1},\dots,p_{n})\}}\cup\{N(\underline{\vec{c}},d_{k+1},\dots,d_{n})\}.
\end{equation}
Recall that the $N$-atom of $q$ is of the form $N(\underline{\vec{t}},y_{k+1}, \dots ,y_n)$ over a sequence of terms $\vec{t}$ and variables $y_{k+1}, \dots ,y_n$, of which those in $\{y_{k+1},\dots ,y_n\}\setminus \{y_i\}$ are orphan.
We define a valuation $\gamma$ over $\queryvars{q\setminus\{O\}}$ as follows. 
Let $\gamma(y_j)=p_j$ for $j\in \{k+1, \dots ,n\}\setminus\{i\}$. For $u\in \queryvars{q\setminus\{O\}}\setminus (\{y_{k+1},\dots ,y_n\}\setminus \{y_i\})$,  let
$$
\gamma(u)=
\begin{cases}
\theta(u) & \mbox{if $u$ is connected to $y_i$ in $\constgraph{V}{q'}$}\\
\mu(u) & \mbox{otherwise}
\end{cases}
$$
Recall that $V$ and $q'$ have been defined in \eqref{eq:emptyV} and \eqref{eq:qprime}. In particular, $\constgraph{V}{q'}$ is the standard Gaifman graph of $q'$ since $V=\queryvars{q'}$. 

It can now be seen, by $\theta(q)\subseteq \rep$ and \eqref{eq:pivot}, and by the fact that $\rep$ and $\rep_0$ are identical over all relation names except $N$ and $O$, that $\gamma(q\setminus\{O,N\})\subseteq \rep_0$. 

Next we show that $\gamma(N(\underline{\vec{t}},y_{k+1}, \dots ,y_n))=N(\underline{\vec{c}},p_{k+1},\dots,p_{n})$. 
Note first that if any primary-key position of atom $N(\underline{\vec{t}},y_{k+1}, \dots ,y_n)$ is occupied by a constant, the same constant must occupy the same position also in $N(\underline{\vec{c}},p_{k+1},\dots,p_{n})$;  the reason is that  
 $N(\underline{\vec{c}},d_{k+1},\dots,d_{n})$ is a relevant fact by \eqref{eq:pivot} and since $\rep_0\not\models q\setminus\{O\}$.
Analogously we see that if a primary-key position in $N(\underline{\vec{t}},y_{k+1}, \dots ,y_n)$ is occupied by a variable $x$, then the constant $\mu(x)$ must occupy this position in $N(\underline{\vec{c}},p_{k+1},\dots,p_{n})$. Having established earlier that $y_i$ is not connected in $\constgraph{V}{q'}$ to any variable listed in $\vec{t}$, we furthermore obtain  that $\gamma(x)=\mu(x)$. Concerning the non-primary-key positions, we note that $\gamma(y_i)=\theta(y_i)$ since $y_i$ is vacuously connected to itself. Since $N(\underline{\vec{a}},b_{k+1},\dots,b_{n})$ is the (unique) $N$-fact in $\theta(q)$, and since $b_i=p_i$, we obtain that $\gamma(y_i)=p_i$. Moreover, we have specifically imposed that $\gamma(y_j)=p_j$ for $j\in \{k+1, \dots ,n\}\setminus\{i\}$. We conclude from all these remarks that $\gamma(N(\underline{\vec{t}},y_{k+1}, \dots ,y_n))=N(\underline{\vec{c}},p_{k+1},\dots,p_{n})$. See Example~\ref{ex:gamma} for an illustration.

Since $N(\underline{\vec{c}},p_{k+1},\dots,p_{n})\in \rep_0$, we have established that $\gamma(q\setminus\{O\})\subseteq \rep_0$, contradicting our assumption that $\rep_0$ is a repair that falsifies $q\setminus\{O\}$. 
This proves the claim that $\rep\not\models q$.
\end{proof}

\begin{example}\label{ex:gamma}
Assume that $q=\{Y(\underline{y})$, $N(\underline{x},y,u)$, $O(\underline{y})\}$ with foreign key $\foreignkey{N}{2}{O}$.
Note that $u$ is an orphan variable.
Let
$$
\db
=
\begin{array}{ccc}
\begin{array}[t]{c|c}
Y & \underline{y}\bigstrut\\\cline{2-2}
  & d_{1} \\\cdashline{2-2}
\end{array}
&
\begin{array}[t]{c|ccc}
N & \underline{x} & y & u\\\cline{2-4}
  & c & d_{1} & d_{2}\\
  & c & p_{1} & p_{2}\\\cdashline{2-4}
  & a & p_{1} & b_{2}\\\cdashline{2-4}
\end{array}
&
\begin{array}[t]{c|c}
O & \underline{y}\bigstrut\\\cline{2-2}
  & d_{1} \\\cdashline{2-2}
\end{array}
\end{array}
$$
We obtain:
$$
\db_{0}
=
\begin{array}{ccc}
\begin{array}[t]{c|c}
Y & \underline{y}\bigstrut\\\cline{2-2}
  & d_{1} \\\cdashline{2-2}
  \end{array}
&
\begin{array}[t]{c|ccc}
N & \underline{x} & y & u\\\cline{2-4}
  & c & d_{1} & d_{2}\\
  & c & p_{1} & p_{2}\\\cdashline{2-4}
\end{array}
\end{array}
$$
The following repair $\rep_{0}$ is the only repair of $\db_{0}$ falsifying $\{Y(\underline{y})$, $N(\underline{x},y)\}$:
$$
\rep_{0}
=
\begin{array}{ccc}
\begin{array}[t]{c|c}
Y & \underline{y}\bigstrut\\\cline{2-2}
  & d_{1} \\\cdashline{2-2}
\end{array}
&
\begin{array}[t]{c|ccc}
N & \underline{x} & y & u\\\cline{2-4}
  & c & p_{1} & p_{2}\\\cdashline{2-4}
\end{array}
\end{array}
$$
In particular, $\mu=\{x\mapsto c$, $y\mapsto d_{1}$, $u\mapsto d_{2}\}$ would be a satisfying valuation for a repair containing $N(\underline{c},d_{1},d_{2})$.
We obtain $\rep$ from $\rep_{0}$ as follows:
$$
\rep
=
\begin{array}{ccc}
\begin{array}[t]{c|c}
Y & \underline{y}\bigstrut\\\cline{2-2}
  & d_{1} \\\cdashline{2-2}
\end{array}
&
\begin{array}[t]{c|ccc}
N & \underline{x} & y & u\\\cline{2-4}
  & c & p_{1} & p_{2}\\
  & a & p_{1} & b_{2}\\\cdashline{2-4}
\end{array}
&
\begin{array}[t]{c|c}
O & \underline{y}\bigstrut\\\cline{2-2}
  & d_{1} \\\cdashline{2-2}
  & p_{1}\\\cdashline{2-2}
\end{array}
\end{array}
$$
We have that $\rep$ is a repair of $\db$ that falsifies~$q$.
In particular, the fact $N(\underline{c},p_{1},b_{2})$ cannot be used for making the query true.
\qed
\end{example}

To finish the proof, we make the following claim.

\begin{claim}\label{cla:dpoB}
$\rep$ is a repair of $\db$ with respect to foreign keys in $\fk$ and primary keys.
\end{claim}
\begin{proof}[Proof sketch of Claim~\ref{cla:dpoB}.]
By Claim~\ref{cla:dporel}, there exists a repair of $\db$ with respect to foreign keys in $\fk$ and primary keys that contains all $N$-facts of $\rep_{0}$.
The invented $O$-facts inserted in \emph{Insertion step~1} are needed to satisfy $\{\foreignkey{N}{i}{O}\}$.
The $N$-facts inserted in \emph{Insertion step~2} all belong to $\db$.

Assume there is a repair $\sep$ of $\db$ with respect to foreign keys in $\fk$ and primary keys that contains more $N$-facts of $\db$. Then $\sep$ would have to insert also invented fresh $O$-facts, and hence $\sep$ and $\rep$ would not be comparable by $\closer{\db}$.

Finally, note that, since $\incoming{N}{\fk}=\emptyset$, the insertion of $N$-facts does not entail further insertions. 
\end{proof}

The proof of Lemma~\ref{lem:dpo} is now concluded.
\end{proof}

\begin{lemma}\label{lem:constantkey}
Let $q$ be query in $\sjfbcq$, and $\fk$ a set of foreign keys about~$q$.
Following Lemmas~\ref{lem:nonProper}, \ref{lem:opo}, and~\ref{lem:dpd}, assume that all foreign keys in $\fk$ are strong and of type $\dpo$.
Let $N$ be an atom of $q$ such that $\keyvars{N}=\emptyset$, and let $\vec{x}$ be the variables of $\atomvars{N}$.
Let $b$ be an arbitrary constant, and $\vec{b}=\tuple{b,b,\ldots,b}$, a sequence of the same length as~$\vec{x}$.
Let $q_{0}=q\setminus\fkclosure{N}{q}{\fk}$ and $\fk_{0}=\restrict{\fk}{q_{0}}$.
Then $\fk_{0}$ is about $q_{0}$, and the following hold:
\begin{itemize}
\item
$\certainty{q}{\fk}\reducesTo{\FO}\certainty{\substitute{q_{0}}{\vec{x}}{\vec{b}}}{\fk_{0}}$;
\item
if $(q,\fk)$ has no block-interference,
then $(\substitute{q_{0}}{\vec{x}}{\vec{b}},\fk_{0})$ has no block-interference; and
\item
if the attack graph of $q$ is acyclic, then the attack graph of $\substitute{q_{0}}{\vec{x}}{\vec{b}}$ is acyclic.
\end{itemize}
\end{lemma}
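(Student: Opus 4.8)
The plan is to peel off the entire foreign-key closure $\fkclosure{N}{q}{\fk}$ in a single first-order reduction step, deciding its contribution by a first-order side formula and forwarding the remainder of the query, with the variables of $N$ frozen to $b$, to the smaller problem. I would begin by recording the structural consequences of $\keyvars{N}=\emptyset$. Writing $C=\{z\in\queryvars{q}\mid\FD{q}\models\fd{\emptyset}{z}\}$ as in~\eqref{eq:C}, the functional dependency $\fd{\keyvars{N}}{\atomvars{N}}=\fd{\emptyset}{\vec{x}}$ gives $\vec{x}\subseteq C$, and, since $\fk$ is about $q$ with every foreign key strong of type $\dpo$, the primary-key term of each atom referenced from $N$ equals a term occurring at a non-primary-key position of $N$; a short induction along the dependency graph then shows that every primary-key position inside $\fkclosure{N}{q}{\fk}$ is occupied by a constant or a $C$-variable. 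Because a type-$\dpo$ foreign key references an obedient atom while $N$ is disobedient, no foreign key of $\fk$ can reference $N$, so $\incoming{N}{\fk}=\emptyset$ and $\fkclosure{N}{q}{\fk}$ consists of $N$ together with obedient referenced atoms, exactly as in the proof of Lemma~\ref{lem:dpo}. That $\fk_{0}$ is about $q_{0}$ then follows as in Lemma~\ref{lem:obedience-opo}, since the only relation names dropped are those of $\fkclosure{N}{q}{\fk}$ and no surviving foreign key mentions them.

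Given an input $\db$, I would build $\db_{0}$ over the relation names of $q_{0}$ as follows. Since $N$ has an all-constant primary key, there is a single $N$-block $N(\underline{\vec{c}},\ast)$, and Corollary~\ref{cor:relevantblockbis} forces a fact of any relevant block into every repair; this locality makes the certain truth of the component $\fkclosure{N}{q}{\fk}$ first-order decidable, and it also pins down the value $\vec{a}$ that any satisfying valuation must assign to $\vec{x}$, read off from the query-matching fact of the $N$-block. The reduction then (i) tests this first-order side condition $\varphi_{N}$; (ii) if $\varphi_{N}$ fails, outputs a fixed ``no''-instance; and (iii) otherwise keeps the $q_{0}$-relevant facts of $\db$ while renaming the binding $\vec{a}$ of $\vec{x}$ to the single constant $b$, so that $\substitute{q_{0}}{\vec{x}}{\vec{b}}$ is queried. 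All three ingredients are first-order, hence $\db\mapsto\db_{0}$ is a first-order reduction. The equivalence $\db\in\certainty{q}{\fk}\iff\db_{0}\in\certainty{\substitute{q_{0}}{\vec{x}}{\vec{b}}}{\fk_{0}}$ is then proved in the two usual directions with Lemma~\ref{lem:norepair} and the pre-repair machinery of Theorem~\ref{thm:pre-repair}, turning a falsifying repair of one instance into a falsifying repair of the other; the invented $O$-facts created when chasing the removed foreign keys are argued to be irrelevant for $q_{0}$ exactly as in Lemmas~\ref{lem:dpo} and~\ref{lem:dpd}.

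Acyclicity of the attack graph of $\substitute{q_{0}}{\vec{x}}{\vec{b}}$ is routine, since deleting atoms and replacing variables by constants can only delete edges of the attack graph. The absence of block-interference is the subtle point, because, as Example~\ref{ex:block} shows, freezing a variable to a constant can by itself create block-interference through condition~\eqref{it:marking} of Definition~\ref{def:block-int}. I would argue, reusing Lemma~\ref{lem:obedience-opo} together with the structural facts of the first step, that no foreign key of $\fk_{0}$ becomes block-interfering after the substitution: concretely, I must verify that the $C$-variables $\vec{x}$ do not occur in $q_{0}$ at a non-primary-key position lying in the closure of an outgoing foreign key of $\fk_{0}$, since only such an occurrence could turn a previously obedient position disobedient; this is precisely where the hypothesis that $(q,\fk)$ has no block-interference is consumed.

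The main obstacle I expect is the correctness of collapsing all of $\vec{x}$ to the single constant $b$ while simultaneously keeping the block-interference analysis intact. Since the shared $C$-variables are what glue the component $\fkclosure{N}{q}{\fk}$ to $q_{0}$, I must show that fixing them to one value neither manufactures a spurious satisfying valuation of the residual (soundness) nor erases a falsifying repair (completeness) --- the care being needed when the $N$-block admits several query-matching facts, so that the forced binding $\vec{a}$ must itself be shown to be unique in every ``yes''-instance --- and that the same substitution cannot switch a non-block-interfering foreign key of $\fk_{0}$ into a block-interfering one. The first-order separability of the $N$-component, resting entirely on $N$ having a constant primary key, is the structural crux on which the side formula $\varphi_{N}$, and hence the whole reduction, depends.
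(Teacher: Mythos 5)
There is a genuine gap in your reduction for the first item: you let the side formula ``pin down the value $\vec{a}$ that any satisfying valuation must assign to $\vec{x}$,'' and you correctly note that this requires the forced binding to be unique in every ``yes''-instance. That uniqueness claim is false, and the paper's proof is organized precisely around its failure. Since $\keyvars{N}=\emptyset$ there is a single $N$-block, and---as the paper observes---\emph{every} fact of that block belongs to some repair, because foreign keys of type $\dpo$ can always be satisfied by inserting facts into the obedient referenced atoms. Hence the correct characterization is universally quantified: $\db$ is a ``yes''-instance if and only if \emph{for every} fact $N(\underline{\vec{c}},\vec{d})$ of the block there is a valuation $\theta$ with $\theta(\vec{t})=\vec{d}$ such that $\db$ is a ``yes''-instance of $\certainty{\substitute{q_{0}}{\vec{x}}{\theta(\vec{x})}}{\fk_{0}}$; different block facts impose different bindings, and all of them must be checked. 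The paper's own illustration at the end of Section~\ref{sec:fo} witnesses non-uniqueness: for $q=\{N(\underline{c},y), O(\underline{y}), P(\underline{y})\}$ with $\fk=\{\foreignkey{N}{2}{O}\}$ (here $\keyvars{N}=\emptyset$ and the foreign key is strong of type $\dpo$), the displayed database is a ``yes''-instance whose $N$-block contains both $N(\underline{c},a)$ and $N(\underline{c},b)$, inducing two distinct bindings, and removing either $P(\underline{a})$ or $P(\underline{b})$ flips the answer. A reduction that commits to one binding $\vec{a}$ therefore decides such instances incorrectly.

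The paper instead proves the universally quantified equivalence and then reduces each parameterized residual problem $\certainty{\substitute{q_{0}}{\vec{x}}{\theta(\vec{x})}}{\fk_{0}}$ uniformly to the single target $\certainty{\substitute{q_{0}}{\vec{x}}{\vec{b}}}{\fk_{0}}$ via a position-aware \emph{injective} renaming: each database constant $a$ at a position whose query term is $s$ is mapped to $f(a,s)$, with $f(c_{1},t_{1})=f(c_{2},t_{2})$ only if $c_{1}=c_{2}$ and $t_{1}=t_{2}$, and $f(c,c)=b$; self-join-freeness makes this well defined. Note that your renaming---globally mapping the several, possibly distinct, constants of $\vec{a}$ to the one constant $b$---is additionally unsound on its own terms: it is not injective, so it can merge blocks and manufacture spurious query matches, which the per-position tagging avoids. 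Your structural preliminaries ($\vec{x}\subseteq C$, $\incoming{N}{\fk}=\emptyset$, $\fkclosure{N}{q}{\fk}$ consisting of $N$ together with obedient referenced atoms, $\fk_{0}$ about $q_{0}$), the handling of the degenerate cases by a fixed ``no''-instance, and your sketches for the second and third items do track the paper's argument (the paper works out the obedience-transfer claim for the second item in detail), but the core of the first item's reduction, as proposed, would fail.
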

\begin{proof}[Proof of Lemma~\ref{lem:constantkey}]
Note that if $\outgoing{N}{\fk}=\emptyset$, then $\fkclosure{N}{q}{\fk}=\{N\}$.

It is clear that $\fk_{0}=\restrict{\fk}{q_{0}}$ is about $q_0$ and thus about $\substitute{q_{0}}{\vec{x}}{\vec{b}}$.

\textbf{Proof of the third item.}
Easy.

\textbf{Proof of the second item.}
Since all the foreign keys are of the form $\dpo$, it follows that $\fkclosure{N}{q}{\fk}=\{N,O_1,\dots ,O_m\}$ such that for all $j\in [m]$, $O_j$ is obedient and there is a foreign key of the form $\foreignkey{N}{i_j}{O_j}\in \fk$. Consider the following claim.

\begin{claim}\label{claim:subP}
 Let $P$ be a set of non-primary-key positions concerning some relation name that appears in $\substitute{q_{0}}{\vec{x}}{\vec{b}}$. Then, $P$ is obedient over $\fk_0$ and $\substitute{q_{0}}{\vec{x}}{\vec{b}}$  if and only if it is obedient over $\fk$ and $q$. 
 \end{claim}
 \begin{proof}[Proof of Claim \ref{claim:subP}]
 Suppose $P$ is obedient over  $\fk_0$ and $\substitute{q_{0}}{\vec{x}}{\vec{b}}$. We claim that $\posclosure{P}{\fk}=\posclosure{P}{\fk_0}$. For the sake contradiction, suppose this is not the case. Then we find a foreign key $\foreignkey{T}{i}{U}\in \fk$ where $(T,i)\in \posclosure{P}{\fk_0}$ and $U\in \fkclosure{N}{q}{\fk}$. We show that this leads to a contradiction. 
 
 Suppose first that $U=N$. In this case, since $\fk$ is about $q$ and $(N,1)$ is occupied by a constant in $q$, $(T,i)$ must be occupied by a constant in $q$. In particular, $(T,i)$ is occupied by a constant in $\substitute{q_{0}}{\vec{x}}{\vec{b}}$, since $\fk_0$ is about $\substitute{q_{0}}{\vec{x}}{\vec{b}}$. This contradicts the assumption that $P$ is obedient over $\fk_0$ and $q_0$.
 
Suppose then that $U=O_j$ for some $j\in [m]$. For the same reason as above, $(O_j,1)$ cannot be occupied by a constant in $q$. But then, it is occupied in $q$ by some variable $x$ listed in $\vec{x}$. Since $\fk$ is about $q$, also $(T,i)$ is occupied by $x$ in $q$. In particular, $(T,i)$ is occupied by $x$ in $q_0$. 
 However, due to obedience of $P$ over $q_0$ and $\fk_0$, no position of $\posclosure{P}{\fk_0}$ can be occupied by the constant $b$ in $\substitute{q_{0}}{\vec{x}}{\vec{b}}$. This implies that $(T,i)$ in particular cannot be occupied by  $x$ in $q_0$. Thus we obtain a contradiction. 
 
 We conclude by contradiction that $\posclosure{P}{\fk}=\posclosure{P}{\fk_0}$. Since $P$ is obedient over $\fk_0$ and $\substitute{q_{0}}{\vec{x}}{\vec{b}}$, using Theorem \ref{thm:syntactic-obedience} it suffices to show that there is no variable occuring at a position of $\posclosure{P}{\fk}$ and at a position of $\poscomplement{P}{\fk}$ in $q$. In particular, we may restrict attention to those positions which appear in $q$ but not in $q_0$. That is, we only consider those variables that occur at an
  $N$-position or an $O_j$-position, for $j\in [m]$. Note that these positions belong to $\poscomplement{P}{\fk}$, which means that these variables must not occur at positions of $\posclosure{P}{\fk}$ in $q$. First, observe that $\vec{x}$ lists $\atomvars{N}\cup\bigcup_{j=1}^m\keyvars{O_j}$ since $\fk$ is about $q$.  Clearly, no variable from $\vec{x}$ may appear at a position of $\posclosure{P}{\fk}$ in $q$, because then the constant $b$ would appear at a position of $\posclosure{P}{\fk_0}$ in $\substitute{q_{0}}{\vec{x}}{\vec{b}}$, contradicting obedience of $P$ over $\substitute{q_{0}}{\vec{x}}{\vec{b}}$ and $\fk_0$. Any remaining variable $y$ occurs in $q$ at a non-primary-key position of some $O_j$. Since there are no foreign keys outgoing obedient $O_j$, such a variable $y$ must be orphan in $q$. In particular, $y$ cannot occur at  a position of $\posclosure{P}{\fk}$ in $q$. This establishes that $P$ is obedient over $\fk$ and $q$.
 
 For the other direction, suppose $P$ is obedient over $\fk$ and $q$. The primary key positions of $N$ must belong to $\poscomplement{P}{\fk}$ since they are occupied by constants in $q$. Since the relation name  associated with $P$ is distinct from $N$, the non-primary-key position of $N$ must likewise belong to $\poscomplement{P}{\fk}$. 
 Using Theorem~\ref{thm:syntactic-obedience}~\ref{it:disjoint}, we observe that the variables listed in $\vec{x}$ cannot occur in $q$ at any position of $\posclosure{P}{\fk}$. Since $\posclosure{P}{\fk_0}\subseteq \posclosure{P}{\fk}$, the same holds with respect to $q_0$ and $\posclosure{P}{\fk_0}$. In particular, the substitution of $\vec{b}$ for $\vec{x}$ in $q_0$ does not introduce constants at positions of $\posclosure{P}{\fk_0}$. It is now straightforward to verify that all the conditions of Theorem \ref{thm:syntactic-obedience} are preserved under $\substitute{q_{0}}{\vec{x}}{\vec{b}}$ and $\fk_0$. We conclude that $P$ is obedient over $\fk_0$ and $\substitute{q_{0}}{\vec{x}}{\vec{b}}$. This concludes the proof of the claim.
 \end{proof}
 
 We are now ready to prove that $(\substitute{q_{0}}{\vec{x}}{\vec{b}},\fk_0)$ has no block-interference, assuming $(q,\fk)$ has no block-interference.
Toward contradiction, assume that this is not the case. 
Then, $\lclosure{\fk_0}$ contains a block-interfering (with respect to $(\substitute{q_{0}}{\vec{x}}{\vec{b}},\fk_0)$)  foreign key $\sigma\defeq \foreignkey{N}{j}{O}$ for some atoms 
 $N(\underline{t_{1}, \dots ,t_k},t_{k+1},\dots,t_{n})$ and  $O(\underline{t_j},\vec{y})$ of $\substitute{q_{0}}{\vec{x}}{\vec{b}}$.

Claim \ref{claim:subP} now entails that, with respect to $(q,\fk)$, $\sigma$ satisfies 
Definition~\ref{def:block-int}~\eqref{it:obedientO}, 
and also Definition~\ref{def:block-int}~\eqref{it:marking} if this holds with respect to $(\substitute{q_{0}}{\vec{x}}{\vec{b}},\fk_0)$. 
It remains to show that $\sigma$ satisfies  Definition~\ref{def:block-int}~\eqref{it:no-constant}, and also  Definition~\ref{def:block-int}~\eqref{it:bifour} if this holds with respect to $(\substitute{q_{0}}{\vec{x}}{\vec{b}},\fk_0)$. 
Let us denote the terms $t_i$ and $t_j$ of the remaining conditions by $u$ and $z$, respectively.

Concerning Definition~\ref{def:block-int}~\eqref{it:no-constant}, we claim that $z$ belongs to
\begin{equation*}
V=\{v\in \queryvars{q}\mid \FD{q}\not\models\fd{\emptyset}{\{v\}}\}.
\end{equation*}
 By the assumption that $(q,\fk)$ has no block-interference, we obtain that $z$ belongs to 
 \begin{equation*}
 V_0=\{v\in \queryvars{\substitute{q_{0}}{\vec{x}}{\vec{b}}}\mid \FD{\substitute{q_{0}}{\vec{x}}{\vec{b}}}\not\models\fd{\emptyset}{\{v\}}\}. 
\end{equation*}

To establish the claim, we can use similar reasoning as in the proof of Lemma \ref{lem:obedience-opo}. Indeed, assume toward contradiction that $z\in V_0\setminus V$. Then  $\FD{q}\models\fd{\emptyset}{\{z\}}$ and $\FD{\substitute{q_{0}}{\vec{x}}{\vec{b}}}\not\models\fd{\emptyset}{\{z\}}$. It follows that any proof of $\FD{q}\models\fd{\emptyset}{\{z\}}$ must use either $N$ or $O_j$, for some $j\in [m]$. The latter option can be discarded since the non-primary-key positions of $O_j$ are occupied by orphan variables which must be distinct from $z$. For the former option, since $\keyvars{N}=\emptyset$ any (shortest) proof which uses $N$ must be of the form $(F_1,\dots,F_m)$, where 
 $z\in \atomvars{F_n}$, $F_1=N$, and $\keyvars{F_{i+1}}\subseteq \bigcup_{j=1}^i \atomvars{F_j}$ for $i\in [m-1]$. Because $\vec{x}$ lists $\atomvars{N}$, such a proof can be simulated in $\substitute{q_{0}}{\vec{x}}{\vec{b}}$ by $(\substitute{F_1}{\vec{x}}{\vec{b}}, \dots, \substitute{F_m}{\vec{x}}{\vec{b}})$, in which case $z\notin V_0$. This leads to a contradiction, showing that $\sigma $ satisfies Definition~\ref{def:block-int}~\eqref{it:no-constant} with respect to $(q,\fk)$.

Furthermore, as in Lemma \ref{lem:obedience-opo}, 
this implies that $\sigma$ satisfies Definition~\ref{def:block-int}~\eqref{it:bifour} with respect to $(q,\fk)$ if the same holds with respect to $(\substitute{q_{0}}{\vec{x}}{\vec{b}},\fk_0)$.
We omit further details for now.

Again, we obtain that $\sigma$ is block-interfering in $(q,\fk)$, which contradicts our assumption. We conclude by contradiction that $(\substitute{q_{0}}{\vec{x}}{\vec{b}},\fk_0)$ has no block-interference, if $(q,\fk)$ has no block-interference.

\textbf{Proof of the first item.}
Assume that the $N$-atom of $q$ is $N(\underline{\vec{c}},\vec{t})$.
Let $\db$ be a database instance that is input to $\certainty{q}{\fk}$.
If $\db$ contains no fact of the form $N(\underline{\vec{c}},\filler)$,
then $\db$ is obviously a ``no''-instance.
Otherwise, if every fact in the block $N(\underline{\vec{c}},*)$ is dangling with respect to $\outgoing{N}{\fk}$ (which can be tested in $\FO$), then $\db$ is also a ``no''-instance.
Indeed, in this case, there exists a repair that contains no fact of the form $N(\underline{\vec{c}},\filler)$.

Assume from here on that $\db$ contains an $N$-fact $N(\underline{\vec{c}},\vec{b})$ that is not dangling with respect to $\outgoing{N}{\fk}$.
It can be seen that every repair of $\db$ with respect to foreign keys in $\fk$ and primary keys will contain a fact from $\theblock{N(\underline{\vec{c}},\vec{b})}{\db}$.
Moreover, every fact of this block belongs to some repair.
It can now be seen that the following are equivalent:
\begin{itemize}
\item
$\db$ is a ``yes''-instance of $\certainty{q}{\fk}$; and
\item
for every fact  $N(\underline{\vec{c}},\vec{d})$ in $\db$, there is a valuation $\theta$ over $\sequencevars{\vec{x}}$ such that $\theta(\vec{t})=\vec{d}$ and $\db$ is a ``yes''-instance of \\
$\certainty{\substitute{q_{0}}{\vec{x}}{\theta(\vec{x})}}{\fk_{0}}$.
\end{itemize}
The latter test is in $\FO$.
Finally, since $q$ is self-join-free, we have that 
$$\certainty{\substitute{q_{0}}{\vec{x}}{\theta(\vec{x})}}{\fk_{0}}
\reducesTo{\FO}\certainty{\substitute{q_{0}}{\vec{x}}{\vec{b}}}{\fk_{0}}.$$
We argue that we can always reduce to a problem in which $b$ is the only constant used in a query.
In the following, a \emph{term} is a variable or a constant.
For every atom $R(s_{1},\dots,s_{n})$ in $\substitute{q_{0}}{\vec{x}}{\theta(\vec{x})}$, 
we replace every atom $R(a_{1},\dots,a_{n})$ in $\db$,
with $R(f(a_{1},s_{1}),\dots,f(a_{n},s_{n}))$,
where $f$ maps each pair of a constant and a term to a constant,
such that $f(c_{1},t_{1})=f(c_{2},t_{2})$ if and only if $c_{1}=c_{2}$ and $t_{1}=t_{2}$ (i.e., $f$ is injective) and for every constant $c$, we define $f(c,c)\defeq b$. 
The latter reduction is correct for conjunctive queries that are self-join-free.
For example, if we replace $R(\underline{x},c)$ with $R(\underline{x},b)$ in a query, then we replace every fact $R(\underline{s},a)$ with $R(\underline{s},f(a,c))$, where $f(a,c)$ can be seen as a fresh constant, and we replace $R(\underline{s},c)$ with $R(\underline{s},b)$.
This concludes the proof of Lemma~\ref{lem:constantkey}.
\end{proof}

\subsection{Proof of Lemma~\ref{lem:inFO}}

We can now give a proof of Lemma~\ref{lem:inFO}.

\begin{proof}[Proof of Lemma~\ref{lem:inFO}]
Assume that the attack graph of $q$ is acyclic and $(q,\fk)$ has no block-interference.
We first repeatedly apply the reduction of Lemma~\ref{lem:nonProper} to remove all weak foreign keys.
Then we apply the reductions of Lemmas~\ref{lem:opo} and~\ref{lem:dpd} to remove strong foreign keys of a type in $\{\opo, \dpd\}$.
Whenever the resulting query contains an atom $F$ such that $\keyvars{F}=\emptyset$, we apply Lemma~\ref{lem:constantkey}.
Whenever every atom in the resulting query has a variable at some primary-key position, we apply Lemma~\ref{lem:dpo}.
Eventually, we have reduced to some problem $\certainty{q''}{\fk''}$ with $\fk''=\emptyset$, such that the attack graph of $q''$ is acyclic.
The latter problem is known to be in $\FO$.
The desired result holds by induction on the number of reductions, since for every intermediate problem $\certainty{q'}{\fk'}$, it holds that the attack graph of $q'$ is acyclic and $(q',\fk')$ has no block-interference.
\end{proof} 

\end{document}